\definecolor{COVgreen}{cmyk}{0.75,0,0.6,0}
\newcommand{\detcc}{\ensuremath{\left|\Delta E_\mathrm{TCCSD}^\mathrm{noise}\right|}}
\newcommand{\decc}{\ensuremath{\left|\Delta E_\mathrm{ec-CC}^\mathrm{noise}\right|}}
\newcommand{\ccwfn}{\ensuremath{\ket{\Psi_\mathrm{CC}}}}
\newcommand{\hfdet}{\ensuremath{\ket{\Phi_0}}}
\newcommand{\trial}{\ensuremath{\ket{\Psi^T}}}
\newcommand{\cre}[1]{\ensuremath{\hat{a}^\dagger_{#1}}}
\newcommand{\ani}[1]{\ensuremath{\hat{a}_{#1}}}
\newcommand{\eri}[2]{\ensuremath{\langle #1 || #2 \rangle}}
\newcommand{\simH}{\ensuremath{\bar{H}}}
\newsavebox{\@brx}
\newcommand{\llangle}[1][]{\savebox{\@brx}{\(\m@th{#1\langle}\)}%
  \mathopen{\copy\@brx\kern-0.5\wd\@brx\usebox{\@brx}}}
\newcommand{\rrangle}[1][]{\savebox{\@brx}{\(\m@th{#1\rangle}\)}%
  \mathclose{\copy\@brx\kern-0.5\wd\@brx\usebox{\@brx}}}
\newtheorem{theorem}{Theorem}[section]
\theoremstyle{definition}
\theoremstyle{remark}
\definecolor{codegreen}{rgb}{0,0.6,0}
\definecolor{codegray}{rgb}{0.5,0.5,0.5}
\definecolor{codepurple}{rgb}{0.58,0,0.82}
\definecolor{backcolour}{rgb}{0.95,0.95,0.92}
\lstdefinestyle{mystyle}{
  backgroundcolor=\color{backcolour},   commentstyle=\color{codegreen},
  keywordstyle=\color{magenta},
  numberstyle=\tiny\color{codegray},
  stringstyle=\color{codepurple},
  basicstyle=\ttfamily\scriptsize,
  breakatwhitespace=false,         
  breaklines=true,                 
  captionpos=b,                    
  keepspaces=true,                 
  showspaces=false,                
  showstringspaces=false,
  showtabs=false,
  xleftmargin=0.02\textwidth,
  rulecolor=\color[RGB]{200,200,200},
  frame=bt,
  framextopmargin=2pt,
  framexbottommargin=2pt,
  framexleftmargin=10pt,
  tabsize=2
}
\long\def\/*#1*/{}
\newcommand{\google}{\affiliation{%
Google Quantum AI, Venice, CA, United States}}
\newcommand{\covestro}{\affiliation{%
Covestro Deutschland AG, 51373 Leverkusen, Germany}}
\begin{document}
\title{Tailored and Externally Corrected Coupled Cluster with Quantum Inputs}
\date{\today}

\author{Maximilian Scheurer}
\email[Corresponding Author: ]{maximilian.scheurer@covestro.com}
\covestro

\author{Gian-Luca R.~Anselmetti}
\thanks{Current Address: Quantum Lab, Boehringer Ingelheim, 55218 Ingelheim am Rhein, Germany}
\covestro

\author{Oumarou Oumarou}
\covestro

\author{Christian Gogolin}
\email{christian.gogolin@covestro.com}
\covestro

\author{Nicholas C.~Rubin}
\email[Corresponding Author: ]{nickrubin@google.com}
\google

\begin{abstract}
We propose to use wavefunction overlaps obtained from a quantum computer as inputs for the classical split-amplitude techniques, tailored and externally corrected coupled cluster, to achieve balanced treatment of static and dynamic correlation effects in molecular electronic structure simulations.
By combining insights from statistical properties of matchgate shadows, which are used to measure quantum trial state overlaps, with classical
correlation diagnostics, we are able to provide quantum resource estimates well into the classically no longer exactly solvable regime.
We find that rather imperfect wavefunctions and remarkably low shot counts are sufficient to cure qualitative failures of plain coupled cluster singles doubles
and to obtain chemically precise dynamic correlation energy corrections. 
We provide insights into which wavefunction preparation schemes have a chance of yielding quantum advantage, and we test our proposed method using overlaps measured on Google's Sycamore device.
\end{abstract}

\maketitle

\section{Introduction}
Recent work devising how to use a quantum computer to model electronic structure continues to suggest fermionic simulation as a valuable and viable application of near-term and fault-tolerant quantum devices. Just as in the classical modeling of electronic structure where various strategies are explored for treating different aspects of electron correlation, quantum algorithms have largely followed a similar divide-and-conquer trajectory. While treatment of strong correlation is cited as the most likely motivation for applying quantum computers to chemistry,\cite{mcardle2020quantum} an accurate treatment of dynamic electronic correlation effects is an important aspect when considering total quantum resource counts and the viability of applying a quantum computing protocol to solve electronic structure problems.

In this work, we further integrate classical electronic structure methodologies and quantum computation to lower measurement requirements needed for dynamical correlation corrections. Our focus is on split-amplitude coupled cluster (CC) methodologies that have been widely studied in the quantum chemistry community.\cite{kinoshita2005coupled,paldus2017externally}
Specifically, we analyze the robustness of quantum inputs to these theories along with motivating wavefunction characteristics that a quantum computer must satisfy to improve over classical approximate external amplitude inputs. While it is generally possible to layer quantum state preparation with many dynamic correlation corrections, split-amplitude methods provide the necessary framework in which we can analyze the sampling costs and robustness beyond common, and usually loose, tomographic sampling bounds.

To date, there exist several approaches to include dynamic correlation effects in quantum simulations of chemistry, e.g.,
virtual quantum subspace expansion (VQSE),\cite{takeshita2020increasing} second-order perturbation theory using the variational quantum eigensolver (VQE)\cite{peruzzo2014variational,mcclean2016theory,mcardle2020quantum}
together with quantum subspace expansion (QSE), named NEVPT2(VQE,QSE),\cite{tammaro2023n} non-orthogonal configuration interaction approaches,\cite{baek2023say} and NEVPT2 based on qubit reduced density matrices (RDMs), named QRDM-NEVPT2.\cite{krompiec2022strongly}
We note that a plethora of other methods for studies of chemical systems exists, e.g., based on embedding techniques.\cite{kawashima2021optimizing,rossmannek2021quantum,li2022toward,liu2023bootstrap,he2020zeroth,he2022second,evangelista2018perspective,PRXQuantum.4.020313}
The QRDM-NEVPT2 method, in addition to VQE state preparation and optimization, requires quantum evaluation of the three-particle reduced density matrix (3RDM)
and four-particle RDM-like terms, exploiting a cumulant approximation. For quantum active space (AS) methods, it is already quite resource-intense with respect to
required number of measurement repetitions, so-called shots, to accurately determine the 2RDM for energy evaluation. As a consequence, a multitude of approaches
for efficient measurements of the 2RDM has evolved so far, reducing the number of distinct measurement bases and variances significantly.\cite{huggins2021efficient,cohn2021cdf,choi2023fluid,oumarou2022accelerating, zhao_fermionic_2021, low_classical_2022,PhysRevX.10.031064, wan2022matchgate}
Thus, the burden of even more measurements required for higher-order reduced density matrices in, e.g., QRDM-NEVPT2 poses the question whether one can construct
a hybrid quantum/classical approach which does not rely on RDMs.  Recent work leveraging classical shadows~\cite{huang2020predicting} combined with auxiliary-field Quantum Monte Carlo (AFQMC)~\cite{huggins_unbiasing_2022} suggests an efficient path forward by shifting focus to wavefunction corrections in lieu of the RDMs.
\begin{figure}[ht]
    \centering
    \includegraphics[width=0.4\textwidth]{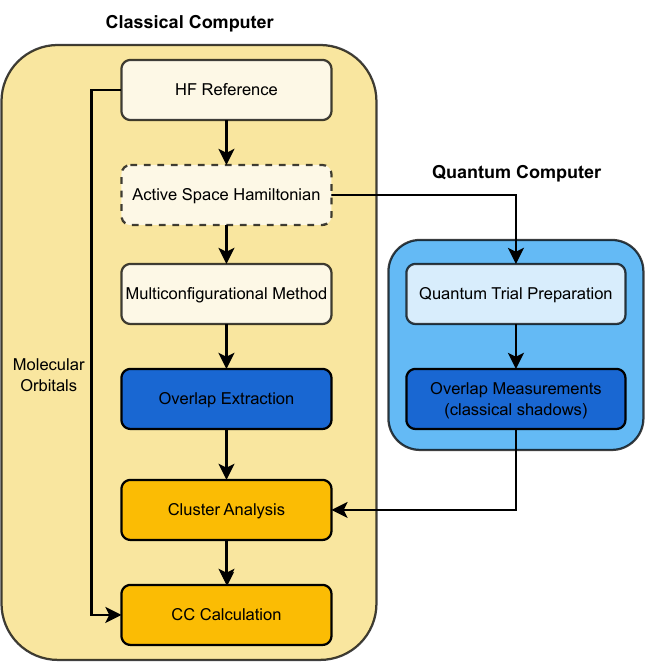}
    \caption{\label{fig:method_flow} A visual representation of the quantum split-amplitude protocol analyzed in this work.}
\end{figure}

To this end, we propose to use quantum trial states as input for tailored coupled cluster (TCC) and externally corrected coupled cluster (ec-CC) methods.
On the quantum device level, this requires preparing a suitable state and then measuring the overlaps of that state with
(excited) Slater determinants, corresponding to computational basis state overlaps.
The steps required to obtain quantum inputs and run the subsequent classical CC calculation is depicted in Figure~\ref{fig:method_flow},
comparing the workflow with its purely classical counterpart. Only two building blocks differ between the quantum and classical version.
To extract computational basis overlaps we propose to leverage various forms of classical shadows,\cite{huang2020predicting} a measurement technique allowing for efficient estimation of multiple expectation values, which have recently been extended to use the matchgate group for efficient overlap estimation.~\cite{wan2022matchgate, low_classical_2022}
These protocols fulfill the requirement of providing overlap \emph{magnitudes} and \emph{signs}, which are needed for subsequent cluster analysis.
In contrast to classical TCC and ec-CC methodologies developed to date, quantum TCC and quantum ec-CC require analysis of
i) when improvements over classical approximate methods are possible and ii) sensitivity of these methods to device and shot noise. We take steps towards addressing both questions in this work.
We note that the proposed methodologies are not limited to any particular quantum computing setting, since the CC methods are completely agnostic of the
origin of the cluster amplitudes. Hence, they are applicable to near-term and fault-tolerant setups alike, provided a protocol for evaluating the overlaps exists.
While the total number of computational basis state overlaps scales exponentially in the number of qubits $n$, TCC with singles and doubles (TCCSD) and ec-CC only require a polynomial number
of overlaps as input for cluster analysis, i.e., less than $n^4$ for TCCSD and less than $n^8$ for ec-CC.

A substantial component of this work involves constructing a noise model for matchgates shadows and mapping a model for measurement overheads given common electronic structure metrics that are cheaply available. We implement the matchgate shadows described in Ref.~\citenum{wan2022matchgate} and numerically verify that the measurements of computational basis state overlaps follow a Gaussian distribution and are independent and identically distributed.
To determine a shot budget for TCCSD using the Gaussian noise model of matchgate shadows, we fit the total shot requirements to a particular error to a power law distribution with constants determined by well-known $\mathcal{T}_{1}$ or $\mathcal{D}_{1}$ diagnostics.\cite{lee1989diagnostic,janssen1998new} We analyze the robustness ec-CC with experimental Clifford classical shadows to emphasize the regime of utility and demonstrate the methods robustness to real experimental noise on using data taken from the recent QC-QMC experiment by Google and collaborators.~\cite{huggins_unbiasing_2022}

We continue with a methodological account and an overview of split-amplitude methods, but expert readers can directly skip to Section~\ref{sec:state_prep},
where we discuss the impact of possible quantum trial states and their quality requirements for split-amplitude quantum CC methods.
Furthermore, we present numerical results of quantum TCCSD by simulating of the \ce{N2} dissociation curve in a VQE-based NISQ setting, further highlighting state preparation robustness of TCCSD.
We then describe overlap measurement strategies in Section~\ref{sec:overlap_measurements}, which are the key ingredient for providing quantum inputs.
Based on numerical studies of the statistical properties of matchgate shadows, we develop the noise model to mimic finite shot noise, and we employ
this noise model to obtain quantum resource estimates for beyond classically tractable molecular systems. For an \ce{N2} dissociation curve of 21 points
within a double-zeta basis set, we find that a total of 30 million shots will be enough for chemically precise results.
Finally, in Section~\ref{sec:eccc_results_main}, we study the performance of quantum split-amplitude CC energetics on \ce{H4} and diamond,
using classical shadows data from an experiment performed on Google's Sycamore device, demonstrating that ec-CC can in principle compete with QC-QMC.

\section{Background on Electronic Structure Methods}
The field of quantum chemistry has seen remarkable progress over the past decades, with \textit{ab-initio} coupled cluster (CC)
methods emerging as a reliable and systematically improvable framework for electronic structure simulations of molecules.\cite{ciczek1966correlation,crawford2006introduction,shavitt2009many}
The most successful method of the CC family is probably coupled cluster with singles and doubles (CCSD)\cite{purvis1982full} in conjunction with
a perturbative triples correction, CCSD(T),\cite{raghavachari1989fifth} commonly referred to as the ``gold standard'' for quantum chemistry.\cite{bartlett2005how,guo2018communication}
Therefore, CCSD and CCSD(T) are possibly the most used wavefunction methods to investigate properties of small and medium-sized molecules,
limited in applicability by the steep scaling.
Recent endeavours, however, have made it possible to push the boundaries with respect to system size even further.\cite{guo2018communication}
Due to favorable properties, e.g., size consistency in contrast to truncated configuration interaction (CI) ansätze,\cite{sherrill1999configuration}
these single-reference CC (SRCC) methods usually perform well in describing molecular systems where dynamic
correlation effects are dominant.\cite{liakos2020comprehensive} On the contrary, SRCC approaches
can struggle in multi-reference (MR) scenarios, such as describing the potential energy surface
of bond breaking, and lead to catastrophic failures in several cases.\cite{kinoshita2005coupled,morchen2020tailored}
The MR failures can in principle be remedied through an even more computationally demanding framework, that is,
multi-reference CC (MRCC).\cite{oliphant1993multireference,evangelista2018perspective}
In case a system at hand is almost purely statically correlated, i.e., where the solution to the electronic Schrödinger equation is well described
by a couple of Slater determinants, accurate results can be obtained through so-called multiconfigurational
active space (AS) methods, such as complete active space configuration interaction (CASCI).\cite{knowles1984new,olsen1988determinant,knowles1989determinant,zarrabian1989vectorizable,bendazzoli1993vector}
The active space consists of a subset of molecular orbitals (MOs) and electrons in which the full configuration interaction (FCI)
problem is then solved. This approach covers static correlation effects well, but neglects major part of the dynamic correlation effects outside
of the chosen active space. Hence, the division into an active and external orbital space introduces approximations that, while qualitative phenomena
might be described correctly, can become unacceptable when chemical precision is required.\cite{stein2016delicate,takeshita2020increasing}
With techniques such as selected CI methods\cite{harrison1991approximating,holmes2016heat,sharma2017semistochastic,schriber2016communication,tubman2020modern} and density matrix renormalization group (DMRG),\cite{zhai2023block2} it becomes possible to simulate quite large active spaces,
or even aim toward FCI quality.
If the active space size cannot be increased enough, the dynamic correlation contributions can be approximately included through perturbation theory on the CAS method, yielding, e.g., the well-known CASPT2 and NEVPT2 methods.\cite{angeli2001introduction,angeli2001nelectron,angeli2002nelectron,pulay2011perspective,battaglia2023multiconfigurational}

Fortunately, the well-known framework of SRCC offers the possibility to capture static correlation properties from an AS-type wavefunction through
a so-called split-amplitude ansatz. The most prominent methods of this family are tailored coupled cluster (TCC)\cite{kinoshita2005coupled} and externally corrected coupled cluster (ec-CC),\cite{paldus1984approximate,paldus1994valence,planelles1994valence,planelles1994valence3,li1997reduced,paldus2017externally}
however, approaching the encoding of static correlation effects from different directions.
The high-level idea of both methods is to extract information about the static correlation from the AS wavefunction and inject them through specific partitioning of the cluster operator
and the corresponding amplitudes in a SRCC wavefunction. Originally, these methods were developed to remedy failures of SRCC approaches in the strongly correlated regime, as explained previously.
TCC and ec-CC are schematically illustrated in Figure~\ref{fig:illustrations}, and more theoretical details are outlined in Appendix~\ref{apx:tcc_eccc_theory}.
\begin{figure}[t]
    \centering
    \includegraphics[width=0.7\textwidth]{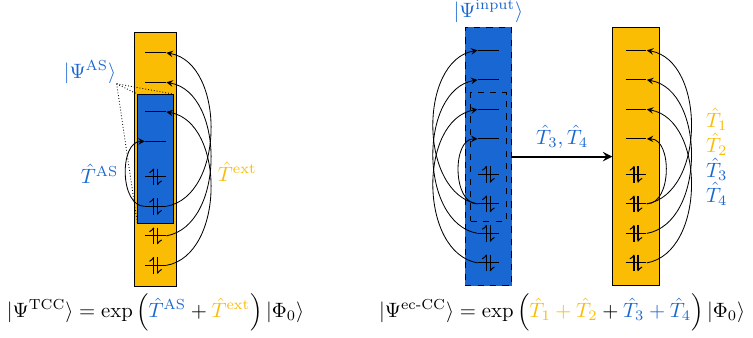}
    \caption{Schematic illustration of TCC (left) and ec-CC (right).
    In TCC, a subset of the cluster amplitudes are obtained from an active space wavefunction $\ket{\Psi}^\mathrm{AS}$ through cluster analysis.
    The external set of amplitudes is then solved in presence of frozen AS amplitudes, which encode the information on static correlation.
    The ec-CC ansatz uses T3 and T4 amplitudes from an input wavefunction (which can stem either from an active space or from the full space)
    and solves the full CCSDTQ singles and doubles equations in presence of the external T3 and T4 amplitudes.
    } \label{fig:illustrations}
\end{figure}
In the TCC ansatz, the cluster operator is split into an active space part and an external part,
\begin{align}
    \hat T^\mathrm{TCC} = \hat T^{\mathrm{AS}} + \hat T^{\mathrm{ext}}.
\end{align}
The cluster amplitudes, accompanying the active space cluster operator, are obtained from the AS-type multiconfigurational wavefunction through cluster analysis.
Cluster analysis relies on the equivalence of the exponential CC ansatz and the linear CI expansion using intermediate normalization, making it possible to recursively convert the amplitudes
from a CI-like wavefunction to their corresponding CC counterpart.\cite{monkhorst1977calculation,kinoshita2005coupled,lehtola2017cluster}
For CC with singles and doubles, this yields the so-called TCCSD approach,
where the strategy is to extract T1 and T2 amplitudes from the AS-type wavefunction. The active space T1 and T2 amplitudes are then
frozen during the CC iterations on the external amplitudes, assuming the active space amplitudes retain the MR information.
The original work by Bartlett and co-workers showed that, despite its simplicity, TCCSD yields dissociation energies and potential energy surfaces which
are in good agreement with higher-level MR methods.\cite{kinoshita2005coupled}
Note that the only inputs required for TCC are overlaps of (excited) Slater determinants with the AS wavefunction, which can be formulated as the
following projection,
\begin{align}
    \mathbf{c}_\nu = \braket{\Phi_\nu | \Psi^\mathrm{AS}}. \label{eq:overlap_determinants}
\end{align}
In the equation above, $\mathbf{c}_\nu$ is the CI coefficient vector of excitation level $\nu$, $\ket{\Phi_\nu}$ is a $\nu$-fold excited Slater determinant with respect to the
Fermi vacuum/reference determinant $\hfdet$, and $\ket{\Psi^\mathrm{AS}}$ is an AS-type wavefunction from which the coefficients are to be extracted.
For TCCSD, only the C1 ($\nu = 1$) and C2 ($\nu = 2$) coefficients need to be extracted, i.e.,
\begin{align} 
    c_i^a &= \braket{\Phi_i^a | \Psi^\mathrm{AS}}, \label{eq:C1}\\
    c_{ij}^{ab} &= \braket{\Phi_{ij}^{ab} | \Psi^\mathrm{AS}}, \label{eq:C2}
\end{align}
in addition to the overlap with the reference determinant, $c_0$, if $\ket{\Psi^\mathrm{AS}}$ is not intermediate-normalized.
The indices $i,j$ refer to occupied orbitals in \hfdet, whereas $a,b$ are so-called virtual orbital indices, i.e., unoccupied in \hfdet.
Subsequently, the C amplitudes are recursively converted to T amplitudes, mapped from the active orbitals to the full MO space, and kept fixed during solution
of the CCSD projection equations. It is quite appealing that only minor modifications to a standard CCSD code need to be made to support tailoring.
Different types of input wavefunctions have been successfully used for TCCSD, e.g., CASCI,\cite{kinoshita2005coupled,hino2006tailored} DMRG for large active spaces,\cite{veis2016coupled} and pair coupled cluster doubles (pCCD).\cite{leszczyk2022assessing}
A recent extension to excited state calculations has been presented by Bartlett and co-workers.\cite{ravi2023excited}
There exists the tailored counterpart to CCSD(T), that is, TCCSD(T),\cite{lyakh2011tailored} which computes the perturbative
triples correction solely using the external contribution of the cluster amplitudes.
Through combination with the LPNO and DLPNO framework, TCC has been successfully employed to study the electronic structure of large
molecular complexes.\cite{antalik2019toward,antalik2020ground,lang2020near} Furthermore, the underlying numerical and theoretical
aspects have been thoroughly analyzed from a mathematical point of view.\cite{faulstich2019analysis,faulstich2019numerical}
Even though the tailoring of T1 and T2 amplitudes captures the static correlation effects to some extent, the final TCCSD wavefunction is still
a SR wavefunction, which is known to fail in some cases.\cite{melnichuk2012relaxed,morchen2020tailored,demel2023hilbert}
To some level, these shortcomings can be circumvented by increasing the active space size and the choice of orbitals.\cite{bartlett2005how,morchen2020tailored}

The second split-amplitude approach, ec-CC, addresses the inclusion of multi-reference phenomena in a SRCC wavefunction from a different direction.\cite{paldus2017externally}
Viewed as a variation of tailored coupled cluster but on the full space~\cite{deustua2018communication, PhysRevLett.119.223003} we use the fact that the singles and doubles residual equations can only involve singles amplitudes through quadruples amplitudes
\begin{align}
\langle \Phi_{i}^{a}|\left[\hat H_{N}\left(1 + \hat T_{1} + \hat T_{2} + \frac12 \hat T_{1}^{2} + \hat T_{3} + \hat T_{1} \hat T_{2} + \frac16 \hat T_{1}^{3}\right)\right]_{c}|\Phi_0\rangle &= 0, \\
\langle \Phi_{ij}^{ab}|\left[\hat H_{N}\left(1 + \hat T_{1} + \hat T_{2} + \frac12 \hat T_{1}^{2} + \hat T_{3} + \hat T_{1}\hat T_{2} + \frac16 \hat T_{1}^{3} + \hat T_{4} + \hat T_{1} \hat T_{3} + \frac12 \hat T_{2}^{2} + \frac12 \hat T_{1}^{2} \hat T_{2} + \frac{1}{24} \hat T_{1}^{4}\right)\right]_{c}|\Phi_0\rangle &= 0 ,
\end{align}
which is true for any rank coupled cluster operator, above four, because the normal-ordered Hamiltonian $\hat H_N$ has at most two-body interactions.
If the triples and quadruples amplitudes in $\hat T_3$ and $\hat T_4$ are exactly determined with, e.g., FCI, then the \emph{exact} correlation energy can be recovered.
This is to be expected since the exact amplitudes in $\hat T_3$ and $\hat T_4$ depend on CI coefficients of rank $1,2,3$, and $4$.
This can, however, provide an alternative route to better than CCSD calculations if approximate triples and quadruples are obtained from an AS-type method.
While there are obvious instance where ec-CC provides no value (e.g., wavefunctions where cluster operators of rank 2 to 4 are used and their active space versions -- CCSDt, and CCSDtq), there are cases where improvements are observed.\cite{magoulas2021is} The key components of when an ec-CC treatement can provide improvement are described in Ref.~\citenum{magoulas2021is} and will be summarized in the context of quantum wavefunctions in Section~\ref{sec:state_prep}.
The main idea is that if one uses frozen triples and quadruples amplitudes obtained in some way and solves for the singles and doubles in their presence, one can determine a corrected form of the quadruples CC involving MR character.
In practice, a non-CC wavefunction theory in an active space is used to determine the dominant triple and quadruple excitations, and the cluster operator takes on the form
\begin{align}
\hat T = \hat T_{1} + \hat T_{2} + \hat P_{3} \hat T_{3} + \hat P_{4} \hat T_{4},
\end{align}
where $\hat P_{3,4}$ project out a subset of triples and quadruples excitation that are dictated from a correlated multi-reference calculations such as DMRG, heat-bath CI,\cite{lee2021externally} or adaptive CI.\cite{aroeira2020coupled}
Alternatively, one can use an uncoverged FCIQMC calculation to recover approximate triples
and quadruples amplitudes to then solve for for the $\hat T_1$ and $\hat T_2$ parts.\cite{magoulas2021is}
It was demonstrated that roughly converged FCIQMC calculations can result in quite accurate CCSDTQ energies.
The role of the FCIQMC calculation is to determine dominant triple and quadruple amplitudes.
A caveat is that the classical computational effort of ec-CC scales like $N^8$ (see Appendix~\ref{apx:ec_cc_scaling}), which, contrary to TCCSD with a $N^6$ scaling, limits the applicability of ec-CC to systems of modest size.
TCC and ec-CC require the same kind of inputs, namely the overlaps of Slater determinants with the AS-type wavefunction which is then put into a cluster
analysis protocol. In the context of quantum computation, this has the appealing advantage that no higher-order RDMs are required to enable
a split-amplitude CC calculation based on a quantum trial state.

\section{Impact of Quantum Trial States Quality}\label{sec:state_prep}
In this section we first investigate the stability of TCCSD against overlaps derived from imperfect wavefunctions, such as those states prepared by a shallow VQE circuit. 
We find that even wavefunctions whose AS energy differs significantly from CASCI do yield a TCCSD dynamic correlation energy correction that is in good agreement with both TCCSD and NEVPT2 based on the exact CASCI wavefunction.
What is more surprising, even a rather inaccurate wavefunction as input to TCCSD turns out to be sufficient to cure the appearance of a qualitatively incorrect reaction barrier in plain CCSD.
We are further able to provide insights into which wavefunctions have a chance of leading to an improved energy under ec-CC on the full orbital space. 

\subsection{Quantum TCCSD with Approximate VQE Wavefunctions} \label{sec:vqe_exales}
We model the dissociation of the \ce{N2} triple bond, a textbook example in which the dynamic and static correlation regime need
to be converged well for quantitatively reliable results.\cite{kinoshita2005coupled} In the first TCCSD work,\cite{kinoshita2005coupled} it has been shown
that TCCSD tailored by CASCI with 6 electrons in 6 orbitals, CAS(6,6), gives a qualitatively and quantitatively correct dissociation curve, in contrast to plain CCSD.
Our goal here is to assess whether TCCSD tailored by one of the most prominent quantum methods for simulating electronic structure, VQE, behaves similarly. Since the underlying
VQE circuits which would be required to prepare the exact CASCI state are not affordable, we want to analyze how inaccuracies in the wavefunction ansatz translate to TCCSD, or whether
this even leads to a breakdown of the beneficial properties of TCCSD after all.
For this purpose, we set up a quantum-number-preserving (QNP)\cite{anselmetti2021local} circuit of 10 layers (50 parameters), which was optimized for each geometry along the
dissociation curve. The reference state was obtained with restricted HF/cc-pVDZ.\cite{dunning1989a} For the ``stress test'' with respect to rather shallow VQE circuits,
we then ran TCCSD based on the overlaps obtained analytically from the final VQE state (through read-out of the coefficients of the VQE state vector), referred to as VQE-TCCSD in the following.
The dissociation curves for CCSD, CASCI, NEVPT2, TCCSD, VQE, and VQE-TCCSD are shown in Figure~\ref{fig:n2_dissociation} in the left panel.
\begin{figure}[ht]
    \centering
    \includegraphics[width=1\textwidth]{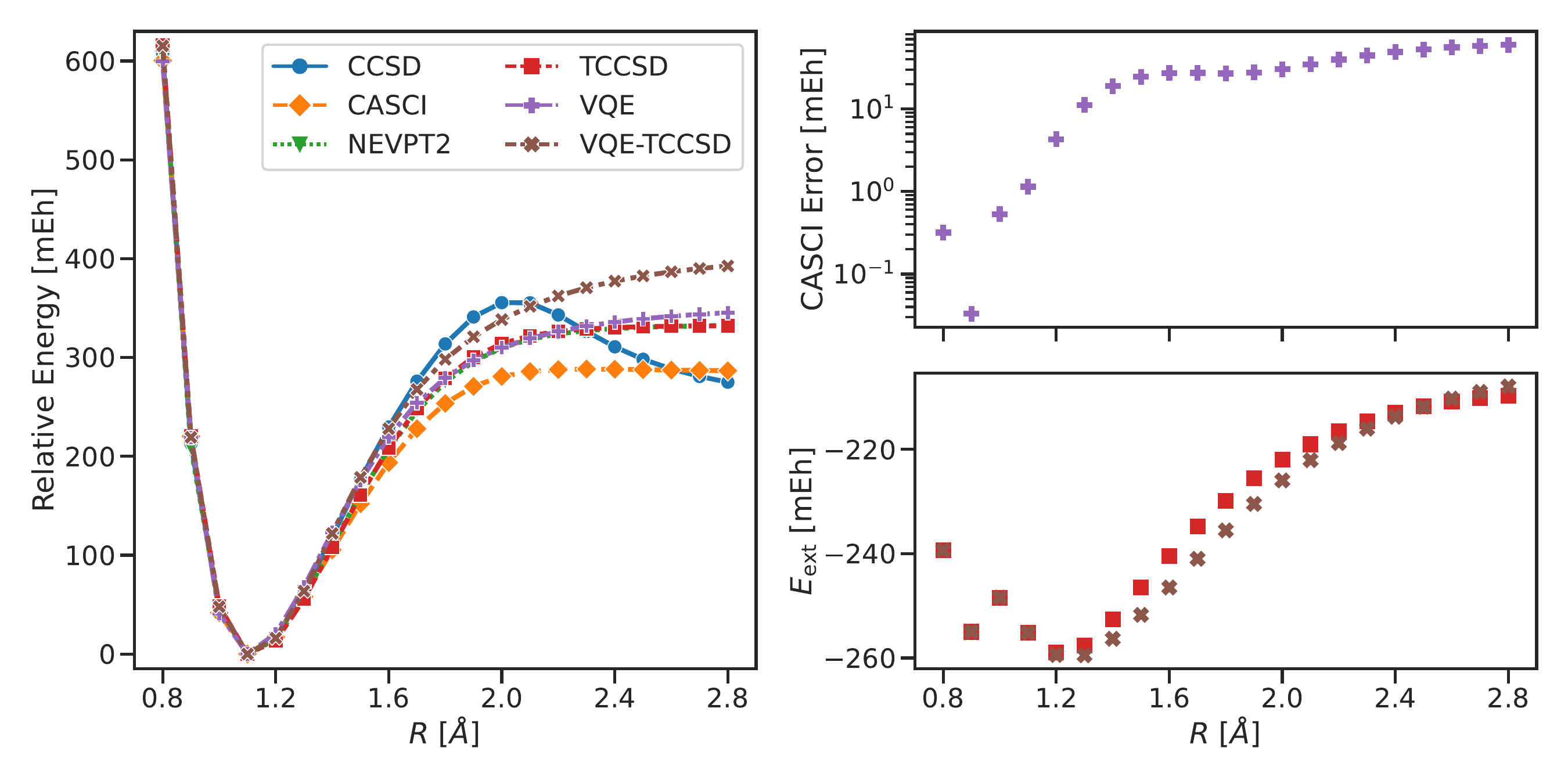}
    \caption{
    Dissociation curve of \ce{N2} using classical and quantum methods (left panel), CASCI energy error of the
    employed VQE ansatz (upper right panel), and external correlation energy $E_\mathrm{ext}$ of TCCSD methods (lower right panel). For the dissociation curve, energies are relative to the minimum energy the inter-nuclear distance $R = 1.1$~\AA. Note that the relative energies of TCCSD and NEVPT2 almost lie on top of each other.
    }
    \label{fig:n2_dissociation}
\end{figure}
As previously observed, the equilibrium region is well described by CCSD, but this method yields a qualitatively incorrect virtual reaction barrier upon triple bond breaking,
which is fully recovered by TCCSD.\cite{kinoshita2005coupled} Plus, TCCSD provides an accurate numerical value for the dissociation energy, whereas the CASCI curve results in a too
small dissociation energy. This shows how important the inclusion of dynamic correlation effect becomes for accurate simulations. Comparing CASCI to TCCSD, the barrier is shifted
from $286.8$ milli-Hartree (mEh) by $45.4$ mEh to $332.0$ mEh, which is quite a significant change.
For comparison, we ran NEVPT2 on CASCI for this system, which yields almost exactly the same dissociation curve as TCCSD.
Now, due to the shallow VQE circuit used in our simulations, the predicted reaction barrier is artificially too high and VQE-TCCSD cannot repair the incorrect energy contribution from the active space.
If the VQE circuit were expressive enough,
one would obtain a dissociation curve identical to CASCI. In the upper right panel in Figure~\ref{fig:n2_dissociation}, the error of the VQE energy with respect to CASCI is shown.
For bond distances $R > 1.2$~\AA, the energy error is larger than 10 mEh and steadily increases toward the dissociation limit. Around the equilibrium bond distance, the system
possesses only weak static correlation effects, such that the shallow VQE circuit is of course much more accurate in that region.
Looking at the VQE-TCCSD dissociation curve, no artificial reaction barrier is present, i.e., despite the poor quality of the VQE ansatz, the quality of the underlying wavefunction seems
good enough to heal the physically wrong CCSD behavior. Interestingly, the VQE dissociation energy ($345.3$ mEh) is shifted in VQE-TCCSD ($392.6$ mEh) by $47.3$ mEh, which is almost
identical to the energy shift from CASCI to TCCSD, i.e., without imperfections in the wavefunction.
This parallel can be narrowed down to the external energy contribution in TCCSD, $E_\mathrm{ext}$, depicted for TCCSD and VQE-TCCSD in the lower right panel in Figure~\ref{fig:n2_dissociation}.
The contribution from the external TCCSD part is almost identical, showing that the dynamic energy contribution in TCCSD is very robust against a poor CI-like wavefunction produced from the shallow VQE circuit, and
the error in energy shift from the CI-like method to TCCSD is approximately one order of magnitude smaller than the plain VQE error.
Note that the quantum inputs for VQE-TCCSD amount to only measuring less than $n^4$ overlap values, whereas a possible PT approach would require
higher-order RDMs.

\subsection{Wavefunctions Suitable for Externally Corrected CC}
In this section we discuss the characteristics of a quantum trial and the potential benefit over classically accessible wavefunctions as input to ec-CC. Within the NISQ setting there is no shortage of methods for preparing approximate ground states.\cite{peruzzo2014variational, mcclean2016theory, gard2020efficient,anselmetti2021local, grimsley2019adaptive, PRXQuantum.2.020310, PRXQuantum.2.030301, lee2018generalized,o2019generalized, matsuzawa2020jastrow, motta2020determining, kim2017robust, sewell2021preparing, PhysRevA.81.050303, miao2021quantum} If we consider fault-tolerance, an even wider set of methods is available.\cite{lin2020near, malz2023preparation, ge2019faster, PRXQuantum.2.020321, he2022quantum, kyriienko2020quantum, fomichev2023initial} 

Here we emphasize recent work from Magoulas \textit{et al.}~\cite{magoulas2021is} which provides a framework for analyzing types of CI expansions, i.e., the source of the external cluster amplitudes,
that potentially yield an improved ec-CC energy.
That work servers as a blueprint for the type of wavefunction that a quantum computer would need to prepare to potentially see an energy improvement through solving the ec-CC equations. 
At the core of their derivation is the following theorem:
\begin{theorem}
The solution to a truncated configuration interaction set of equations which includes full singles, full doubles, and any set of higher excitations satisfies the coupled cluster singles and doubles equations.
\end{theorem}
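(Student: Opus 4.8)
The plan is to exploit the hypothesis that the truncated CI space contains the \emph{full} set of single and double excitations, so that the projected Schr\"odinger (CI eigenvalue) equations hold for \emph{every} singly and doubly excited determinant, and then to convert these ``unlinked'' projections into the connected coupled-cluster residual equations displayed above via the similarity-transform identity. First I would impose intermediate normalization $\braket{\Phi_0|\Psi}=1$ and perform a level-by-level cluster analysis, writing $\ket{\Psi}=e^{\hat T}\ket{\Phi_0}$ with $\hat T=\hat T_1+\hat T_2+\hat T_3+\ldots$ determined recursively (so $\hat T_1=\hat C_1$, $\hat T_2=\hat C_2-\tfrac12\hat T_1^2$, and so on). By construction the Taylor expansion of $e^{\hat T}$ reproduces the CI coefficients $c_\mu$ at every excitation level that appears in the CI expansion.

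Next I would write the CI conditions for singles and doubles in normal-ordered form, $\langle \Phi_i^a|\hat H_N|\Psi\rangle = E_N\, c_i^a$ and $\langle \Phi_{ij}^{ab}|\hat H_N|\Psi\rangle = E_N\, c_{ij}^{ab}$, with $E_N=\langle \Phi_0|\hat H_N|\Psi\rangle$. The crucial observation is that, because $\hat H_N$ carries at most two-body interactions, the left-hand sides only probe components of $\ket{\Psi}$ up to excitation level three (singles projection) and four (doubles projection); since the cluster analysis guarantees that $e^{\hat T}\ket{\Phi_0}$ agrees with $\ket{\Psi}$ on exactly those low-lying determinants, I may replace $\ket{\Psi}$ by $e^{\hat T}\ket{\Phi_0}$ in both projections. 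Then I invoke the connected-cluster identity $e^{-\hat T}\hat H_N e^{\hat T}=[\hat H_N e^{\hat T}]_c$, i.e.\ $\hat H_N e^{\hat T}=e^{\hat T}[\hat H_N e^{\hat T}]_c$, and expand $[\hat H_N e^{\hat T}]_c\ket{\Phi_0}=\sum_\nu r_\nu\ket{\Phi_\nu}$, where $r_0=E_N$ and the excited $r_\nu$ are precisely the connected residuals appearing in the amplitude equations above.

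The final step is a bootstrapping argument. Projecting $\hat H_N e^{\hat T}\ket{\Phi_0}=\sum_\nu r_\nu\,e^{\hat T}\ket{\Phi_\nu}$ onto a single excitation and using that $e^{\hat T}$ can only \emph{raise} the excitation level (so $\langle \Phi_i^a|e^{\hat T}|\Phi_\nu\rangle$ vanishes unless $\nu$ is the reference or a single), I expect $\langle \Phi_i^a|\hat H_N e^{\hat T}|\Phi_0\rangle=E_N\,c_i^a+r_i^a$; comparison with the CI condition forces $r_i^a=0$. Repeating for a double excitation yields $E_N\,c_{ij}^{ab}+r_{ij}^{ab}$ together with a term proportional to the single residuals $r_k^c$, which I have just shown to vanish, so $r_{ij}^{ab}=0$ as well. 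This is exactly the assertion that the cluster-analyzed amplitudes satisfy the singles and doubles residual equations, with the $\hat T_3,\hat T_4$ pieces supplied by the same cluster analysis.

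The step I would treat most carefully, and which I expect to be the main obstacle, is the bookkeeping around the truncation: $e^{\hat T}\ket{\Phi_0}$ generates spurious excitations above the CI level (e.g.\ $\tfrac12\hat T_2^2$ produces quadruples even when the CI contains none), so the replacement $\ket{\Psi}\to e^{\hat T}\ket{\Phi_0}$ is legitimate only because $\hat H_N$ connects singles and doubles to determinants of rank at most four, and because the cluster analysis is defined to match $\ket{\Psi}$ precisely up to that rank --- including assigning the purely disconnected values to $\hat T_3,\hat T_4$ on any excitations the CI omits. The other place where the full-singles/full-doubles hypothesis is indispensable is the bootstrapping: eliminating the disconnected single-residual contribution to the doubles equation requires $r_k^c=0$ for \emph{every} $\Phi_k^c$, which in turn needs the single projection to hold for all singles, not merely a subset.
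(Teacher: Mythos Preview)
Your approach is correct and essentially the same as the paper's: both cluster-analyze the CI vector, invoke $\hat H_N e^{\hat T}=e^{\hat T}(\hat H_N e^{\hat T})_c$, and use that $e^{\hat T}$ only raises excitation rank to isolate the connected residuals (the paper organizes this via a resolution of the identity $P+P_{\tilde C_1,\tilde C_2}+P_{\tilde C^B}+Q=\mathbb{1}$ inserted between $e^{\hat T}$ and $(\hat H_N e^{\hat T})_c$ rather than your explicit bootstrapping, but the cancellations are identical). The truncation bookkeeping you flag as the main obstacle is sidestepped in the paper by defining $\hat T=\ln(1+\hat C_1+\hat C_2+\hat C^B)$ to all orders via the Mercator series so that $e^{\hat T}\hfdet=\ket{\Psi}$ holds exactly and no ``spurious excitations'' ever appear; since the connected singles and doubles residuals depend only on $\hat T_1,\ldots,\hat T_4$ regardless, this changes nothing of substance.
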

They prove this theorem algebraically and diagrammatically. We have reproduced the algebraic proof in Appendix~\ref{app:eccc_proof} for completeness. 
A key takeaway is that in order for the ec-CC equations to provide an improvement over CI expansions one must only include the connected components -- i.e., rank three and four cluster amplitudes with non-zero CI-ampltiudes. Furthermore, to see improvement over CI expansions via ec-CC requires a CI expansion that includes some but not all triple and quadruple excitations. This suggests that a quantum circuit exploring dominant many-body excitations for a large system can be a useful input to ec-CC. Supporting this interpretation are the classical studies using FCIQMC as an external source and other methods that sample high-energy Slater determinants.\cite{deustua2018communication} While the connection to truncated CI should not be considered too strongly as an analogy for quantum circuits, it does provide support for the use of particular quantum circuits that sample high-energy many-body excitations. This new perspective serves as a different design principle when constructing quantum circuit ansatze with the potential for beyond classical computation. The connection rules out quantum state ansatze where it is efficient to estimate amplitudes up to additive error such as a circuit built from a fixed bond dimension MPS.
No comparably generally statements can be made about when quantum inputs can be useful for TCC or ec-CC in an active space.

\section{Quantum Measurement of Overlaps} \label{sec:overlap_measurements}
In the quantum CC methods proposed here, the input amplitudes are determined from quantum state overlaps with the help of a quantum computer.
When a Jordan-Wigner mapping of fermions to qubits is used, the Slater determinant overlaps correspond to overlaps with computational basis states.
As the non-relativistic second-quantized molecular Hamiltonian can, without loss of generality, be chosen to be real when written in the computational basis, the eigenstates are real, so that if a suitable state preparation method is used, all overlaps with computational basis states should be real.
However, since the signs of the overlaps enter the cluster analysis, sampling the quantum trial state $\trial$ on $n$ qubits with $\zeta$ electrons in the computational basis is not sufficient. Various methods are known that can be used to estimate overlaps including signs.
A family of methods suitable for the task that has received a lot of attention recently are classical shadows\cite{huang2020predicting} and extensions of this technique geared specifically towards the fermionic setting.\cite{zhao_fermionic_2021, wan2022matchgate, low_classical_2022}
All these methods have in common that, given a state $\trial$, one draws unitaries $U$ from some ensemble of classically efficiently describable unitaries. A description of the drawn unitaries together with the results of computational basis state measurements in the state $U \ket{\Psi^T}$ is recorded as a so-called classical shadow.
From this classical shadow, one can, using purely classical computation, predict various properties of the state $\trial$.
Particularly relevant for our work is the protocol from Ref.~\citenum{wan2022matchgate} which is based on an ensemble of matchgate circuits (Haar measure over general fermionic Gaussian unitaries), which allows to estimate all overlaps with Slater determinants up to additive error $\epsilon$ from shadows consisting of $s \in \mathcal{O}(\sqrt{n} \log(n)/\epsilon^2)$ quantum measurements or shots.
The fully parallelizable classical computational effort per overlap scales as $\mathcal{O}(s \, (n-\zeta/2)^4)$ = $\mathcal{O}(\sqrt{n} \log(n) \, (n-\zeta/2)^4 /\epsilon^2)$ (improvements to a $(n-\zeta/2)^3$ scaling are possible, see Appendix D of Ref.~\citenum{wan2022matchgate}).
For this protocol, one needs to prepare a superposition of a reference state (e.g., the true vacuum $\ket{0}^{\otimes n}$) and $\trial$.
If the state preparation method preserves the fermion number, this can be achieved by applying the circuit that prepares $\trial$ from the Hartree-Fock state $\hfdet$ to a state that is a superposition of $|\Phi_0\rangle$ and $\ket{0}^{\otimes n}$. This state can be prepared in depth $\mathcal{O}(\log(\zeta))$ by using a single Hadamard gate and replacing the other Pauli $X$ gates needed to prepare $\hfdet$ by their controlled versions (CNOT).
Alternatively one can use the classical shadow protocol from Ref.~\citenum{low_classical_2022}, which randomizes only over number-preserving (passive) Gaussian unitaries $U$ and allows to compute all overlaps to error $\epsilon$ from shadows consisting of just $s \in \mathcal{O}(4\epsilon^{-2}/3)$ shots (which is independent of $n$ and $\zeta$) or the Clifford shadow protocol from Ref.~\citenum{huggins_unbiasing_2022}, which also has an $n$-independent sample complexity, however scaling with the logarithm of the number of overlaps.
In the first case, due to number preservation of the passive Gaussian unitaries, the superposition with the reference state must be prepared on an enlarged set of up to $3n/2$ qubits in the worst case of half-filling $\zeta = n/2$.
In the second case, the classical processing of the shadow data is efficient for computational basis state overlaps required by our method (just not for overlaps with general Slater determinants).\cite{huggins_unbiasing_2022}
In our numerical simulations, we focus on the matchgate shadow protocol from Ref.~\citenum{wan2022matchgate}, which has the worst scaling of the required number of shots among the three alternatives.
This means that the shot budgets reported here can asymptotically be thought of as upper bounds and probably be further improved upon for finite $n$.

\subsection{Statistical Properties of Matchgate Shadow Overlaps}
We implemented the matchgate shadow protocol in order to study the statistical properties of finite shot overlap measurements, summarized in detail in Appendix~\ref{apx:stat_props}.
From our numerical simulations, we obtained the following findings:
1.) The overlap estimates are approximately normal
distributed, 
2.) the covariance matrix of overlap measurements is diagonally dominant, and the covariances vanish asymptotically faster than the variances with increasing $s$, and scatter plots (see Fig.~\ref{fig:overlap_convariance_matrix} in Appendix~\ref{apx:stat_props}) confirm that the overlap estimates are close to independently distributed for large $s$, 
3.) the spread of the variances decays faster than the mean variance, meaning that for large $s$, all overlap estimates have approximately the same variance, and 
4.) the numerically observed mean variance $\bar\sigma^2$ agrees well with the analytical performance guarantees from Ref.~\citenum{wan2022matchgate} (see Appendix~\ref{apx:stat_props} for more details).
For the case of half-filling, $\zeta = n/2$, we numerically found the simple relation 
\begin{align}
    \bar\sigma^2 \lessapprox \sqrt{2 n} / s. \label{eq:variance_bound}
\end{align}
Ultimately, this allowed us to build a synthetic noise model, in which we can efficiently
add Gaussian noise with variance $\bar\sigma^2$ to classically computed exact overlaps to ``mimic'' matchgate shadow overlap measurements without
having to classically simulate the entire shadow protocol.
In the next section, we use the synthetic noise model to estimate
the number of quantum measurements $s$ in order to reach chemical precision for sizeable systems, which are intractable on current quantum devices.
Given a finite shot shadow, one can estimate overlaps and the variance of these estimates.
This enabled us to build a classical post-processing scheme to screen out overlaps that were not statistically significantly determined,
improving the results of ec-CC in Section~\ref{sec:eccc_results_main}.

\subsection{Shot Noise Resilience and Quantum Resources for Tailored Coupled Cluster} \label{sec:results_tccsd}
\begin{figure}[ht]
\centering
    \hfill
    \begin{minipage}{0.49\textwidth}
    \begin{subfigure}[T]{0.81\textwidth}
        \centering
        \includegraphics[width=\textwidth]{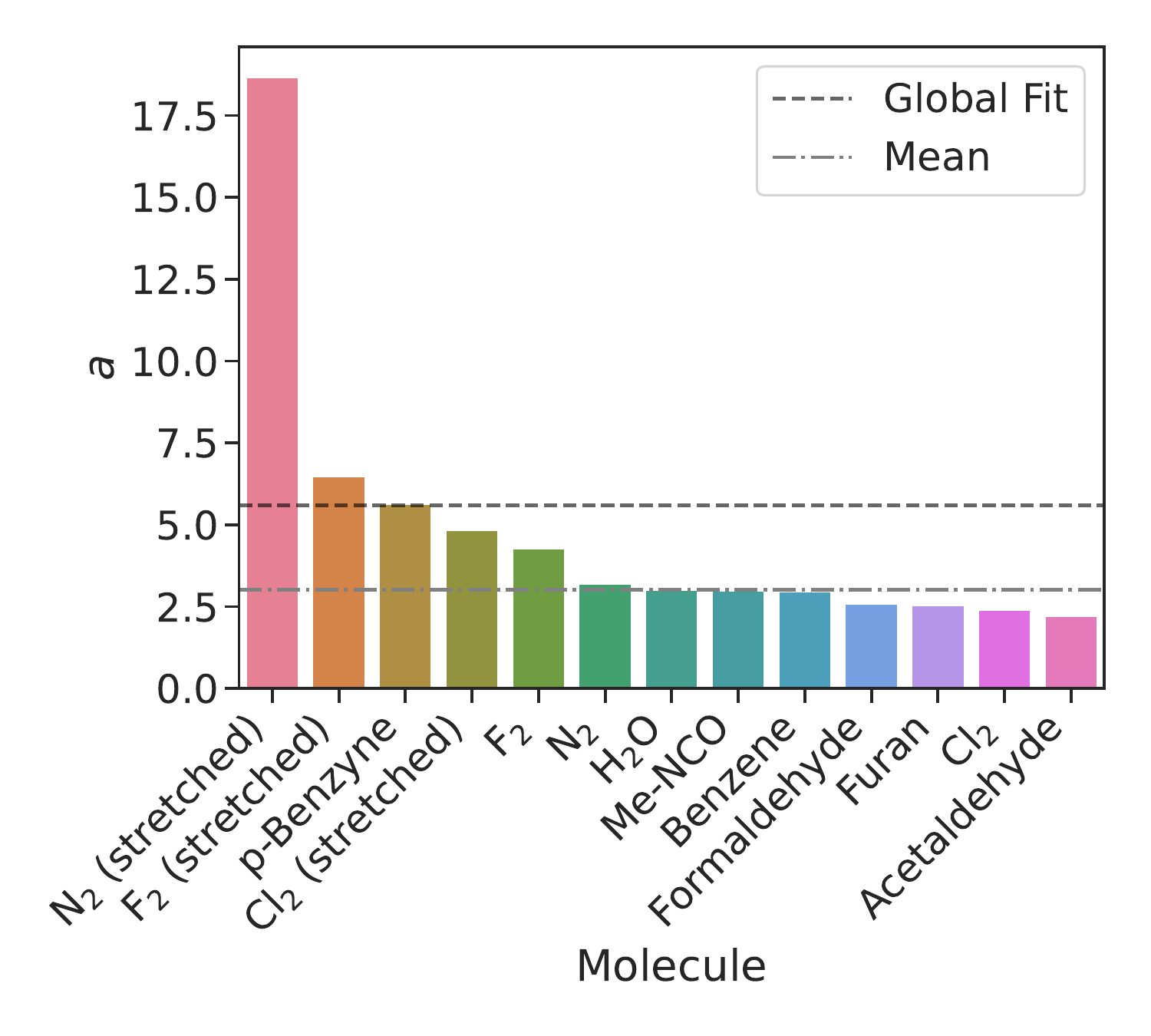}
        \caption{}
    \end{subfigure}
    \\
    \begin{subfigure}[T]{0.81\textwidth}
        \centering
        \includegraphics[width=\textwidth]{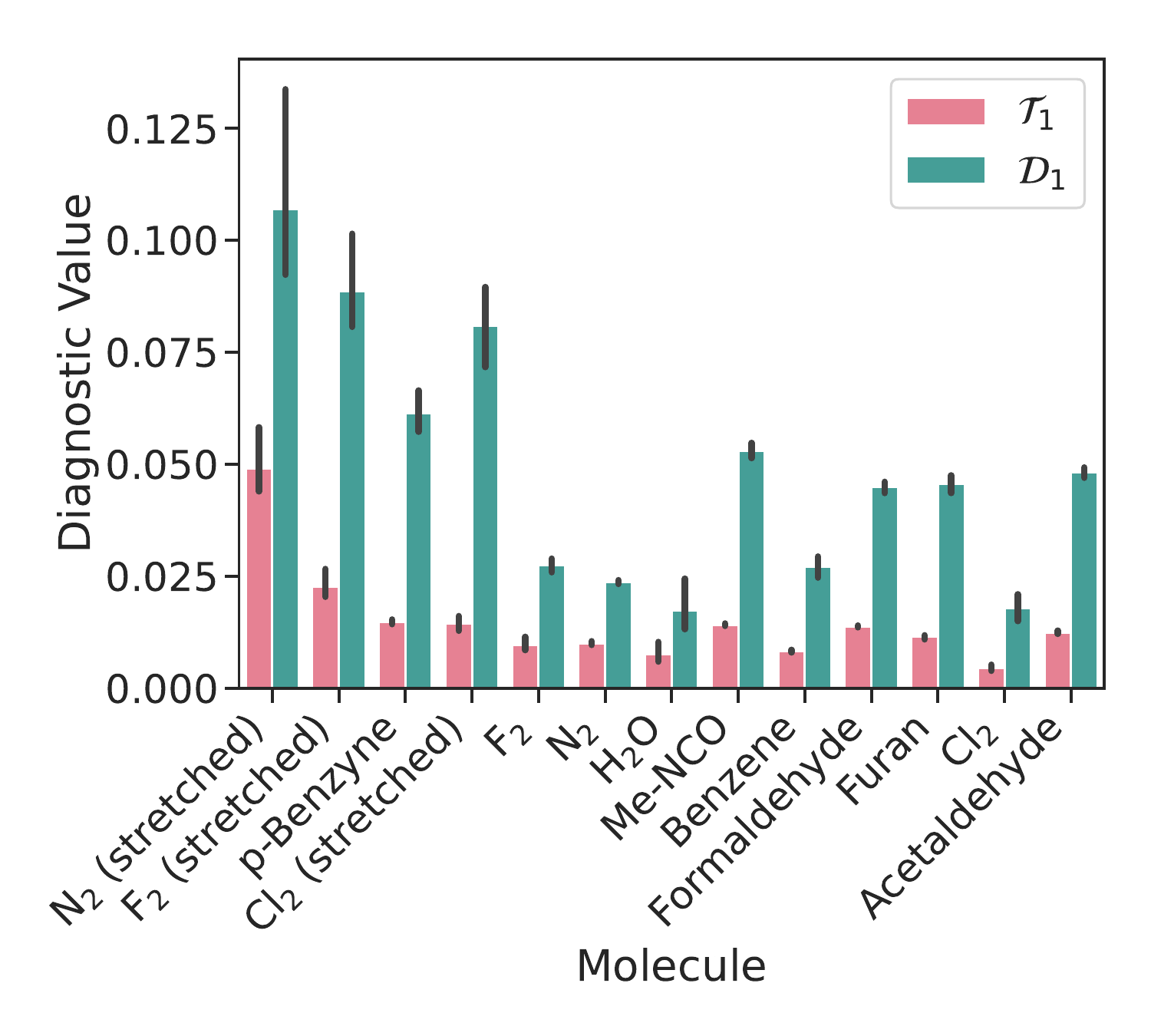}
        \caption{}
    \end{subfigure}
    \end{minipage}
    \hfill
    \begin{minipage}{0.49\textwidth}
    \begin{subfigure}[T]{0.81\textwidth}
        \centering
        \includegraphics[width=\textwidth]{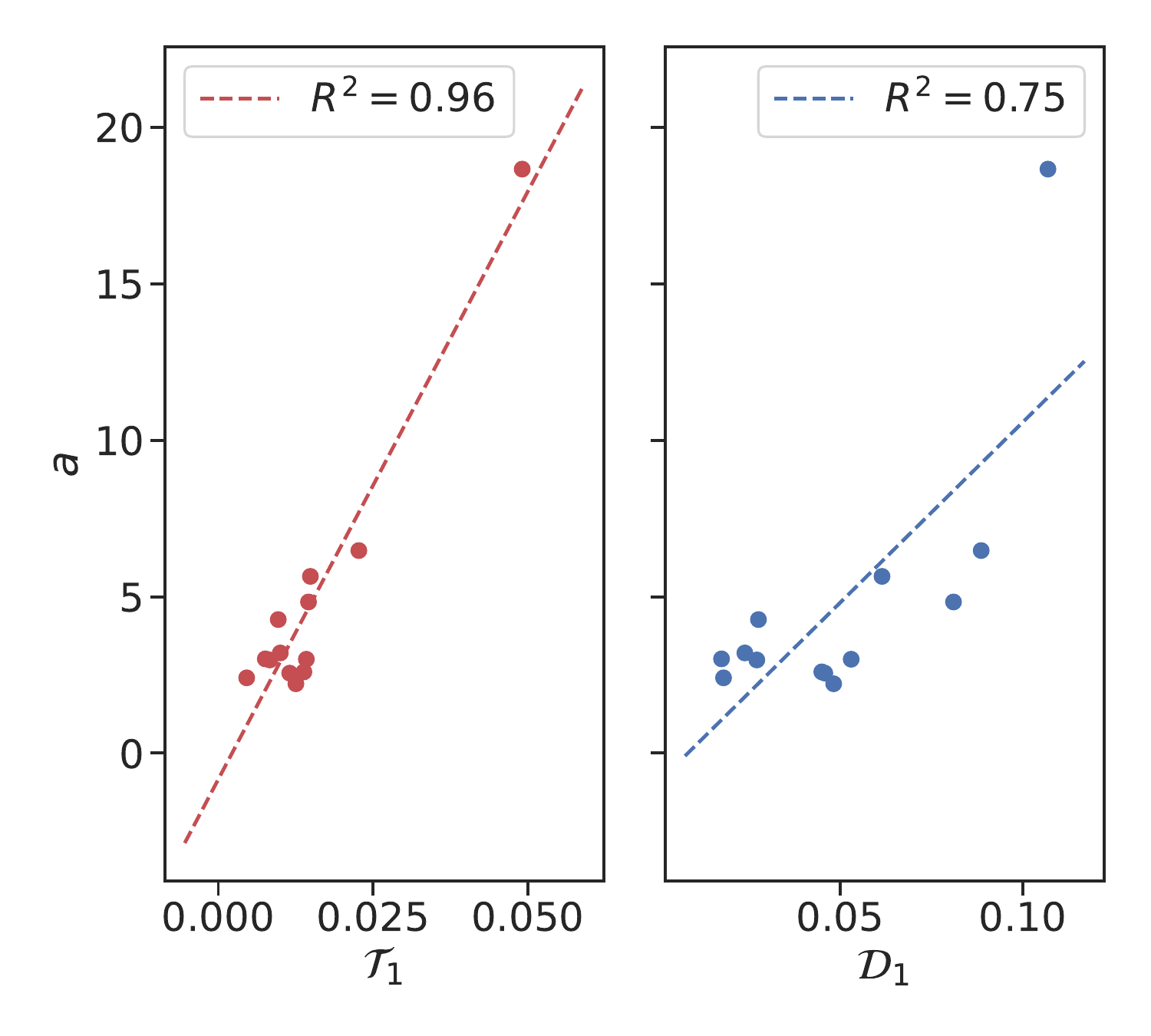}
        \caption{}
    \end{subfigure}
    \end{minipage}
    \hfill
\caption{Fitted power law prefactors $a$ for each molecule at fixed global exponents (a),
Diagnostic values for each molecule (b), scatter plot of the prefactor $a$ vs.\ the diagnostic values including a
linear fit (c).
} \label{fig:prefactors_molecules}
\end{figure}

The purpose of this section is to analyze the finite shot budget for
matchgate shadows in order to obtain (noisy) quantum TCCSD energies for molecular
systems to a given precision.
The observation that noisy overlaps extracted from the matchgate shadow protocol
are independent and identically distributed random variables with an underlying
normal distribution greatly facilitates the construction of an error model of
the TCCSD energy, since we do not need to record many finite shot shadows.
Rather, one can take the exact overlaps from a classical CASCI calculation and
subsequently draw the noise of a given distribution with standard deviation or noise strength $\sigma$, and add it to the exact overlaps. Subsequently, we compute the
``noisy'' TCCSD energy without having to simulate a single quantum circuit.
Ultimately, the goal is to find a heuristic model for the shot budgets, which will incorporate a) the dependence of the error on the noise strength $\sigma$, which can be directly related to the shot budget via the variance bound published by Wan \textit{et al.},\cite{wan2022matchgate} and b) system/molecule-specific parameters (number of spin orbitals, active space size, etc.) making it possible
to estimate the quantum resources across a broad range of molecules of interest without
having to perform the actual sampling of the TCCSD error for that system.
The details of our analysis and construction of the empirical error model are explained in detail in Appendix~\ref{sec:influence_of_noise}.
The quantity we want to extrapolate using only pre-defined system parameters and the noise strength is the absolute TCCSD energy error caused by noise, \detcc. Due to the relation with the noise strength, it will then be possible to substitute $\sigma$ with the variance bound in eq \eqref{eq:variance_bound}, which
in turn contains the shot budget $s$. Thus, if we can convincingly model \detcc, we can directly obtain a shot budget given the desired
accuracy in \detcc, and the system-specific parameters, which we will outline in the following.
Most importantly, we tested the dependence of \detcc~(for a given system) on the noise strength while keeping all other
influential parameters (basis set, etc.) fixed. This analysis revealed a linear dependence of \detcc on $\sigma$, i.e., the
TCCSD energy error is proportional to the underlying noise strength. The linear relationship was observed consistently for different systems
and active space sizes. Even at high noise strengths ($\sigma = 10^{-1}$), the TCCSD calculations converged, indicating that the procedure
is robust even when tailored with almost random overlaps.
\begin{figure}[ht]
    \centering
    \includegraphics[width=0.5\textwidth]{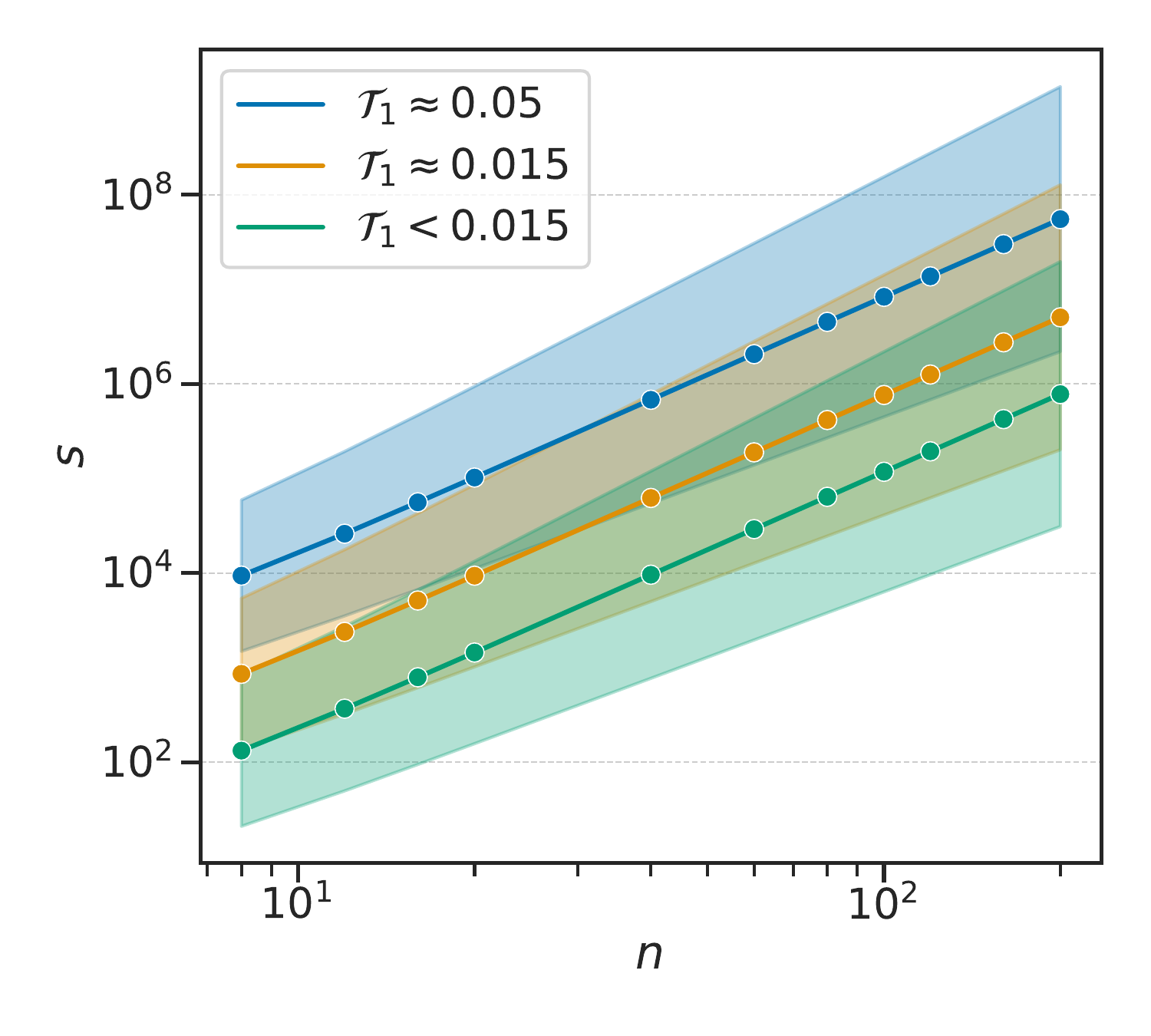}
    \caption{Extrapolation of the shot count $s$ needed for determining the quantum TCCSD energy to milli-Hartree precision in different correlation regimes, as indicated by the $\mathcal{T}_1$ diagnostic.
    The data were generated for a system with $N=600$ orbitals and half-filled active spaces $\zeta = n/2$.
    Shaded regions correspond to power laws with the smallest and largest exponents (i.e., estimator $\pm$ standard deviation, respectively).
    }
    \label{fig:shot_count_extrapolation}
\end{figure}
With the linear dependence on $\sigma$ at hand, we sought to incorporate molecule/system-specific parameters into an extended extrapolation model for
\detcc. We composed a set of molecular systems, basis sets, and active space sizes and sampled the TCCSD energy for these systems at
a given noise strength, such that we could afterwards find the system-dependent variables that explain trends in the error best.
From analysis of the sampled data set, we concluded that the total number of overlaps $d$ that are put in to the TCCSD calculation, which directly depends on $n$ and $\zeta$, i.e., the chosen active space size (see Appendix \ref{apx:tccsd_energy_error_extrapolation}), and the total number of spin orbitals of the system $N$ yield good results in a power law
fit over the whole data set. The power law is given by
    \begin{align}
    \detcc = a \, d^\beta N^\gamma \sigma. %
\end{align}
The exponents of $d$ and $N$ in the power law seem to be largely independent of the chosen
molecular system. The exponent of $d$, given as $\beta = 0.277 \pm 0.054$, shows that the error due to noisy amplitudes
increases with the size of the active space, as expected. The exponent of $N$ is approximately $\gamma = -1.074 \pm 0.116$, such that the absolute error
decreases when the total size of the MO space increases. This makes sense, because the active space contribution to the total energy decreases
when the relative size of the external space increases.

The prefactor $a$ encodes system-specific variables that were omitted by our choice of $d$ and $N$ as the most important
general quantities. We noticed that the stronger the static correlation in the system was, the larger was the corresponding prefactor
$a$ in a per-molecule fit with fixed exponents for $\beta$ and $\gamma$. The per-molecule prefactors are depicted in Figure \ref{fig:prefactors_molecules}.
We found that the prefactor $a$ across the molecule test set correlates with the well-known CCSD diagnostic values, $\mathcal{T}_1$ and $\mathcal{D}_1$ (see Fig.~\ref{fig:prefactors_molecules}).\cite{lee1989diagnostic,janssen1998new}
These diagnostic values are typically used to quantify whether a single-reference CCSD wavefunction is reliable, or whether one should opt for a MR
treatment of the system. The $\mathcal{T}_1$ values correlate best with the molecule-specific prefactor $a$ (Fig.~\ref{fig:prefactors_molecules}),
and we use a linear fit to convert from $\mathcal{T}_1$ to $a$ (fit parameters are shown in Appendix \ref{apx:tccsd_energy_error_extrapolation}).

The empirical error extrapolation model takes into account the following quantities; i) the noise strength $\sigma$,
given through the variance bound in eq \eqref{eq:variance_bound}, ii) the number of overlaps $d$, required
for TCCSD, which depends on the size of the chosen active space, iii) the total number of spin orbitals in the system $N$, and iv)
the prefactor $a$ of the power law, derived from the $\mathcal{T}_1$ diagnostic through a plain CCSD calculation.
Of course, the empirical model is far from general, however, it may give some guidance for the error level one can expect
in a TCCSD calculation based on noisy overlap measurements.
Inserting the variance bound (eq \eqref{eq:variance_bound}) into the power law, one can rearrange the equation for the matchgate shadow shot budget at half-filling,
\begin{align}
s \lessapprox \frac{a^2}{\detcc^2} d^{2\beta} N^{2\gamma} \sqrt{2n}. \label{eq:shot_budget_half_filling}
\end{align}
The shot budget $s$ can then be computed for a target accuracy \detcc~with the other parameters obtained from the fit and the
molecule/setup-specific values. In the following, we set $\detcc = 10^{-3}$~Eh as the required ``chemical precision''.
Note that this shot budget is just assigned for the overlap measurements through matchgate shadows
and does not take into account the shot budget for, e.g., state preparation or optimization. Furthermore, the Gaussian error model is herein
applied to the \emph{exact} overlaps from a CASCI trial wavefunction. For this reason, the error model just covers errors in the overlap
measurements but not the imperfections in the wavefunction we have discussed in the previous section. Since the error model is
independent of the underlying state, however, both error contributions can be studied separately.
We computed the values for $s$ for a fictitious system with an
MO space of $N = 600$, for half-filling active spaces from $n=4$ to $n=200$ qubits with $\zeta = n/2$ electrons. To cover different
correlation regimes, the shot budgets were obtained for three ranges of $\mathcal{T}_1$, i.e., $\mathcal{T}_1 < 0.015$,
$\mathcal{T}_1 \approx 0.015$, and $\mathcal{T}_1 \approx 0.05$, as plotted in Figure~\ref{fig:shot_count_extrapolation}.
The shaded areas around the solid lines in the plot include the worst/best-case scenario, where the standard deviations of
the exponents have been added/subtracted.
The strongly correlated case, $\mathcal{T}_1 \approx 0.05$, corresponds to a system like \ce{N2} with a bond distance of $2.8$~\AA, i.e., an
extreme case for MR character. The mixed/balanced case, $\mathcal{T}_1 \approx 0.015$, where CCSD is almost not reliable anymore (usually the threshold for $\mathcal{T}_1$ is around $0.02$), corresponds to a system like \textit{p}-Benzyne, and the weakly correlated case to something like
closed-shell organic molecules.
For the balanced case, the shot counts range from approximately $10^3$ for 4 qubits to less than $10^7$ for 200 qubits.
The uncertainty of these numbers is given through the standard deviation of the exponent fit parameters, and amounts to
approximately one order of magnitude in each direction.

\subsection{Quantum Resource Estimates for Nitrogen Dissociation}
\begin{figure}[ht]
    \centering
    \includegraphics[width=0.4\textwidth]{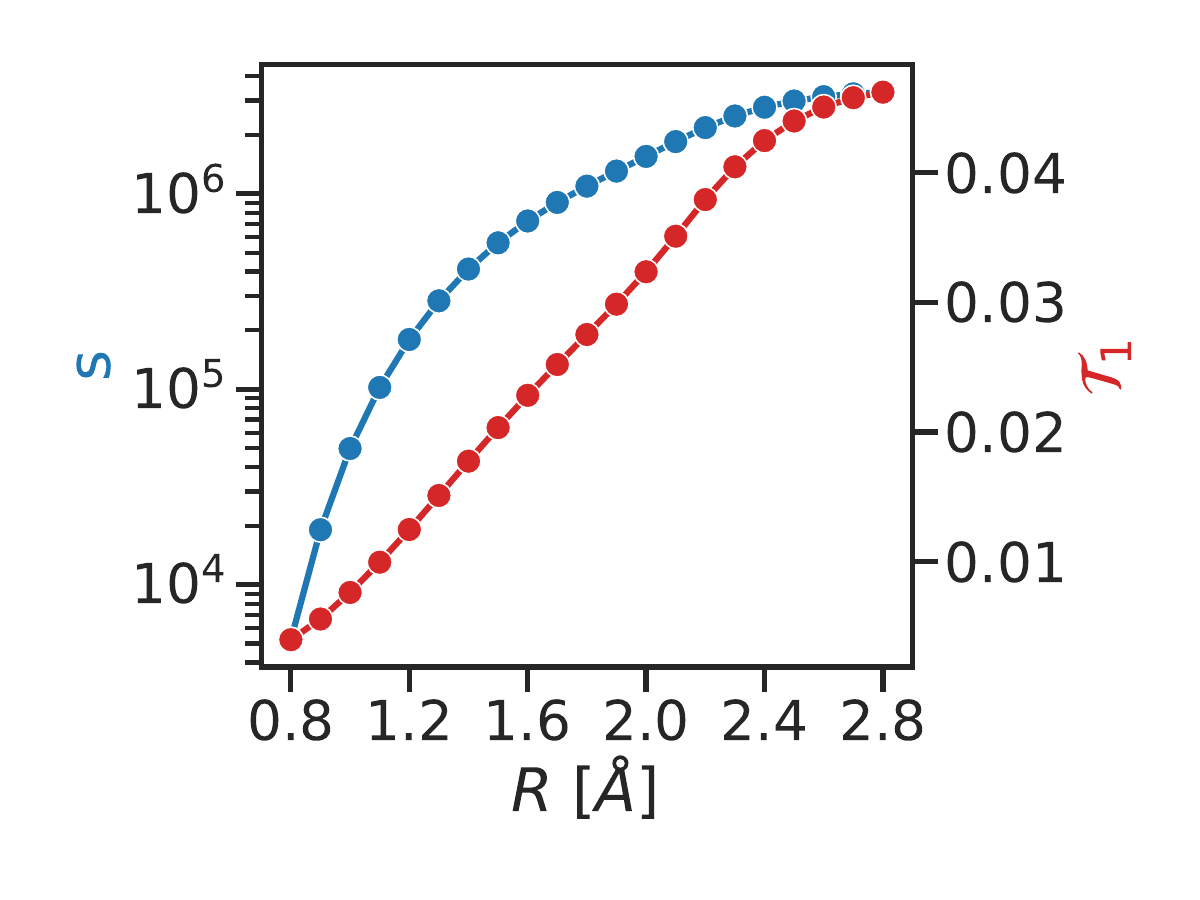}
    \caption{Estimated shot budget $s$ for \ce{N2}/cc-pVDZ/CAS(6,6)
    to achieve milli-Hartree precision in the TCCSD energy as a function of the internuclear distance $R$ along the dissociation curve.
    The value of $\mathcal{T}_1$, obtained from CCSD/cc-pVDZ, determines the increase in $s$, according to eq \eqref{eq:shot_budget_half_filling}.
    The total shot budget amounts to approximately 30 million shots. Raw data are shown in Table \ref{tab:shot_budgets_n2}.
    }
    \label{fig:shots_n2}
\end{figure}
Coming back to the test case in the previous section, we computed the total shot budgets to obtain the \ce{N2}
dissociation curve with chemical precision in a noisy TCCSD setup.
The shot budget per point on the dissociation curve is given in Figure~\ref{fig:shots_n2}. While the number of overlaps and the
number of spin orbitals is of course identical on each point on the potential energy surface, the value of $\mathcal{T}_{1}$ increases
and thus the shot budget is adjusted accordingly. Note that the shot budget grows quadratically in $\mathcal{T}_{1}$, and
the total shot budget for all 21 bond distances is dominated by the strongly correlated regime for $R > 1.7$~\AA, as expected.
We note that the shot budget for obtaining the entire dissociation curve in $0.1$ \AA\ increments is less than $3 \times 10^7$.
We think it is remarkable that so few shots are sufficient to obtain an energy estimator that includes dynamic correlation correction effects since, even when advanced methods such as regularized compressed double factorization\cite{oumarou2022accelerating} or fluid fermionic fragments\cite{choi2023} are employed, measuring just the bare active space VQE energy typically requires a comparable number of shots.
Measurements of 2RDMs, let alone sufficiently high quality higher-order RDMs, can be expected to require significantly more shots.  
Our estimate does not include error mitigation overheads, but is a very feasible shot count on present-day quantum hardware, as is testified by the fact that more shots were taken for individual echo-verified data points in Ref.~\citenum{obrien2023}.
For a complete comparison, a thorough investigation of the shot counts needed for PT-based methods for treating dynamic correlation using higher-order RDMs would be desirable, but this is beyond the scope of this work.
Additionally, it would be interesting to numerically analyze quantum state preparation techniques other than VQE in the context of TCCSD in future work.

\subsection{Resilience to Device Noise}
The combination of how overlaps are input into the split amplitude methods considered here with how they are determined in the matchgate shadow protocol\cite{wan2022matchgate} leads to a further noteworthy built-in error mitigation property of the resulting method.
To see this, we first need to recount how overlaps are measured from shadows.
When recording the shadow one picks a reference state such as the true vacuum $\ket{\mathbf{0}} = \ket{0}^{\otimes n}$, then one aims to prepare the state $\ket{\rho} = (\ket{\mathbf{0}} + \trial)/\sqrt{2}$ on the device and finally takes computational basis state measurements after random Gaussian rotations.
From the shadow obtained in this way the sought-after overlap with a computational basis state $\ket{\varphi}$ is then computed as the expectation value of the non-Hermitian observables $\ket{\mathbf{0}}\bra{\varphi}$, using that $\langle \varphi \trial = 2\, \mathrm{Tr}[|\mathbf{0} \rangle\langle \varphi| \rho]$.
If this state preparation is noisy so that instead of the state $\rho = \ket{\rho}\bra{\rho}$ the device prepares the state $\tilde\rho = (1-p)\,\rho + p \,\rho_{\text{noise}}$ for some error probability $p$ and density matrix $\rho_{\text{noise}}$, one has
\begin{equation}
 2 \,\mathrm{Tr}[|\mathbf{0} \rangle\langle \varphi| \tilde\rho] =  (1-p)\, \langle \varphi \trial + p \langle \varphi| \rho_{\text{noise}} |\mathbf{0} \rangle .
\end{equation}

For several reasonable types of noise, including depolarizing noise, bit flip noise, and amplitude damping $\langle \varphi| \rho_{\text{noise}} |\mathbf{0} \rangle$ will be either zero or very small for all states $|\varphi \rangle$ from a subspace containing a reasonable number of particles $\zeta$.
This means that overlaps such as those in eqs \eqref{eq:C1} and \eqref{eq:C2} can, because of the intermediate normalization of $\ket{\Psi^\mathrm{AS}} = \trial / \langle \Phi_0 \trial$ be very well approximated by quotients of overlaps computed from the shadow of the noisy state.
For the overlap in eq \eqref{eq:C1}, one has for example
\begin{equation}
c_i^a = \braket{\Phi_i^a | \Psi^\mathrm{AS}} = \frac{\langle \Phi_i^a \trial}{\langle \Phi_0 \trial} \approx \frac{\mathrm{Tr}[|\mathbf{0} \rangle\langle \Phi_i^a| \tilde\rho]}{\mathrm{Tr}[|\mathbf{0} \rangle\langle \Phi_0| \tilde\rho]},
\end{equation}
because the common factors of $2/(1-p)$ in the enumerator and denominator cancel.
Even estimation from a finite shot shadow should thus work well as long as $(1-p) \, \langle \varphi \trial$ does not become too small (see Appendix D.6 of Ref.~\citenum{huggins_unbiasing_2022} for similar consideration).
Contrary to this, without error mitigation techniques, expectation values such as that of, e.g., the electronic structure Hamiltonian, are usually first order sensitive to the noise strength $p$.

A further major concern in several platforms are particle number dependent phases that can be caused, e.g., by background magnetic fields.
These are often particularly problematic for algorithms that prepare cat-like coherent superpositions of states of markedly different particle number such as $\ket{\rho}$.
However, since we know that the input overlaps are all real and we are only interested in getting the relative signs of the coefficients, which are quotients of overlaps, correct and the computational basis state overlaps going into the enumerator all come from the same particle number sub-space, a particle number dependent phase results in a global phase affecting all coefficients, which can be easily corrected.
One possible way of doing this is to rotate the coefficients in the complex plane such that the largest coefficient or the principle component of all overlaps or is aligned with the real axis before either discarding the imaginary part or taking the absolute value multiplied with the sign of the real part of each coefficient. 
This renders our methods largely independent of uncontrolled particle number dependent phases and also implies some robustness against general phase errors.

\section{Split-Amplitude CC on Quantum Hardware} \label{sec:eccc_results_main}
In this section, we showcase how externally corrected CC performs on inputs obtained from actual quantum hardware. For this purpose, we revisit the
ground state energy of \ce{H4}/STO-3G arranged in a square geometry with bond distances of $1.23$ \AA, which was previously studied
with QC-QMC in Ref.~\citenum{huggins_unbiasing_2022}. Quantum overlaps were measured with Clifford shadows on Google's Sycamore superconducting quantum processor.\cite{arute2019quantum} 
In this hardware experiment, the state preparation for \ce{H4}/STO-3G on 8 qubits corresponds to preparing the exact FCI state. 
The overlaps, obtained from a Clifford shadow protocol,\cite{zhao_fermionic_2021} were then used to drive a QMC
calculation. Due to device and shot noise, the measured overlaps of course do not exactly reproduce FCI overlaps, making it an
interesting test case for studying noise robustness of ec-CC. For details on how the experimental shadow data were processed, see Appendix~\ref{apx:eccc_experiment}.
\begin{figure}[ht]
    \centering
    \includegraphics[width=0.5\textwidth]{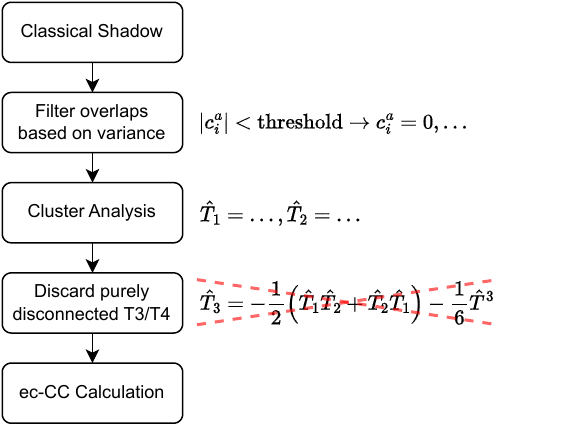}
    \caption{Post-processing protocol for quantum ec-CC.
    Overlaps measured in the classical (matchgate) shadow protocol are first filtered based on variance, and set to zero if the
    overlap estimator has a too large variance. After that, the cluster analysis is performed on the filtered overlaps, cluster amplitudes up to
    and including T4. Then, purely disconnected T3 and T4 amplitudes are disconnected, yielding a so-called ``Type-II'' ec-CC calculation
    in the end.\cite{magoulas2021is,lee2021externally}
    }
    \label{fig:ec_postprocessing}
\end{figure}
The quantum trial state prepared on the device (Q trial) first had to be converted to a wavefunction with real-valued coefficients (Q trial real, see Appendix \ref{apx:eccc_experiment})
before use with ec-CC. The Q trial state was recorded four times with different number of sampled Cliffords, $N_\mathrm{Cliffords}$, in the original
experiment, and we used the best two repetitions in terms of root mean-square error of the reconstructed wavefunction with respect to FCI (the results are qualitatively unchanged when all four runs are included, but the bands depicting uncertainty become unnecessarily wide).
Note that, for hardware reasons, the circuit for each group element was executed 1,000 times, yielding a so-called mulit-shot/thrifty\cite{helsen2022thrifty, zhou2023performanceanalysis} shadow.
To complement the hardware data, we simulated a matchgate shadow (30 repetitions) taken on the FCI wavefunction with both a single shot and 1,000 shots per group
element.
Furthermore, we devised a classical post-processing protocol, illustrated in Figure~\ref{fig:ec_postprocessing}, to ascertain certain properties of
the wavefunction extracted from the device. Therein, we filter overlaps based on their variance and set them to zero if a certain variance threshold
for the measurement is not fulfilled, i.e., the overlap was not determined reliably enough.
In the given experiment, we set overlaps to zero if the overlap value is larger than $2\, \sigma$.
After that, we perform the usual cluster analysis
which then might contain purely disconnected T3 and T4 due to the variance-based thresholding, or due to the underlying wavefunction. In the spirit
of ``Type-II'' ec-CC input,\cite{magoulas2021is,lee2021externally} we discard all purely disconnected T3 and T4 amplitudes and subsequently perform
the ec-CC calculation.
\begin{figure}[h]
\includegraphics[width=0.7\textwidth]{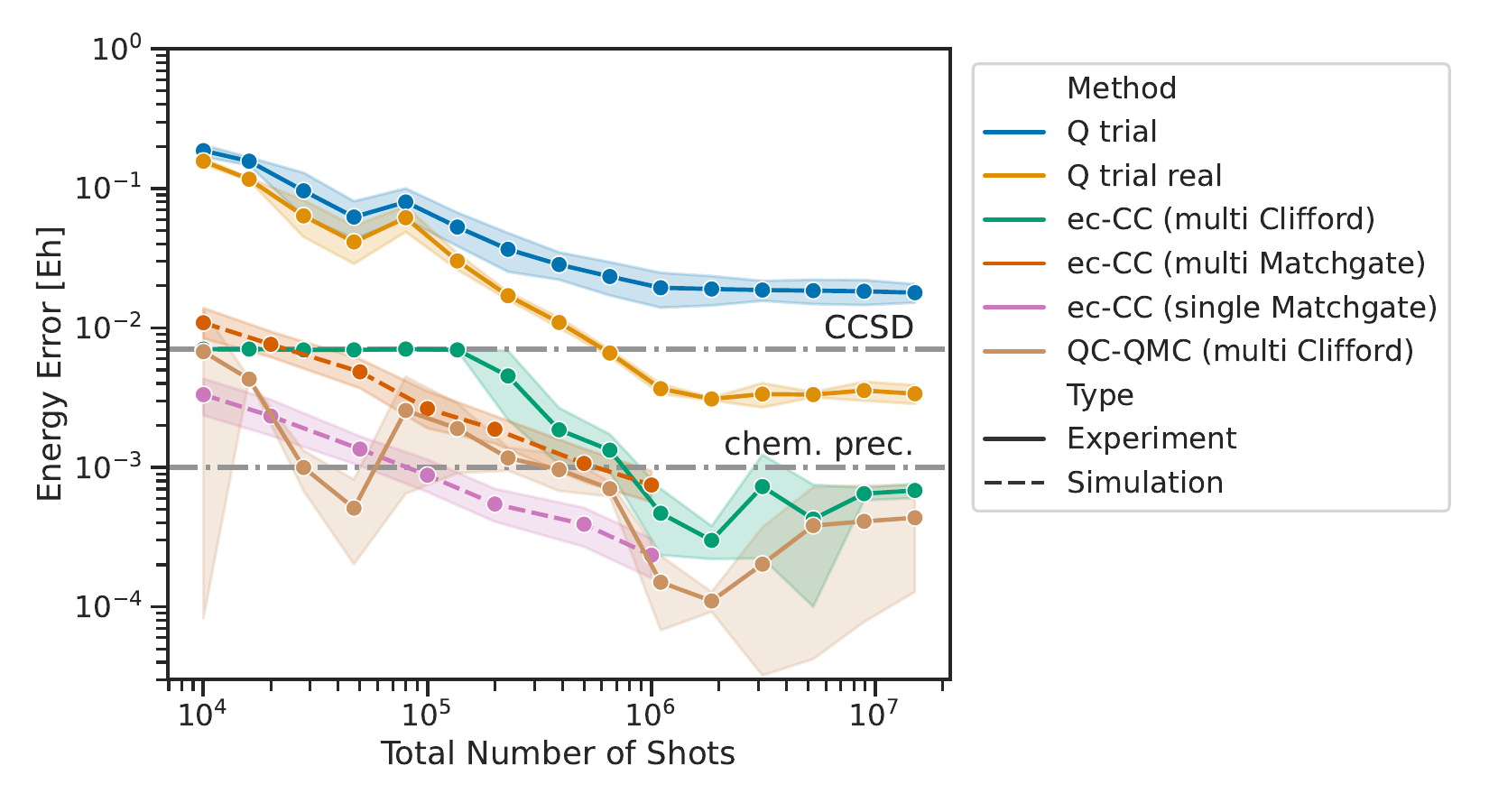}
\caption{
Energy error of trial state (Q trial), Q trial converted to real wavefunction (Q trial real), ec-CC with different shadow protocols, and QC-QMC with respect to FCI. Averages from two repetitions are shown, where the shaded areas indicate the 95th percentile.
Results based on experimental data from Ref.~\citenum{huggins_unbiasing_2022} are drawn a solid lines, results from simulated matchgate shadows with only shot noise are drwan as dashed lines. 
The CCSD energy error ($7.05$ mEh) and the limit for chemical precision (1 mEh) are shown for comparison.
Without circuit noise, the Q trial state would correspond to FCI.
Raw data are shown in Tables \ref{tab:h4_sto3g_fqe_real}, \ref{tab:h4_sto3g_ec}, and \ref{tab:h4_sto3g_ec_matchgate}.
} \label{fig:h4_sto3g_error}
\end{figure}
The absolute energy error with respect to FCI as a function of the total number of shots (i.e., number of group elements times number of circuit
executions per group element) are shown in Figure~\ref{fig:h4_sto3g_error}, including the results for Q trial and
QC-QMC from Ref.~\citenum{huggins_unbiasing_2022}.
For more than $10^6$ shots, the experiment is no more shot-noise-limited. In this limit, ec-CC and QC-QMC energy errors are less than 1 mEh, i.e.,
chemical precision is reached, whereas the plain variational energy of the trial state and its purely real
counterpart does not reach the same level of accuracy.
\begin{figure}[ht]
    \centering
    \includegraphics[width=0.45\textwidth]{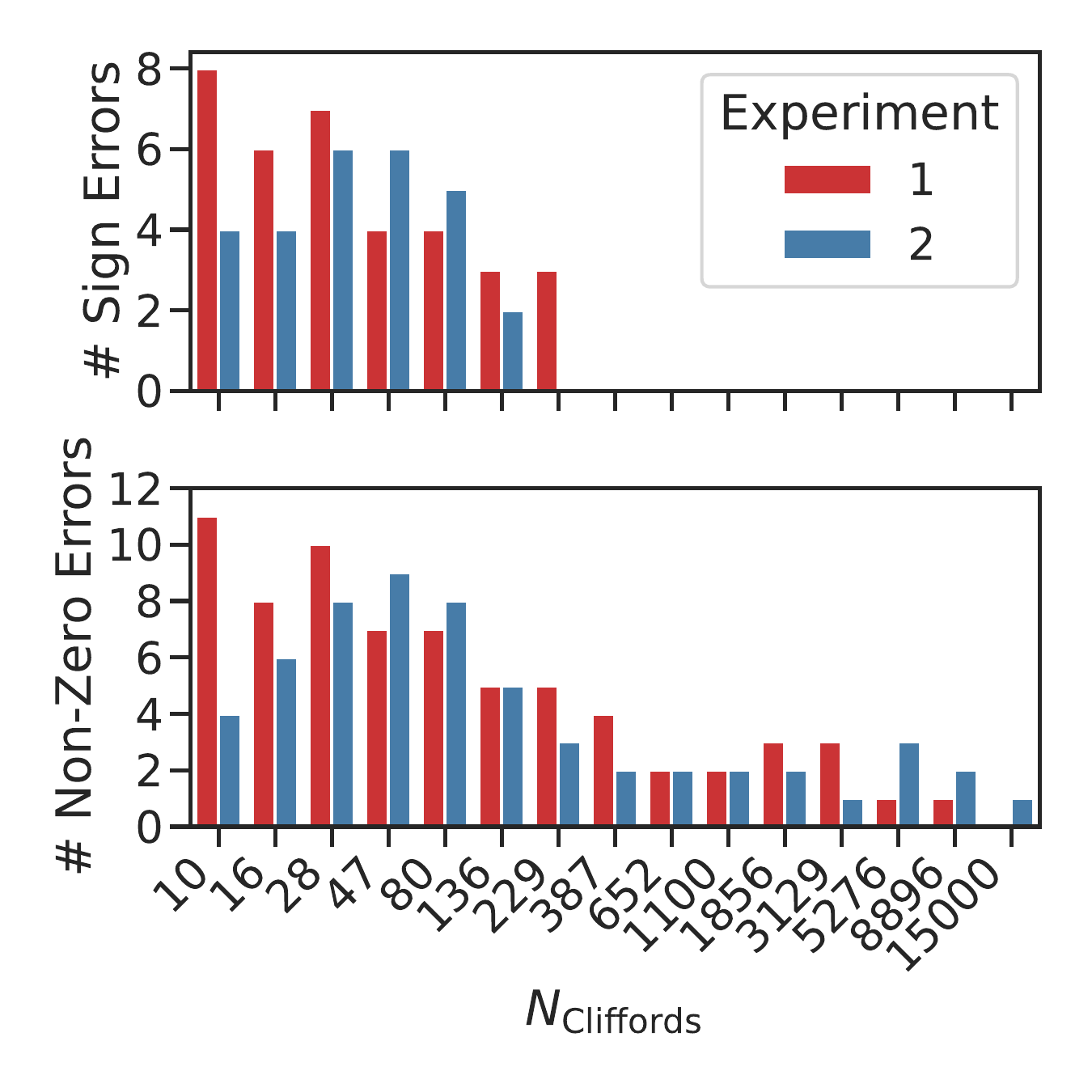}
    \caption{%
    Number of sign errors (upper panel) and non-zero errors (lower panel) in the
    overlap measurements depending on the number of Cliffords, $N_\mathrm{Cliffords}$,
    in the \ce{H4}/STO-3G experiment on 8 qubits, grouped by two runs of the experiment.
    Non-zero errors are defined as overlap values that should be numerically zero (in the FCI vector), but are larger than a given numerical threshold of $10^{-6}$ in the
    shadow-based measurements.
    }
    \label{fig:sign_nonzero_errors}
\end{figure}
Our post-processing protocol ensures that for too few shots (up to about $2 \times 10^5$) the results are never worse than plain CCSD.
To understand this behavior, we analyzed the type of errors that can occur in overlap measurements, namely \emph{sign} errors
and non-zero errors. The latter refer to an overlap value which should be numerically zero (in this case, the corresponding
value in the CI vector is zero), but is measured to be non-zero. The sign and non-zero errors for the two experimental runs
of the \ce{H4}/STO-3G experiment are summarized in Figure \ref{fig:sign_nonzero_errors}. Sign errors completely vanish at $N_\mathrm{Cliffords} = 387$,
which perfectly coincides with the points on the energy error curve where improvements over CCSD are observed. The non-zero errors do not decay as rapidly
as the sign errors, and some elements are still measured with a non-zero value for rather large
number of sampled group elements. Thus, sign errors seem to be most severe and affecting the quality of the
input overlaps for ec-CC, but those errors quickly disappear completely when the total number of shots is
moderate, i.e., chemical precision is only reached for even more shots.

Making the wavefunction real removes ambiguities because of the non-measurable global phase and is a necessity because coupled cluster
with a real molecular Hamiltonian requires real amplitude inputs. In addition, it seems to improve the quality of the trial wavefunction energy.
ec-CC on the experimental data does not seem to quite reach the accuracy of QC-QMC, but gets close and drastically improves over the (real) Q trial energy by a factor of $5$ and $25$, respectively, as well as about a factor of $10$ over the plain CCSD energy. It is reassuring that ec-CC is competitive
with QC-QMC in this setting since we are certain that, due to the wavefunction quality, ec-CC must improve over the plain trial state energy.
The comparison with the simulated matchgate shadow data shows that, with both shadow protocols, chemical precision can be reached with a comparable number of group elements and shots. The 1,000 shots multi-shot variant only needs roughly $10$ times more shots at $100$ fewer distinct circuits, making it attractive on hardware where changing the circuit incurs a run time overhead.
We attribute the generally better performance on simulated data to the absence of gate and detection noise.

\begin{figure}[ht]
    \centering
    \includegraphics[width=0.7\textwidth]{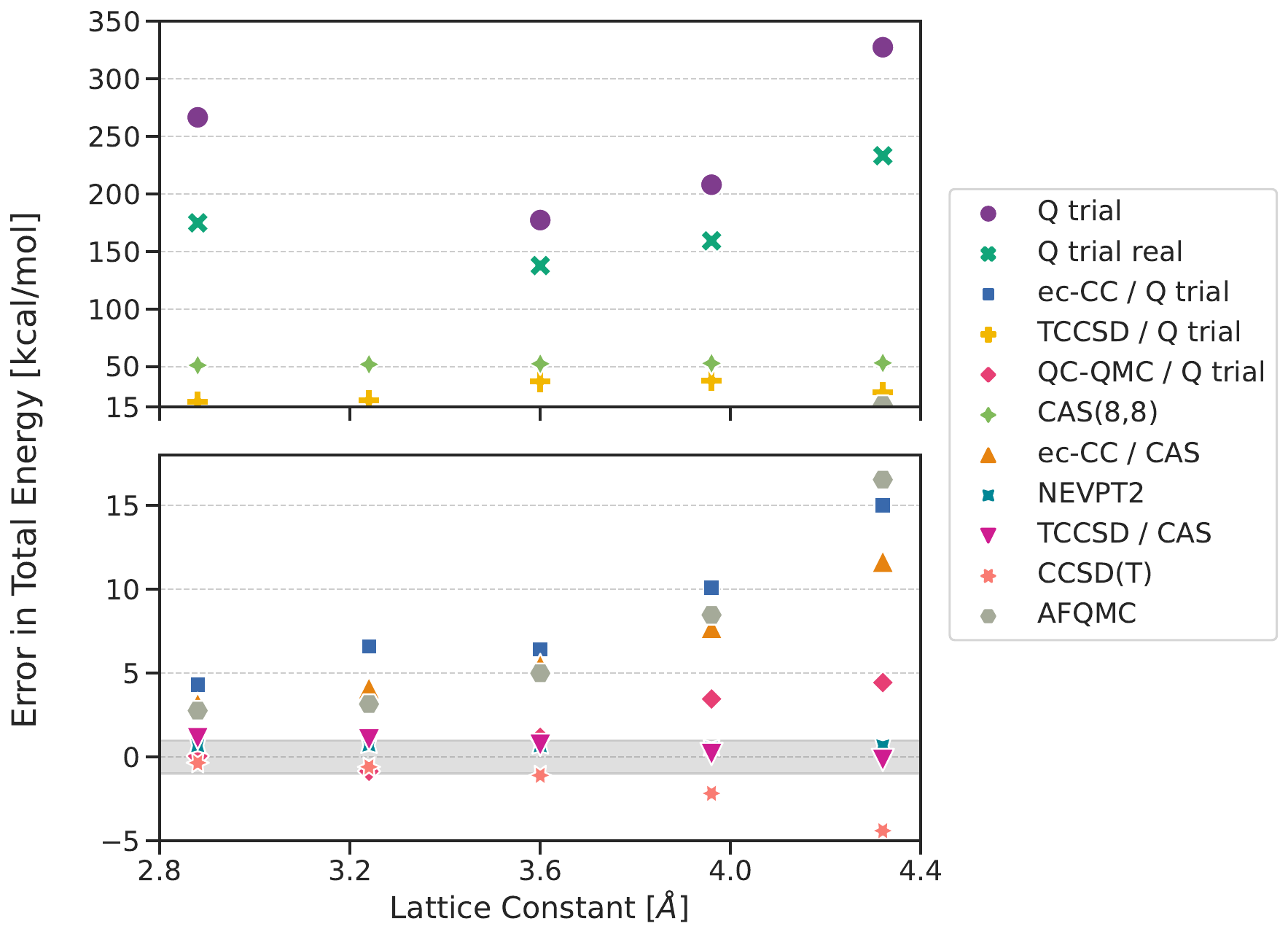}
    \caption{Error in total energy as a function of the lattice constant of a minimal cell diamond (DZVP-GTH basis with 26
    orbitals). In contrast to the \ce{H4} case, the quantum trial state is prepared on an active space of 16 qubits using a perfect-pairing wavefunction,
    as explained in detail in Ref.~\citenum{huggins_unbiasing_2022}.
    Raw data are shown in Table \ref{tab:diamond}.
    The plot is split up in two panels with different vertical axis limits to show the whole range of energy errors.
    The gray area in the lower panel indicates the bounds for chemical precision of 1 kcal/mol.
    Data for Q trial, AFQMC, and QC-QMC are reproduced from Ref.~\citenum{huggins_unbiasing_2022}.
    The ``/'' notation in the plot legend indicates the input wavefunction for the respective method, i.e., `Q trial` from the device or a classically simulated exact CAS(8,8).
    Note that for the lattice constant of $3.24$~\AA, the results for (real) Q trial are not visible on the depicted energy scale.
    For high values of the lattice constant, only TCCSD based on CAS(8,8) and NEVPT2 achieve chemical precision,
    however, all quantum split-amplitude methods run on the hardware data significantly improve the energy of the prepared trial state.
    }
    \label{fig:diamond}
\end{figure}
As a second example run on actual hardware, we study the energy of a minimal diamond unit cell (two carbon atoms)
in a double-zeta basis (GTH-DZVP;\cite{vandevondele2007gaussian} 26 orbitals, with GTH-PADE\cite{goedecker1996separable} pseudopotential, only $\Gamma$ point in Brillouin zone sampling)
as a function of the lattice constant. In Ref.~\citenum{huggins_unbiasing_2022}, the quantum trial corresponded to a 16-qubit
active space perfect-pairing (PP) wavefunction, and the computational basis state overlaps were obtained using the Clifford shadow
protocol, sampling 50,000 group elements for each lattice constant.
In addition to the methods used in Ref.~\citenum{huggins_unbiasing_2022}, we computed the single-point energy for each lattice constant
using ec-CC and TCCSD on the real-valued quantum trial state, CAS(8,8) (i.e., the exact solution to the given active space problem),
ec-CC and TCCSD on the exact CAS state, and NEVPT2 for comparison.
The results are shown in Figure \ref{fig:diamond}.
In comparison with the \ce{H4} hardware experiment, two factors limit the performance guarantees of ec-CC: 1) the fact
that we are dealing with an active space wavefunction, such that we do not have any convergence guarantees to the true FCI result in the limit
of exact external T3 and T4 amplitudes, and 2) the quantum trial PP state is a CC-like wavefunction, which does not accurately treat higher-order
excitations as would be beneficial for ec-CC.
The results for quantum ec-CC on the quantum trial state never reach chemical precision, and the error is between approximately 5 and 15 kcal/mol.
This is comparable with the results from plain AFQMC. Even when the exact CAS(8,8) wavefunction is used to construct the T3 and T4 amplitudes,
the energy error only marginally improves over the quantum input one. This strongly hints that the ``truncation'' of T3 and T4 stemming from an active space
is of course preventing more accurate ec-CC results.
The performance of quantum TCCSD tailored with the quantum trial state is even worse, since the energy error is larger than $15$~kcal/mol for all values of
the lattice constant. Still, for both split-amplitude methods executed with actual quantum trial states, the energy error is vastly better than the
variational energy of the quantum trial state (for the value at $3.24$~\AA, it is literally off the charts).
Interestingly, the energy error of a plain CAS(8,8), due to neglect of dynamic correlation effects, is worse than that of the quantum split-amplitude methods.
The only methods in our setup that reach chemical precision for all values of the lattice constant are TCCSD tailored with CAS(8,8) and NEVPT2.
The energy error observed for those two methods is almost identical. TCCSD thus provides vastly better results when tailored with the exact active
space wavefunction than with the PP quantum trial state. This is expected, since the CC-like PP wavefunction cannot yield a better result in this case than
plain CCSD on the full orbital space. Even though there are no strict theoretical requirements for input wavefunctions to TCCSD, this is a case which
is expected to fail based on the properties of the quantum trial. The fact that ec-CC uses a post-processing protocol for the input amplitudes could
explain why it performs better than TCCSD on actual hardware data.
From the perspective of classical electronic structure methods, the performance of TCCSD with CAS is on par with NEVPT2, which is encouraging.
We conclude that ec-CC of course was not expected to work well, or even better than QC-QMC, in the active space and PP quantum input setting, which
was corroborated by our numerical results. Nonetheless, the split-amplitude methods provide a massive improvement in absolute energy error even when
seeded with noisy quantum input.

\section{Conclusions}
In this work we proposed to use overlaps obtained from a quantum computer by means of classical shadows as inputs to split-amplitude CC methods tailored coupled cluster (TCC) and externally corrected coupled cluster (ec-CC).
The resulting combinations of methods can be seen as a way of adding dynamic electron correlation corrections onto active space trial states prepared on a quantum computer.
These methods can further be viewed as a way of curing failures of plain single-reference coupled cluster theory (such as the appearance of virtual reaction barriers)
by taking into account properties of a multi-reference wavefunction prepared on a quantum computer, while avoiding the dramatic increase in classical computational complexity of classical multi-reference coupled cluster methods.
We showed that the combination of methods displays a range of desirable properties:
1) The dynamic correlation correction along the \ce{N2} dissociation curve and for a stretched diamond cell is found to be comparable in quality to NEVPT2 and remarkably robust against systematic imperfections in the prepared active space wavefunction, as may arise from too shallow VQE circuits.
2) One can predict the number of repetitions (shots) required to obtain accurate results from established correlation diagnostics that are classically efficiently obtainable from CCSD calculations.
3) Extrapolating these resource estimates to the classically no longer exactly solvable regime, we found remarkably low shot counts, which we attribute to the fact that measurement of expensive intermediates such as higher-order reduced density matrices is avoided.
4) When tested with overlaps measured on real quantum hardware, our method provided results with an accuracy comparable to QC-QMC for the ground state energy of \ce{H4} in a minimal basis.
5) The expensive classical part of the computation can be performed with standard coupled cluster codes, opening the possibility for further speedups with, e.g., DLPNO.
6) In particular ec-CC (and to a lesser degree TCC) seems to have some built-in error mitigation abilities, producing energies that can be as good as CASCI or AFQMC, respectively, even from very noisy trial state amplitudes. 
We analyzed the statistical properties of overlaps computed from matchgate shadows\cite{wan2022matchgate} and numerically corroborated the validity of a Gaussian noise model that may be of general interest.
Furthermore, the quantum split-amplitude methods are fully agnostic of the measurement scheme. Hence, the quantum resource estimates provided in our present work
can most likely be improved upon using shadow protocols with lower sample complexity.
It would thus be interesting to see if similar noise models hold for other classical shadow protocols.
In particular, the particle-number-preserving shadow protocol from Ref.~\citenum{low_classical_2022} promises to make the required number of shots to obtain overlaps at a given precision independent of the number of qubits at the expense of an overhead in circuit depth and number of qubits. It will be important to explore these trade-offs further.
Another possible direction is to combine the methods proposed here with shadow-based error mitigation methods.\cite{jnane2023quantum,chan2023algorithmic,brieger2023stability}
While the shot counts we find for the quantum split-amplitude CC methods seem very low, even when comparing to typical shot counts needed to just compute the energy of bare electronic structure Hamiltonians on quantum wavefunctions, a detailed analysis of the shot counts needed to obtain similar quality energies via perturbative methods such as NEVPT2 from higher-order RDMs measured either from shadows or directly remains outstanding.
For ec-CC, we were able to give some hints for what kind of wavefunctions could yield a quantum advantage and hope that this can inspire future work on VQE ansätze and other state preparation schemes.
Finally, it would be interesting to extend TCCSD with T3 amplitudes from the active space as suggested in the original TCC work,\cite{kinoshita2005coupled}
and to compute other molecular properties with split-amplitude methods.

\section*{Acknowledgements}
We acknowledge fruitful discussions with William J.~Huggins, Eugene DePrince III, Fotios Gkritsis, Robert M.~Parrish, and Pauline J.~Ollitrault. 

\section*{Data Availability}
The data for matchgate shadow statistics (Section \ref{sec:influence_of_noise}) and the raw data of noisy TCCSD
energies to devise the power law error model and shot budget estimation, including molecular
geometries and correlation diagnostics, have been uploaded to Zenodo under the DOI \href{https://doi.org/10.5281/zenodo.10470740}{10.5281/zenodo.10470740}.\cite{zenododata}

\bibliography{reference,reference2}

\begin{thebibliography}{135}%
\makeatletter
\providecommand \@ifxundefined [1]{%
 \@ifx{#1\undefined}
}%
\providecommand \@ifnum [1]{%
 \ifnum #1\expandafter \@firstoftwo
 \else \expandafter \@secondoftwo
 \fi
}%
\providecommand \@ifx [1]{%
 \ifx #1\expandafter \@firstoftwo
 \else \expandafter \@secondoftwo
 \fi
}%
\providecommand \natexlab [1]{#1}%
\providecommand \enquote  [1]{``#1''}%
\providecommand \bibnamefont  [1]{#1}%
\providecommand \bibfnamefont [1]{#1}%
\providecommand \citenamefont [1]{#1}%
\providecommand \href@noop [0]{\@secondoftwo}%
\providecommand \href [0]{\begingroup \@sanitize@url \@href}%
\providecommand \@href[1]{\@@startlink{#1}\@@href}%
\providecommand \@@href[1]{\endgroup#1\@@endlink}%
\providecommand \@sanitize@url [0]{\catcode `\\12\catcode `\$12\catcode
  `\&12\catcode `\#12\catcode `\^12\catcode `\_12\catcode `\%12\relax}%
\providecommand \@@startlink[1]{}%
\providecommand \@@endlink[0]{}%
\providecommand \url  [0]{\begingroup\@sanitize@url \@url }%
\providecommand \@url [1]{\endgroup\@href {#1}{\urlprefix }}%
\providecommand \urlprefix  [0]{URL }%
\providecommand \Eprint [0]{\href }%
\providecommand \doibase [0]{https://doi.org/}%
\providecommand \selectlanguage [0]{\@gobble}%
\providecommand \bibinfo  [0]{\@secondoftwo}%
\providecommand \bibfield  [0]{\@secondoftwo}%
\providecommand \translation [1]{[#1]}%
\providecommand \BibitemOpen [0]{}%
\providecommand \bibitemStop [0]{}%
\providecommand \bibitemNoStop [0]{.\EOS\space}%
\providecommand \EOS [0]{\spacefactor3000\relax}%
\providecommand \BibitemShut  [1]{\csname bibitem#1\endcsname}%
\let\auto@bib@innerbib\@empty
\bibitem [{\citenamefont {McArdle}\ \emph {et~al.}(2020)\citenamefont
  {McArdle}, \citenamefont {Endo}, \citenamefont {Aspuru-Guzik}, \citenamefont
  {Benjamin},\ and\ \citenamefont {Yuan}}]{mcardle2020quantum}%
  \BibitemOpen
  \bibfield  {author} {\bibinfo {author} {\bibfnamefont {S.}~\bibnamefont
  {McArdle}}, \bibinfo {author} {\bibfnamefont {S.}~\bibnamefont {Endo}},
  \bibinfo {author} {\bibfnamefont {A.}~\bibnamefont {Aspuru-Guzik}}, \bibinfo
  {author} {\bibfnamefont {S.~C.}\ \bibnamefont {Benjamin}},\ and\ \bibinfo
  {author} {\bibfnamefont {X.}~\bibnamefont {Yuan}},\ }\bibfield  {title}
  {\bibinfo {title} {Quantum computational chemistry},\ }\href
  {https://doi.org/10.1103/REVMODPHYS.92.015003} {\bibfield  {journal}
  {\bibinfo  {journal} {Reviews of Modern Physics}\ }\textbf {\bibinfo {volume}
  {92}},\ \bibinfo {pages} {015003} (\bibinfo {year} {2020})}\BibitemShut
  {NoStop}%
\bibitem [{\citenamefont {Kinoshita}\ \emph {et~al.}(2005)\citenamefont
  {Kinoshita}, \citenamefont {Hino},\ and\ \citenamefont
  {Bartlett}}]{kinoshita2005coupled}%
  \BibitemOpen
  \bibfield  {author} {\bibinfo {author} {\bibfnamefont {T.}~\bibnamefont
  {Kinoshita}}, \bibinfo {author} {\bibfnamefont {O.}~\bibnamefont {Hino}},\
  and\ \bibinfo {author} {\bibfnamefont {R.~J.}\ \bibnamefont {Bartlett}},\
  }\bibfield  {title} {\bibinfo {title} {Coupled-cluster method tailored by
  configuration interaction},\ }\href {https://doi.org/10.1063/1.2000251}
  {\bibfield  {journal} {\bibinfo  {journal} {The Journal of chemical physics}\
  }\textbf {\bibinfo {volume} {123}},\ \bibinfo {pages} {074106} (\bibinfo
  {year} {2005})}\BibitemShut {NoStop}%
\bibitem [{\citenamefont {Paldus}(2017)}]{paldus2017externally}%
  \BibitemOpen
  \bibfield  {author} {\bibinfo {author} {\bibfnamefont {J.}~\bibnamefont
  {Paldus}},\ }\bibfield  {title} {\bibinfo {title} {Externally and internally
  corrected coupled cluster approaches: an overview},\ }\href
  {https://doi.org/10.1007/s10910-016-0688-6} {\bibfield  {journal} {\bibinfo
  {journal} {Journal of Mathematical Chemistry}\ }\textbf {\bibinfo {volume}
  {55}},\ \bibinfo {pages} {477} (\bibinfo {year} {2017})}\BibitemShut
  {NoStop}%
\bibitem [{\citenamefont {Takeshita}\ \emph {et~al.}(2020)\citenamefont
  {Takeshita}, \citenamefont {Rubin}, \citenamefont {Jiang}, \citenamefont
  {Lee}, \citenamefont {Babbush},\ and\ \citenamefont
  {McClean}}]{takeshita2020increasing}%
  \BibitemOpen
  \bibfield  {author} {\bibinfo {author} {\bibfnamefont {T.}~\bibnamefont
  {Takeshita}}, \bibinfo {author} {\bibfnamefont {N.~C.}\ \bibnamefont
  {Rubin}}, \bibinfo {author} {\bibfnamefont {Z.}~\bibnamefont {Jiang}},
  \bibinfo {author} {\bibfnamefont {E.}~\bibnamefont {Lee}}, \bibinfo {author}
  {\bibfnamefont {R.}~\bibnamefont {Babbush}},\ and\ \bibinfo {author}
  {\bibfnamefont {J.~R.}\ \bibnamefont {McClean}},\ }\bibfield  {title}
  {\bibinfo {title} {Increasing the representation accuracy of quantum
  simulations of chemistry without extra quantum resources},\ }\bibfield
  {journal} {\bibinfo  {journal} {Physical Review X}\ }\textbf {\bibinfo
  {volume} {10}},\ \href {https://doi.org/10.1103/physrevx.10.011004}
  {10.1103/physrevx.10.011004} (\bibinfo {year} {2020})\BibitemShut {NoStop}%
\bibitem [{\citenamefont {Peruzzo}\ \emph {et~al.}(2014)\citenamefont
  {Peruzzo}, \citenamefont {McClean}, \citenamefont {Shadbolt}, \citenamefont
  {Yung}, \citenamefont {Zhou}, \citenamefont {Love}, \citenamefont
  {Aspuru-Guzik},\ and\ \citenamefont {O’Brien}}]{peruzzo2014variational}%
  \BibitemOpen
  \bibfield  {author} {\bibinfo {author} {\bibfnamefont {A.}~\bibnamefont
  {Peruzzo}}, \bibinfo {author} {\bibfnamefont {J.}~\bibnamefont {McClean}},
  \bibinfo {author} {\bibfnamefont {P.}~\bibnamefont {Shadbolt}}, \bibinfo
  {author} {\bibfnamefont {M.-H.}\ \bibnamefont {Yung}}, \bibinfo {author}
  {\bibfnamefont {X.-Q.}\ \bibnamefont {Zhou}}, \bibinfo {author}
  {\bibfnamefont {P.~J.}\ \bibnamefont {Love}}, \bibinfo {author}
  {\bibfnamefont {A.}~\bibnamefont {Aspuru-Guzik}},\ and\ \bibinfo {author}
  {\bibfnamefont {J.~L.}\ \bibnamefont {O’Brien}},\ }\bibfield  {title}
  {\bibinfo {title} {A variational eigenvalue solver on a photonic quantum
  processor},\ }\bibfield  {journal} {\bibinfo  {journal} {Nature
  Communications}\ }\textbf {\bibinfo {volume} {5}},\ \href
  {https://doi.org/10.1038/ncomms5213} {10.1038/ncomms5213} (\bibinfo {year}
  {2014})\BibitemShut {NoStop}%
\bibitem [{\citenamefont {McClean}\ \emph {et~al.}(2016)\citenamefont
  {McClean}, \citenamefont {Romero}, \citenamefont {Babbush},\ and\
  \citenamefont {Aspuru-Guzik}}]{mcclean2016theory}%
  \BibitemOpen
  \bibfield  {author} {\bibinfo {author} {\bibfnamefont {J.~R.}\ \bibnamefont
  {McClean}}, \bibinfo {author} {\bibfnamefont {J.}~\bibnamefont {Romero}},
  \bibinfo {author} {\bibfnamefont {R.}~\bibnamefont {Babbush}},\ and\ \bibinfo
  {author} {\bibfnamefont {A.}~\bibnamefont {Aspuru-Guzik}},\ }\bibfield
  {title} {\bibinfo {title} {The theory of variational hybrid quantum-classical
  algorithms},\ }\href {https://doi.org/10.1088/1367-2630/18/2/023023}
  {\bibfield  {journal} {\bibinfo  {journal} {New Journal of Physics}\ }\textbf
  {\bibinfo {volume} {18}},\ \bibinfo {pages} {023023} (\bibinfo {year}
  {2016})}\BibitemShut {NoStop}%
\bibitem [{\citenamefont {Tammaro}\ \emph {et~al.}(2023)\citenamefont
  {Tammaro}, \citenamefont {Galli}, \citenamefont {Rice},\ and\ \citenamefont
  {Motta}}]{tammaro2023n}%
  \BibitemOpen
  \bibfield  {author} {\bibinfo {author} {\bibfnamefont {A.}~\bibnamefont
  {Tammaro}}, \bibinfo {author} {\bibfnamefont {D.~E.}\ \bibnamefont {Galli}},
  \bibinfo {author} {\bibfnamefont {J.~E.}\ \bibnamefont {Rice}},\ and\
  \bibinfo {author} {\bibfnamefont {M.}~\bibnamefont {Motta}},\ }\bibfield
  {title} {\bibinfo {title} {N-electron valence perturbation theory with
  reference wave functions from quantum computing: Application to the relative
  stability of hydroxide anion and hydroxyl radical},\ }\href
  {https://doi.org/10.1021/acs.jpca.2c07653} {\bibfield  {journal} {\bibinfo
  {journal} {The Journal of Physical Chemistry A}\ }\textbf {\bibinfo {volume}
  {127}},\ \bibinfo {pages} {817–827} (\bibinfo {year} {2023})}\BibitemShut
  {NoStop}%
\bibitem [{\citenamefont {Baek}\ \emph {et~al.}(2023)\citenamefont {Baek},
  \citenamefont {Hait}, \citenamefont {Shee}, \citenamefont {Leimkuhler},
  \citenamefont {Huggins}, \citenamefont {Stetina}, \citenamefont
  {Head-Gordon},\ and\ \citenamefont {Whaley}}]{baek2023say}%
  \BibitemOpen
  \bibfield  {author} {\bibinfo {author} {\bibfnamefont {U.}~\bibnamefont
  {Baek}}, \bibinfo {author} {\bibfnamefont {D.}~\bibnamefont {Hait}}, \bibinfo
  {author} {\bibfnamefont {J.}~\bibnamefont {Shee}}, \bibinfo {author}
  {\bibfnamefont {O.}~\bibnamefont {Leimkuhler}}, \bibinfo {author}
  {\bibfnamefont {W.~J.}\ \bibnamefont {Huggins}}, \bibinfo {author}
  {\bibfnamefont {T.~F.}\ \bibnamefont {Stetina}}, \bibinfo {author}
  {\bibfnamefont {M.}~\bibnamefont {Head-Gordon}},\ and\ \bibinfo {author}
  {\bibfnamefont {K.~B.}\ \bibnamefont {Whaley}},\ }\bibfield  {title}
  {\bibinfo {title} {Say no to optimization: A nonorthogonal quantum
  eigensolver},\ }\href {https://doi.org/10.1103/PRXQuantum.4.030307}
  {\bibfield  {journal} {\bibinfo  {journal} {PRX Quantum}\ }\textbf {\bibinfo
  {volume} {4}},\ \bibinfo {pages} {030307} (\bibinfo {year}
  {2023})}\BibitemShut {NoStop}%
\bibitem [{\citenamefont {Krompiec}\ and\ \citenamefont
  {Ramo}(2022)}]{krompiec2022strongly}%
  \BibitemOpen
  \bibfield  {author} {\bibinfo {author} {\bibfnamefont {M.}~\bibnamefont
  {Krompiec}}\ and\ \bibinfo {author} {\bibfnamefont {D.~M.}\ \bibnamefont
  {Ramo}},\ }\bibfield  {title} {\bibinfo {title} {Strongly contracted
  n-electron valence state perturbation theory using reduced density matrices
  from a quantum computer},\ }\href@noop {} {\bibfield  {journal} {\bibinfo
  {journal} {arXiv preprint arXiv:2210.05702}\ } (\bibinfo {year}
  {2022})}\BibitemShut {NoStop}%
\bibitem [{\citenamefont {Kawashima}\ \emph {et~al.}(2021)\citenamefont
  {Kawashima}, \citenamefont {Lloyd}, \citenamefont {Coons}, \citenamefont
  {Nam}, \citenamefont {Matsuura}, \citenamefont {Garza}, \citenamefont
  {Johri}, \citenamefont {Huntington}, \citenamefont {Senicourt}, \citenamefont
  {Maksymov} \emph {et~al.}}]{kawashima2021optimizing}%
  \BibitemOpen
  \bibfield  {author} {\bibinfo {author} {\bibfnamefont {Y.}~\bibnamefont
  {Kawashima}}, \bibinfo {author} {\bibfnamefont {E.}~\bibnamefont {Lloyd}},
  \bibinfo {author} {\bibfnamefont {M.~P.}\ \bibnamefont {Coons}}, \bibinfo
  {author} {\bibfnamefont {Y.}~\bibnamefont {Nam}}, \bibinfo {author}
  {\bibfnamefont {S.}~\bibnamefont {Matsuura}}, \bibinfo {author}
  {\bibfnamefont {A.~J.}\ \bibnamefont {Garza}}, \bibinfo {author}
  {\bibfnamefont {S.}~\bibnamefont {Johri}}, \bibinfo {author} {\bibfnamefont
  {L.}~\bibnamefont {Huntington}}, \bibinfo {author} {\bibfnamefont
  {V.}~\bibnamefont {Senicourt}}, \bibinfo {author} {\bibfnamefont {A.~O.}\
  \bibnamefont {Maksymov}}, \emph {et~al.},\ }\bibfield  {title} {\bibinfo
  {title} {Optimizing electronic structure simulations on a trapped-ion quantum
  computer using problem decomposition},\ }\href
  {https://doi.org/10.1038/s42005-021-00751-9} {\bibfield  {journal} {\bibinfo
  {journal} {Communications Physics}\ }\textbf {\bibinfo {volume} {4}},\
  \bibinfo {pages} {245} (\bibinfo {year} {2021})}\BibitemShut {NoStop}%
\bibitem [{\citenamefont {Rossmannek}\ \emph {et~al.}(2021)\citenamefont
  {Rossmannek}, \citenamefont {Barkoutsos}, \citenamefont {Ollitrault},\ and\
  \citenamefont {Tavernelli}}]{rossmannek2021quantum}%
  \BibitemOpen
  \bibfield  {author} {\bibinfo {author} {\bibfnamefont {M.}~\bibnamefont
  {Rossmannek}}, \bibinfo {author} {\bibfnamefont {P.~K.}\ \bibnamefont
  {Barkoutsos}}, \bibinfo {author} {\bibfnamefont {P.~J.}\ \bibnamefont
  {Ollitrault}},\ and\ \bibinfo {author} {\bibfnamefont {I.}~\bibnamefont
  {Tavernelli}},\ }\bibfield  {title} {\bibinfo {title} {Quantum
  hf/dft-embedding algorithms for electronic structure calculations: Scaling up
  to complex molecular systems},\ }\bibfield  {journal} {\bibinfo  {journal}
  {The Journal of Chemical Physics}\ }\textbf {\bibinfo {volume} {154}},\ \href
  {https://doi.org/doi.org/10.1063/5.0029536} {doi.org/10.1063/5.0029536}
  (\bibinfo {year} {2021})\BibitemShut {NoStop}%
\bibitem [{\citenamefont {Li}\ \emph {et~al.}(2022)\citenamefont {Li},
  \citenamefont {Huang}, \citenamefont {Cao}, \citenamefont {Huang},
  \citenamefont {Shuai}, \citenamefont {Sun}, \citenamefont {Sun},
  \citenamefont {Yuan},\ and\ \citenamefont {Lv}}]{li2022toward}%
  \BibitemOpen
  \bibfield  {author} {\bibinfo {author} {\bibfnamefont {W.}~\bibnamefont
  {Li}}, \bibinfo {author} {\bibfnamefont {Z.}~\bibnamefont {Huang}}, \bibinfo
  {author} {\bibfnamefont {C.}~\bibnamefont {Cao}}, \bibinfo {author}
  {\bibfnamefont {Y.}~\bibnamefont {Huang}}, \bibinfo {author} {\bibfnamefont
  {Z.}~\bibnamefont {Shuai}}, \bibinfo {author} {\bibfnamefont
  {X.}~\bibnamefont {Sun}}, \bibinfo {author} {\bibfnamefont {J.}~\bibnamefont
  {Sun}}, \bibinfo {author} {\bibfnamefont {X.}~\bibnamefont {Yuan}},\ and\
  \bibinfo {author} {\bibfnamefont {D.}~\bibnamefont {Lv}},\ }\bibfield
  {title} {\bibinfo {title} {Toward practical quantum embedding simulation of
  realistic chemical systems on near-term quantum computers},\ }\href
  {https://doi.org/10.1039/D2SC01492K} {\bibfield  {journal} {\bibinfo
  {journal} {Chemical science}\ }\textbf {\bibinfo {volume} {13}},\ \bibinfo
  {pages} {8953} (\bibinfo {year} {2022})}\BibitemShut {NoStop}%
\bibitem [{\citenamefont {Liu}\ \emph {et~al.}(2023)\citenamefont {Liu},
  \citenamefont {Meitei}, \citenamefont {Chin}, \citenamefont {Dutt},
  \citenamefont {Tao}, \citenamefont {Chuang},\ and\ \citenamefont
  {Van~Voorhis}}]{liu2023bootstrap}%
  \BibitemOpen
  \bibfield  {author} {\bibinfo {author} {\bibfnamefont {Y.}~\bibnamefont
  {Liu}}, \bibinfo {author} {\bibfnamefont {O.~R.}\ \bibnamefont {Meitei}},
  \bibinfo {author} {\bibfnamefont {Z.~E.}\ \bibnamefont {Chin}}, \bibinfo
  {author} {\bibfnamefont {A.}~\bibnamefont {Dutt}}, \bibinfo {author}
  {\bibfnamefont {M.}~\bibnamefont {Tao}}, \bibinfo {author} {\bibfnamefont
  {I.~L.}\ \bibnamefont {Chuang}},\ and\ \bibinfo {author} {\bibfnamefont
  {T.}~\bibnamefont {Van~Voorhis}},\ }\bibfield  {title} {\bibinfo {title}
  {Bootstrap embedding on a quantum computer},\ }\href
  {https://doi.org/10.1021/acs.jctc.3c00012} {\bibfield  {journal} {\bibinfo
  {journal} {Journal of Chemical Theory and Computation}\ }\textbf {\bibinfo
  {volume} {19}},\ \bibinfo {pages} {2230–2247} (\bibinfo {year}
  {2023})}\BibitemShut {NoStop}%
\bibitem [{\citenamefont {He}\ and\ \citenamefont
  {Evangelista}(2020)}]{he2020zeroth}%
  \BibitemOpen
  \bibfield  {author} {\bibinfo {author} {\bibfnamefont {N.}~\bibnamefont
  {He}}\ and\ \bibinfo {author} {\bibfnamefont {F.~A.}\ \bibnamefont
  {Evangelista}},\ }\bibfield  {title} {\bibinfo {title} {A zeroth-order
  active-space frozen-orbital embedding scheme for multireference
  calculations},\ }\bibfield  {journal} {\bibinfo  {journal} {The Journal of
  Chemical Physics}\ }\textbf {\bibinfo {volume} {152}},\ \href
  {https://doi.org/10.1063/1.5142481} {10.1063/1.5142481} (\bibinfo {year}
  {2020})\BibitemShut {NoStop}%
\bibitem [{\citenamefont {He}\ \emph {et~al.}(2022{\natexlab{a}})\citenamefont
  {He}, \citenamefont {Li},\ and\ \citenamefont {Evangelista}}]{he2022second}%
  \BibitemOpen
  \bibfield  {author} {\bibinfo {author} {\bibfnamefont {N.}~\bibnamefont
  {He}}, \bibinfo {author} {\bibfnamefont {C.}~\bibnamefont {Li}},\ and\
  \bibinfo {author} {\bibfnamefont {F.~A.}\ \bibnamefont {Evangelista}},\
  }\bibfield  {title} {\bibinfo {title} {Second-order active-space embedding
  theory},\ }\href {https://doi.org/10.1021/acs.jctc.1c01099} {\bibfield
  {journal} {\bibinfo  {journal} {Journal of Chemical Theory and Computation}\
  }\textbf {\bibinfo {volume} {18}},\ \bibinfo {pages} {1527} (\bibinfo {year}
  {2022}{\natexlab{a}})}\BibitemShut {NoStop}%
\bibitem [{\citenamefont {Evangelista}(2018)}]{evangelista2018perspective}%
  \BibitemOpen
  \bibfield  {author} {\bibinfo {author} {\bibfnamefont {F.~A.}\ \bibnamefont
  {Evangelista}},\ }\bibfield  {title} {\bibinfo {title} {Perspective:
  Multireference coupled cluster theories of dynamical electron correlation},\
  }\bibfield  {journal} {\bibinfo  {journal} {Journal of Chemical Physics}\
  }\textbf {\bibinfo {volume} {149}},\ \href
  {https://doi.org/10.1063/1.5039496} {10.1063/1.5039496} (\bibinfo {year}
  {2018})\BibitemShut {NoStop}%
\bibitem [{\citenamefont {Huang}\ \emph {et~al.}(2023)\citenamefont {Huang},
  \citenamefont {Li},\ and\ \citenamefont {Evangelista}}]{PRXQuantum.4.020313}%
  \BibitemOpen
  \bibfield  {author} {\bibinfo {author} {\bibfnamefont {R.}~\bibnamefont
  {Huang}}, \bibinfo {author} {\bibfnamefont {C.}~\bibnamefont {Li}},\ and\
  \bibinfo {author} {\bibfnamefont {F.~A.}\ \bibnamefont {Evangelista}},\
  }\bibfield  {title} {\bibinfo {title} {Leveraging small-scale quantum
  computers with unitarily downfolded hamiltonians},\ }\href
  {https://doi.org/10.1103/PRXQuantum.4.020313} {\bibfield  {journal} {\bibinfo
   {journal} {PRX Quantum}\ }\textbf {\bibinfo {volume} {4}},\ \bibinfo {pages}
  {020313} (\bibinfo {year} {2023})}\BibitemShut {NoStop}%
\bibitem [{\citenamefont {Huggins}\ \emph {et~al.}(2021)\citenamefont
  {Huggins}, \citenamefont {McClean}, \citenamefont {Rubin}, \citenamefont
  {Jiang}, \citenamefont {Wiebe}, \citenamefont {Whaley},\ and\ \citenamefont
  {Babbush}}]{huggins2021efficient}%
  \BibitemOpen
  \bibfield  {author} {\bibinfo {author} {\bibfnamefont {W.~J.}\ \bibnamefont
  {Huggins}}, \bibinfo {author} {\bibfnamefont {J.~R.}\ \bibnamefont
  {McClean}}, \bibinfo {author} {\bibfnamefont {N.~C.}\ \bibnamefont {Rubin}},
  \bibinfo {author} {\bibfnamefont {Z.}~\bibnamefont {Jiang}}, \bibinfo
  {author} {\bibfnamefont {N.}~\bibnamefont {Wiebe}}, \bibinfo {author}
  {\bibfnamefont {K.~B.}\ \bibnamefont {Whaley}},\ and\ \bibinfo {author}
  {\bibfnamefont {R.}~\bibnamefont {Babbush}},\ }\bibfield  {title} {\bibinfo
  {title} {Efficient and noise resilient measurements for quantum chemistry on
  near-term quantum computers},\ }\bibfield  {journal} {\bibinfo  {journal}
  {npj Quantum Information}\ }\textbf {\bibinfo {volume} {7}},\ \href
  {https://doi.org/10.1038/s41534-020-00341-7} {10.1038/s41534-020-00341-7}
  (\bibinfo {year} {2021})\BibitemShut {NoStop}%
\bibitem [{\citenamefont {Cohn}\ \emph {et~al.}(2021)\citenamefont {Cohn},
  \citenamefont {Motta},\ and\ \citenamefont {Parrish}}]{cohn2021cdf}%
  \BibitemOpen
  \bibfield  {author} {\bibinfo {author} {\bibfnamefont {J.}~\bibnamefont
  {Cohn}}, \bibinfo {author} {\bibfnamefont {M.}~\bibnamefont {Motta}},\ and\
  \bibinfo {author} {\bibfnamefont {R.~M.}\ \bibnamefont {Parrish}},\
  }\bibfield  {title} {\bibinfo {title} {{Quantum Filter Diagonalization with
  Compressed Double-Factorized Hamiltonians}},\ }\href
  {https://doi.org/10.1103/PRXQuantum.2.040352} {\bibfield  {journal} {\bibinfo
   {journal} {PRX Quantum}\ }\textbf {\bibinfo {volume} {2}},\ \bibinfo {pages}
  {040352} (\bibinfo {year} {2021})}\BibitemShut {NoStop}%
\bibitem [{\citenamefont {Choi}\ \emph
  {et~al.}(2023{\natexlab{a}})\citenamefont {Choi}, \citenamefont {Loaiza},\
  and\ \citenamefont {Izmaylov}}]{choi2023fluid}%
  \BibitemOpen
  \bibfield  {author} {\bibinfo {author} {\bibfnamefont {S.}~\bibnamefont
  {Choi}}, \bibinfo {author} {\bibfnamefont {I.}~\bibnamefont {Loaiza}},\ and\
  \bibinfo {author} {\bibfnamefont {A.~F.}\ \bibnamefont {Izmaylov}},\
  }\bibfield  {title} {\bibinfo {title} {Fluid fermionic fragments for
  optimizing quantum measurements of electronic {H}amiltonians in the
  variational quantum eigensolver},\ }\href
  {https://doi.org/10.22331/q-2023-01-03-889} {\bibfield  {journal} {\bibinfo
  {journal} {{Quantum}}\ }\textbf {\bibinfo {volume} {7}},\ \bibinfo {pages}
  {889} (\bibinfo {year} {2023}{\natexlab{a}})}\BibitemShut {NoStop}%
\bibitem [{\citenamefont {Oumarou}\ \emph {et~al.}(2022)\citenamefont
  {Oumarou}, \citenamefont {Scheurer}, \citenamefont {Parrish}, \citenamefont
  {Hohenstein},\ and\ \citenamefont {Gogolin}}]{oumarou2022accelerating}%
  \BibitemOpen
  \bibfield  {author} {\bibinfo {author} {\bibfnamefont {O.}~\bibnamefont
  {Oumarou}}, \bibinfo {author} {\bibfnamefont {M.}~\bibnamefont {Scheurer}},
  \bibinfo {author} {\bibfnamefont {R.~M.}\ \bibnamefont {Parrish}}, \bibinfo
  {author} {\bibfnamefont {E.~G.}\ \bibnamefont {Hohenstein}},\ and\ \bibinfo
  {author} {\bibfnamefont {C.}~\bibnamefont {Gogolin}},\ }\bibfield  {title}
  {\bibinfo {title} {Accelerating quantum computations of chemistry through
  regularized compressed double factorization},\ }\href@noop {} {\bibfield
  {journal} {\bibinfo  {journal} {arXiv preprint arXiv:2212.07957}\ } (\bibinfo
  {year} {2022})}\BibitemShut {NoStop}%
\bibitem [{\citenamefont {Zhao}\ \emph {et~al.}(2021)\citenamefont {Zhao},
  \citenamefont {Rubin},\ and\ \citenamefont {Miyake}}]{zhao_fermionic_2021}%
  \BibitemOpen
  \bibfield  {author} {\bibinfo {author} {\bibfnamefont {A.}~\bibnamefont
  {Zhao}}, \bibinfo {author} {\bibfnamefont {N.~C.}\ \bibnamefont {Rubin}},\
  and\ \bibinfo {author} {\bibfnamefont {A.}~\bibnamefont {Miyake}},\
  }\bibfield  {title} {\bibinfo {title} {Fermionic {Partial} {Tomography} via
  {Classical} {Shadows}},\ }\href
  {https://doi.org/10.1103/PhysRevLett.127.110504} {\bibfield  {journal}
  {\bibinfo  {journal} {Physical Review Letters}\ }\textbf {\bibinfo {volume}
  {127}},\ \bibinfo {pages} {110504} (\bibinfo {year} {2021})},\ \bibinfo
  {note} {publisher: American Physical Society}\BibitemShut {NoStop}%
\bibitem [{\citenamefont {Low}(2022)}]{low_classical_2022}%
  \BibitemOpen
  \bibfield  {author} {\bibinfo {author} {\bibfnamefont {G.~H.}\ \bibnamefont
  {Low}},\ }\href {https://doi.org/10.48550/arXiv.2208.08964} {\bibinfo {title}
  {Classical shadows of fermions with particle number symmetry}} (\bibinfo
  {year} {2022}),\ \bibinfo {note} {arXiv:2208.08964 [quant-ph]}\BibitemShut
  {NoStop}%
\bibitem [{\citenamefont {Bonet-Monroig}\ \emph {et~al.}(2020)\citenamefont
  {Bonet-Monroig}, \citenamefont {Babbush},\ and\ \citenamefont
  {O'Brien}}]{PhysRevX.10.031064}%
  \BibitemOpen
  \bibfield  {author} {\bibinfo {author} {\bibfnamefont {X.}~\bibnamefont
  {Bonet-Monroig}}, \bibinfo {author} {\bibfnamefont {R.}~\bibnamefont
  {Babbush}},\ and\ \bibinfo {author} {\bibfnamefont {T.~E.}\ \bibnamefont
  {O'Brien}},\ }\bibfield  {title} {\bibinfo {title} {Nearly optimal
  measurement scheduling for partial tomography of quantum states},\ }\href
  {https://doi.org/10.1103/PhysRevX.10.031064} {\bibfield  {journal} {\bibinfo
  {journal} {Phys. Rev. X}\ }\textbf {\bibinfo {volume} {10}},\ \bibinfo
  {pages} {031064} (\bibinfo {year} {2020})}\BibitemShut {NoStop}%
\bibitem [{\citenamefont {Wan}\ \emph {et~al.}(2023)\citenamefont {Wan},
  \citenamefont {Huggins}, \citenamefont {Lee},\ and\ \citenamefont
  {Babbush}}]{wan2022matchgate}%
  \BibitemOpen
  \bibfield  {author} {\bibinfo {author} {\bibfnamefont {K.}~\bibnamefont
  {Wan}}, \bibinfo {author} {\bibfnamefont {W.~J.}\ \bibnamefont {Huggins}},
  \bibinfo {author} {\bibfnamefont {J.}~\bibnamefont {Lee}},\ and\ \bibinfo
  {author} {\bibfnamefont {R.}~\bibnamefont {Babbush}},\ }\bibfield  {title}
  {\bibinfo {title} {Matchgate shadows for fermionic quantum simulation},\
  }\href {https://doi.org/10.1007/s00220-023-04844-0} {\bibfield  {journal}
  {\bibinfo  {journal} {Communications in Mathematical Physics}\ }\textbf
  {\bibinfo {volume} {404}},\ \bibinfo {pages} {629–700} (\bibinfo {year}
  {2023})}\BibitemShut {NoStop}%
\bibitem [{\citenamefont {Huang}\ \emph {et~al.}(2020)\citenamefont {Huang},
  \citenamefont {Kueng},\ and\ \citenamefont {Preskill}}]{huang2020predicting}%
  \BibitemOpen
  \bibfield  {author} {\bibinfo {author} {\bibfnamefont {H.-Y.}\ \bibnamefont
  {Huang}}, \bibinfo {author} {\bibfnamefont {R.}~\bibnamefont {Kueng}},\ and\
  \bibinfo {author} {\bibfnamefont {J.}~\bibnamefont {Preskill}},\ }\bibfield
  {title} {\bibinfo {title} {Predicting many properties of a quantum system
  from very few measurements},\ }\href
  {https://doi.org/10.1038/s41567-020-0932-7} {\bibfield  {journal} {\bibinfo
  {journal} {Nature Physics}\ }\textbf {\bibinfo {volume} {16}},\ \bibinfo
  {pages} {1050} (\bibinfo {year} {2020})}\BibitemShut {NoStop}%
\bibitem [{\citenamefont {Huggins}\ \emph {et~al.}(2022)\citenamefont
  {Huggins}, \citenamefont {O’Gorman}, \citenamefont {Rubin}, \citenamefont
  {Reichman}, \citenamefont {Babbush},\ and\ \citenamefont
  {Lee}}]{huggins_unbiasing_2022}%
  \BibitemOpen
  \bibfield  {author} {\bibinfo {author} {\bibfnamefont {W.~J.}\ \bibnamefont
  {Huggins}}, \bibinfo {author} {\bibfnamefont {B.~A.}\ \bibnamefont
  {O’Gorman}}, \bibinfo {author} {\bibfnamefont {N.~C.}\ \bibnamefont
  {Rubin}}, \bibinfo {author} {\bibfnamefont {D.~R.}\ \bibnamefont {Reichman}},
  \bibinfo {author} {\bibfnamefont {R.}~\bibnamefont {Babbush}},\ and\ \bibinfo
  {author} {\bibfnamefont {J.}~\bibnamefont {Lee}},\ }\bibfield  {title}
  {\bibinfo {title} {Unbiasing fermionic quantum {Monte} {Carlo} with a quantum
  computer},\ }\href {https://doi.org/10.1038/s41586-021-04351-z} {\bibfield
  {journal} {\bibinfo  {journal} {Nature}\ }\textbf {\bibinfo {volume} {603}},\
  \bibinfo {pages} {416} (\bibinfo {year} {2022})},\ \bibinfo {note} {number:
  7901 Publisher: Nature Publishing Group}\BibitemShut {NoStop}%
\bibitem [{\citenamefont {Lee}\ and\ \citenamefont
  {Taylor}(1989)}]{lee1989diagnostic}%
  \BibitemOpen
  \bibfield  {author} {\bibinfo {author} {\bibfnamefont {T.~J.}\ \bibnamefont
  {Lee}}\ and\ \bibinfo {author} {\bibfnamefont {P.~R.}\ \bibnamefont
  {Taylor}},\ }\bibfield  {title} {\bibinfo {title} {A diagnostic for
  determining the quality of single-reference electron correlation methods},\
  }\href {https://doi.org/10.1002/QUA.560360824} {\bibfield  {journal}
  {\bibinfo  {journal} {International Journal of Quantum Chemistry}\ }\textbf
  {\bibinfo {volume} {36}},\ \bibinfo {pages} {199} (\bibinfo {year}
  {1989})}\BibitemShut {NoStop}%
\bibitem [{\citenamefont {Janssen}\ and\ \citenamefont
  {Nielsen}(1998)}]{janssen1998new}%
  \BibitemOpen
  \bibfield  {author} {\bibinfo {author} {\bibfnamefont {C.~L.}\ \bibnamefont
  {Janssen}}\ and\ \bibinfo {author} {\bibfnamefont {I.~M.}\ \bibnamefont
  {Nielsen}},\ }\bibfield  {title} {\bibinfo {title} {New diagnostics for
  coupled-cluster and møller–plesset perturbation theory},\ }\href
  {https://doi.org/10.1016/S0009-2614(98)00504-1} {\bibfield  {journal}
  {\bibinfo  {journal} {Chemical Physics Letters}\ }\textbf {\bibinfo {volume}
  {290}},\ \bibinfo {pages} {423} (\bibinfo {year} {1998})}\BibitemShut
  {NoStop}%
\bibitem [{\citenamefont
  {{\v{C}}{\'\i}{\v{z}}ek}(1966)}]{ciczek1966correlation}%
  \BibitemOpen
  \bibfield  {author} {\bibinfo {author} {\bibfnamefont {J.}~\bibnamefont
  {{\v{C}}{\'\i}{\v{z}}ek}},\ }\bibfield  {title} {\bibinfo {title} {On the
  correlation problem in atomic and molecular systems. calculation of
  wavefunction components in ursell-type expansion using quantum-field
  theoretical methods},\ }\href {https://doi.org/10.1063/1.1727484} {\bibfield
  {journal} {\bibinfo  {journal} {The Journal of Chemical Physics}\ }\textbf
  {\bibinfo {volume} {45}},\ \bibinfo {pages} {4256} (\bibinfo {year}
  {1966})}\BibitemShut {NoStop}%
\bibitem [{\citenamefont {Crawford}\ and\ \citenamefont
  {Schaefer}(2006)}]{crawford2006introduction}%
  \BibitemOpen
  \bibfield  {author} {\bibinfo {author} {\bibfnamefont {T.~D.}\ \bibnamefont
  {Crawford}}\ and\ \bibinfo {author} {\bibfnamefont {H.~F.}\ \bibnamefont
  {Schaefer}},\ }\bibfield  {title} {\bibinfo {title} {An introduction to
  coupled cluster theory for computational chemists},\ }\href
  {https://doi.org/10.1002/9780470125915.CH2} {\bibfield  {journal} {\bibinfo
  {journal} {Reviews in Computational Chemistry}\ }\textbf {\bibinfo {volume}
  {14}},\ \bibinfo {pages} {33} (\bibinfo {year} {2006})}\BibitemShut {NoStop}%
\bibitem [{\citenamefont {Shavitt}\ and\ \citenamefont
  {Bartlett}(2009)}]{shavitt2009many}%
  \BibitemOpen
  \bibfield  {author} {\bibinfo {author} {\bibfnamefont {I.}~\bibnamefont
  {Shavitt}}\ and\ \bibinfo {author} {\bibfnamefont {R.~J.}\ \bibnamefont
  {Bartlett}},\ }\href@noop {} {\emph {\bibinfo {title} {Many-body methods in
  chemistry and physics: MBPT and coupled-cluster theory}}}\ (\bibinfo
  {publisher} {Cambridge University Press},\ \bibinfo {year}
  {2009})\BibitemShut {NoStop}%
\bibitem [{\citenamefont {Purvis~III}\ and\ \citenamefont
  {Bartlett}(1982)}]{purvis1982full}%
  \BibitemOpen
  \bibfield  {author} {\bibinfo {author} {\bibfnamefont {G.~D.}\ \bibnamefont
  {Purvis~III}}\ and\ \bibinfo {author} {\bibfnamefont {R.~J.}\ \bibnamefont
  {Bartlett}},\ }\bibfield  {title} {\bibinfo {title} {A full coupled-cluster
  singles and doubles model: The inclusion of disconnected triples},\ }\href
  {https://doi.org/10.1063/1.443164} {\bibfield  {journal} {\bibinfo  {journal}
  {The Journal of Chemical Physics}\ }\textbf {\bibinfo {volume} {76}},\
  \bibinfo {pages} {1910} (\bibinfo {year} {1982})}\BibitemShut {NoStop}%
\bibitem [{\citenamefont {Raghavachari}\ \emph {et~al.}(1989)\citenamefont
  {Raghavachari}, \citenamefont {Trucks}, \citenamefont {Pople},\ and\
  \citenamefont {Head-Gordon}}]{raghavachari1989fifth}%
  \BibitemOpen
  \bibfield  {author} {\bibinfo {author} {\bibfnamefont {K.}~\bibnamefont
  {Raghavachari}}, \bibinfo {author} {\bibfnamefont {G.~W.}\ \bibnamefont
  {Trucks}}, \bibinfo {author} {\bibfnamefont {J.~A.}\ \bibnamefont {Pople}},\
  and\ \bibinfo {author} {\bibfnamefont {M.}~\bibnamefont {Head-Gordon}},\
  }\bibfield  {title} {\bibinfo {title} {A fifth-order perturbation comparison
  of electron correlation theories},\ }\href
  {https://doi.org/https://doi.org/10.1016/S0009-2614(89)87395-6} {\bibfield
  {journal} {\bibinfo  {journal} {Chemical Physics Letters}\ }\textbf {\bibinfo
  {volume} {157}},\ \bibinfo {pages} {479} (\bibinfo {year}
  {1989})}\BibitemShut {NoStop}%
\bibitem [{\citenamefont {Bartlett}(2005)}]{bartlett2005how}%
  \BibitemOpen
  \bibfield  {author} {\bibinfo {author} {\bibfnamefont {R.~J.}\ \bibnamefont
  {Bartlett}},\ }\bibfield  {title} {\bibinfo {title} {How and why
  coupled-cluster theory became the pre-eminent method in an ab into quantum
  chemistry},\ }in\ \href
  {https://doi.org/https://doi.org/10.1016/B978-044451719-7/50085-8} {\emph
  {\bibinfo {booktitle} {Theory and Applications of Computational Chemistry}}}\
  (\bibinfo  {publisher} {Elsevier},\ \bibinfo {year} {2005})\ pp.\ \bibinfo
  {pages} {1191--1221}\BibitemShut {NoStop}%
\bibitem [{\citenamefont {Guo}\ \emph {et~al.}(2018)\citenamefont {Guo},
  \citenamefont {Riplinger}, \citenamefont {Becker}, \citenamefont {Liakos},
  \citenamefont {Minenkov}, \citenamefont {Cavallo},\ and\ \citenamefont
  {Neese}}]{guo2018communication}%
  \BibitemOpen
  \bibfield  {author} {\bibinfo {author} {\bibfnamefont {Y.}~\bibnamefont
  {Guo}}, \bibinfo {author} {\bibfnamefont {C.}~\bibnamefont {Riplinger}},
  \bibinfo {author} {\bibfnamefont {U.}~\bibnamefont {Becker}}, \bibinfo
  {author} {\bibfnamefont {D.~G.}\ \bibnamefont {Liakos}}, \bibinfo {author}
  {\bibfnamefont {Y.}~\bibnamefont {Minenkov}}, \bibinfo {author}
  {\bibfnamefont {L.}~\bibnamefont {Cavallo}},\ and\ \bibinfo {author}
  {\bibfnamefont {F.}~\bibnamefont {Neese}},\ }\bibfield  {title} {\bibinfo
  {title} {Communication: An improved linear scaling perturbative triples
  correction for the domain based local pair-natural orbital based singles and
  doubles coupled cluster method {[DLPNO-CCSD(T)]}},\ }\bibfield  {journal}
  {\bibinfo  {journal} {The Journal of chemical physics}\ }\textbf {\bibinfo
  {volume} {148}},\ \href {https://doi.org/10.1063/1.5011798}
  {10.1063/1.5011798} (\bibinfo {year} {2018})\BibitemShut {NoStop}%
\bibitem [{\citenamefont {Sherrill}\ and\ \citenamefont
  {Schaefer~III}(1999)}]{sherrill1999configuration}%
  \BibitemOpen
  \bibfield  {author} {\bibinfo {author} {\bibfnamefont {C.~D.}\ \bibnamefont
  {Sherrill}}\ and\ \bibinfo {author} {\bibfnamefont {H.~F.}\ \bibnamefont
  {Schaefer~III}},\ }\bibfield  {title} {\bibinfo {title} {The configuration
  interaction method: Advances in highly correlated approaches},\ }in\ \href
  {https://doi.org/https://doi.org/10.1016/S0065-3276(08)60532-8} {\emph
  {\bibinfo {booktitle} {Advances in Quantum Chemistry}}},\ Vol.~\bibinfo
  {volume} {34}\ (\bibinfo  {publisher} {Elsevier},\ \bibinfo {year} {1999})\
  pp.\ \bibinfo {pages} {143--269}\BibitemShut {NoStop}%
\bibitem [{\citenamefont {Liakos}\ \emph {et~al.}(2020)\citenamefont {Liakos},
  \citenamefont {Guo},\ and\ \citenamefont {Neese}}]{liakos2020comprehensive}%
  \BibitemOpen
  \bibfield  {author} {\bibinfo {author} {\bibfnamefont {D.~G.}\ \bibnamefont
  {Liakos}}, \bibinfo {author} {\bibfnamefont {Y.}~\bibnamefont {Guo}},\ and\
  \bibinfo {author} {\bibfnamefont {F.}~\bibnamefont {Neese}},\ }\bibfield
  {title} {\bibinfo {title} {Comprehensive benchmark results for the domain
  based local pair natural orbital coupled cluster method (dlpno-ccsd(t)) for
  closed- and open-shell systems},\ }\href
  {https://pubs.acs.org/doi/full/10.1021/acs.jpca.9b05734} {\bibfield
  {journal} {\bibinfo  {journal} {Journal of Physical Chemistry A}\ }\textbf
  {\bibinfo {volume} {124}},\ \bibinfo {pages} {90} (\bibinfo {year}
  {2020})}\BibitemShut {NoStop}%
\bibitem [{\citenamefont {M{\"o}rchen}\ \emph {et~al.}(2020)\citenamefont
  {M{\"o}rchen}, \citenamefont {Freitag},\ and\ \citenamefont
  {Reiher}}]{morchen2020tailored}%
  \BibitemOpen
  \bibfield  {author} {\bibinfo {author} {\bibfnamefont {M.}~\bibnamefont
  {M{\"o}rchen}}, \bibinfo {author} {\bibfnamefont {L.}~\bibnamefont
  {Freitag}},\ and\ \bibinfo {author} {\bibfnamefont {M.}~\bibnamefont
  {Reiher}},\ }\bibfield  {title} {\bibinfo {title} {Tailored coupled cluster
  theory in varying correlation regimes},\ }\href
  {https://doi.org/10.1063/5.0032661} {\bibfield  {journal} {\bibinfo
  {journal} {The Journal of Chemical Physics}\ }\textbf {\bibinfo {volume}
  {153}},\ \bibinfo {pages} {244113} (\bibinfo {year} {2020})}\BibitemShut
  {NoStop}%
\bibitem [{\citenamefont {Oliphant}\ and\ \citenamefont
  {Adamowicz}(1993)}]{oliphant1993multireference}%
  \BibitemOpen
  \bibfield  {author} {\bibinfo {author} {\bibfnamefont {N.}~\bibnamefont
  {Oliphant}}\ and\ \bibinfo {author} {\bibfnamefont {L.}~\bibnamefont
  {Adamowicz}},\ }\bibfield  {title} {\bibinfo {title} {Multireference coupled
  cluster method for electronic structure of molecules},\ }\href
  {https://doi.org/10.1080/01442359309353285} {\bibfield  {journal} {\bibinfo
  {journal} {International Reviews in Physical Chemistry}\ }\textbf {\bibinfo
  {volume} {12}},\ \bibinfo {pages} {339} (\bibinfo {year} {1993})}\BibitemShut
  {NoStop}%
\bibitem [{\citenamefont {Knowles}\ and\ \citenamefont
  {Handy}(1984)}]{knowles1984new}%
  \BibitemOpen
  \bibfield  {author} {\bibinfo {author} {\bibfnamefont {P.~J.}\ \bibnamefont
  {Knowles}}\ and\ \bibinfo {author} {\bibfnamefont {N.~C.}\ \bibnamefont
  {Handy}},\ }\bibfield  {title} {\bibinfo {title} {A new determinant-based
  full configuration interaction method},\ }\href
  {https://doi.org/https://doi.org/10.1016/0009-2614(84)85513-X} {\bibfield
  {journal} {\bibinfo  {journal} {Chemical Physics Letters}\ }\textbf {\bibinfo
  {volume} {111}},\ \bibinfo {pages} {315} (\bibinfo {year}
  {1984})}\BibitemShut {NoStop}%
\bibitem [{\citenamefont {Olsen}\ \emph {et~al.}(1988)\citenamefont {Olsen},
  \citenamefont {Roos}, \citenamefont {Jo/rgensen},\ and\ \citenamefont
  {Jensen}}]{olsen1988determinant}%
  \BibitemOpen
  \bibfield  {author} {\bibinfo {author} {\bibfnamefont {J.}~\bibnamefont
  {Olsen}}, \bibinfo {author} {\bibfnamefont {B.~O.}\ \bibnamefont {Roos}},
  \bibinfo {author} {\bibfnamefont {P.}~\bibnamefont {Jo/rgensen}},\ and\
  \bibinfo {author} {\bibfnamefont {H.~J.~A.}\ \bibnamefont {Jensen}},\
  }\bibfield  {title} {\bibinfo {title} {Determinant based configuration
  interaction algorithms for complete and restricted configuration interaction
  spaces},\ }\href {https://doi.org/10.1063/1.455063} {\bibfield  {journal}
  {\bibinfo  {journal} {The Journal of chemical physics}\ }\textbf {\bibinfo
  {volume} {89}},\ \bibinfo {pages} {2185} (\bibinfo {year}
  {1988})}\BibitemShut {NoStop}%
\bibitem [{\citenamefont {Knowles}\ and\ \citenamefont
  {Handy}(1989)}]{knowles1989determinant}%
  \BibitemOpen
  \bibfield  {author} {\bibinfo {author} {\bibfnamefont {P.~J.}\ \bibnamefont
  {Knowles}}\ and\ \bibinfo {author} {\bibfnamefont {N.~C.}\ \bibnamefont
  {Handy}},\ }\bibfield  {title} {\bibinfo {title} {A determinant based full
  configuration interaction program},\ }\href
  {https://doi.org/https://doi.org/10.1016/0010-4655(89)90033-7} {\bibfield
  {journal} {\bibinfo  {journal} {Computer physics communications}\ }\textbf
  {\bibinfo {volume} {54}},\ \bibinfo {pages} {75} (\bibinfo {year}
  {1989})}\BibitemShut {NoStop}%
\bibitem [{\citenamefont {Zarrabian}\ \emph {et~al.}(1989)\citenamefont
  {Zarrabian}, \citenamefont {Sarma},\ and\ \citenamefont
  {Paldus}}]{zarrabian1989vectorizable}%
  \BibitemOpen
  \bibfield  {author} {\bibinfo {author} {\bibfnamefont {S.}~\bibnamefont
  {Zarrabian}}, \bibinfo {author} {\bibfnamefont {C.}~\bibnamefont {Sarma}},\
  and\ \bibinfo {author} {\bibfnamefont {J.}~\bibnamefont {Paldus}},\
  }\bibfield  {title} {\bibinfo {title} {Vectorizable approach to molecular ci
  problems using determinantal basis},\ }\href
  {https://doi.org/https://doi.org/10.1016/0009-2614(89)85346-1} {\bibfield
  {journal} {\bibinfo  {journal} {Chemical physics letters}\ }\textbf {\bibinfo
  {volume} {155}},\ \bibinfo {pages} {183} (\bibinfo {year}
  {1989})}\BibitemShut {NoStop}%
\bibitem [{\citenamefont {Bendazzoli}\ and\ \citenamefont
  {Evangelisti}(1993)}]{bendazzoli1993vector}%
  \BibitemOpen
  \bibfield  {author} {\bibinfo {author} {\bibfnamefont {G.~L.}\ \bibnamefont
  {Bendazzoli}}\ and\ \bibinfo {author} {\bibfnamefont {S.}~\bibnamefont
  {Evangelisti}},\ }\bibfield  {title} {\bibinfo {title} {A vector and parallel
  full configuration interaction algorithm},\ }\href
  {https://doi.org/10.1063/1.464087} {\bibfield  {journal} {\bibinfo  {journal}
  {The Journal of chemical physics}\ }\textbf {\bibinfo {volume} {98}},\
  \bibinfo {pages} {3141} (\bibinfo {year} {1993})}\BibitemShut {NoStop}%
\bibitem [{\citenamefont {Stein}\ \emph {et~al.}(2016)\citenamefont {Stein},
  \citenamefont {von Burg},\ and\ \citenamefont {Reiher}}]{stein2016delicate}%
  \BibitemOpen
  \bibfield  {author} {\bibinfo {author} {\bibfnamefont {C.~J.}\ \bibnamefont
  {Stein}}, \bibinfo {author} {\bibfnamefont {V.}~\bibnamefont {von Burg}},\
  and\ \bibinfo {author} {\bibfnamefont {M.}~\bibnamefont {Reiher}},\
  }\bibfield  {title} {\bibinfo {title} {The delicate balance of static and
  dynamic electron correlation},\ }\href
  {https://doi.org/10.1021/acs.jctc.6b00528} {\bibfield  {journal} {\bibinfo
  {journal} {Journal of Chemical Theory and Computation}\ }\textbf {\bibinfo
  {volume} {12}},\ \bibinfo {pages} {3764–3773} (\bibinfo {year}
  {2016})}\BibitemShut {NoStop}%
\bibitem [{\citenamefont {Harrison}(1991)}]{harrison1991approximating}%
  \BibitemOpen
  \bibfield  {author} {\bibinfo {author} {\bibfnamefont {R.~J.}\ \bibnamefont
  {Harrison}},\ }\bibfield  {title} {\bibinfo {title} {Approximating full
  configuration interaction with selected configuration interaction and
  perturbation theory},\ }\href {https://doi.org/10.1063/1.460537} {\bibfield
  {journal} {\bibinfo  {journal} {The Journal of chemical physics}\ }\textbf
  {\bibinfo {volume} {94}},\ \bibinfo {pages} {5021} (\bibinfo {year}
  {1991})}\BibitemShut {NoStop}%
\bibitem [{\citenamefont {Holmes}\ \emph {et~al.}(2016)\citenamefont {Holmes},
  \citenamefont {Tubman},\ and\ \citenamefont {Umrigar}}]{holmes2016heat}%
  \BibitemOpen
  \bibfield  {author} {\bibinfo {author} {\bibfnamefont {A.~A.}\ \bibnamefont
  {Holmes}}, \bibinfo {author} {\bibfnamefont {N.~M.}\ \bibnamefont {Tubman}},\
  and\ \bibinfo {author} {\bibfnamefont {C.}~\bibnamefont {Umrigar}},\
  }\bibfield  {title} {\bibinfo {title} {Heat-bath configuration interaction:
  An efficient selected configuration interaction algorithm inspired by
  heat-bath sampling},\ }\href {https://doi.org/10.1021/acs.jctc.6b00407}
  {\bibfield  {journal} {\bibinfo  {journal} {Journal of chemical theory and
  computation}\ }\textbf {\bibinfo {volume} {12}},\ \bibinfo {pages} {3674}
  (\bibinfo {year} {2016})}\BibitemShut {NoStop}%
\bibitem [{\citenamefont {Sharma}\ \emph {et~al.}(2017)\citenamefont {Sharma},
  \citenamefont {Holmes}, \citenamefont {Jeanmairet}, \citenamefont {Alavi},\
  and\ \citenamefont {Umrigar}}]{sharma2017semistochastic}%
  \BibitemOpen
  \bibfield  {author} {\bibinfo {author} {\bibfnamefont {S.}~\bibnamefont
  {Sharma}}, \bibinfo {author} {\bibfnamefont {A.~A.}\ \bibnamefont {Holmes}},
  \bibinfo {author} {\bibfnamefont {G.}~\bibnamefont {Jeanmairet}}, \bibinfo
  {author} {\bibfnamefont {A.}~\bibnamefont {Alavi}},\ and\ \bibinfo {author}
  {\bibfnamefont {C.~J.}\ \bibnamefont {Umrigar}},\ }\bibfield  {title}
  {\bibinfo {title} {Semistochastic heat-bath configuration interaction method:
  Selected configuration interaction with semistochastic perturbation theory},\
  }\href {https://doi.org/10.1021/acs.jctc.6b01028} {\bibfield  {journal}
  {\bibinfo  {journal} {Journal of chemical theory and computation}\ }\textbf
  {\bibinfo {volume} {13}},\ \bibinfo {pages} {1595} (\bibinfo {year}
  {2017})}\BibitemShut {NoStop}%
\bibitem [{\citenamefont {Schriber}\ and\ \citenamefont
  {Evangelista}(2016)}]{schriber2016communication}%
  \BibitemOpen
  \bibfield  {author} {\bibinfo {author} {\bibfnamefont {J.~B.}\ \bibnamefont
  {Schriber}}\ and\ \bibinfo {author} {\bibfnamefont {F.~A.}\ \bibnamefont
  {Evangelista}},\ }\bibfield  {title} {\bibinfo {title} {Communication: An
  adaptive configuration interaction approach for strongly correlated electrons
  with tunable accuracy},\ }\bibfield  {journal} {\bibinfo  {journal} {The
  Journal of chemical physics}\ }\textbf {\bibinfo {volume} {144}},\ \href
  {https://doi.org/10.1063/1.4948308} {10.1063/1.4948308} (\bibinfo {year}
  {2016})\BibitemShut {NoStop}%
\bibitem [{\citenamefont {Tubman}\ \emph {et~al.}(2020)\citenamefont {Tubman},
  \citenamefont {Freeman}, \citenamefont {Levine}, \citenamefont {Hait},
  \citenamefont {Head-Gordon},\ and\ \citenamefont
  {Whaley}}]{tubman2020modern}%
  \BibitemOpen
  \bibfield  {author} {\bibinfo {author} {\bibfnamefont {N.~M.}\ \bibnamefont
  {Tubman}}, \bibinfo {author} {\bibfnamefont {C.~D.}\ \bibnamefont {Freeman}},
  \bibinfo {author} {\bibfnamefont {D.~S.}\ \bibnamefont {Levine}}, \bibinfo
  {author} {\bibfnamefont {D.}~\bibnamefont {Hait}}, \bibinfo {author}
  {\bibfnamefont {M.}~\bibnamefont {Head-Gordon}},\ and\ \bibinfo {author}
  {\bibfnamefont {K.~B.}\ \bibnamefont {Whaley}},\ }\bibfield  {title}
  {\bibinfo {title} {Modern approaches to exact diagonalization and selected
  configuration interaction with the adaptive sampling {CI} method},\ }\href
  {https://doi.org/10.1021/acs.jctc.8b00536} {\bibfield  {journal} {\bibinfo
  {journal} {Journal of chemical theory and computation}\ }\textbf {\bibinfo
  {volume} {16}},\ \bibinfo {pages} {2139} (\bibinfo {year}
  {2020})}\BibitemShut {NoStop}%
\bibitem [{\citenamefont {Zhai}\ \emph {et~al.}(2023)\citenamefont {Zhai},
  \citenamefont {Larsson}, \citenamefont {Lee}, \citenamefont {Cui},
  \citenamefont {Zhu}, \citenamefont {Sun}, \citenamefont {Peng}, \citenamefont
  {Peng}, \citenamefont {Liao}, \citenamefont {Tölle}, \citenamefont {Yang},
  \citenamefont {Li},\ and\ \citenamefont {Chan}}]{zhai2023block2}%
  \BibitemOpen
  \bibfield  {author} {\bibinfo {author} {\bibfnamefont {H.}~\bibnamefont
  {Zhai}}, \bibinfo {author} {\bibfnamefont {H.~R.}\ \bibnamefont {Larsson}},
  \bibinfo {author} {\bibfnamefont {S.}~\bibnamefont {Lee}}, \bibinfo {author}
  {\bibfnamefont {Z.-H.}\ \bibnamefont {Cui}}, \bibinfo {author} {\bibfnamefont
  {T.}~\bibnamefont {Zhu}}, \bibinfo {author} {\bibfnamefont {C.}~\bibnamefont
  {Sun}}, \bibinfo {author} {\bibfnamefont {L.}~\bibnamefont {Peng}}, \bibinfo
  {author} {\bibfnamefont {R.}~\bibnamefont {Peng}}, \bibinfo {author}
  {\bibfnamefont {K.}~\bibnamefont {Liao}}, \bibinfo {author} {\bibfnamefont
  {J.}~\bibnamefont {Tölle}}, \bibinfo {author} {\bibfnamefont
  {J.}~\bibnamefont {Yang}}, \bibinfo {author} {\bibfnamefont {S.}~\bibnamefont
  {Li}},\ and\ \bibinfo {author} {\bibfnamefont {G.~K.-L.}\ \bibnamefont
  {Chan}},\ }\href@noop {} {\bibinfo {title} {Block2: a comprehensive open
  source framework to develop and apply state-of-the-art dmrg algorithms in
  electronic structure and beyond}} (\bibinfo {year} {2023}),\ \Eprint
  {https://arxiv.org/abs/2310.03920} {arXiv:2310.03920 [physics.chem-ph]}
  \BibitemShut {NoStop}%
\bibitem [{\citenamefont {Angeli}\ \emph
  {et~al.}(2001{\natexlab{a}})\citenamefont {Angeli}, \citenamefont
  {Cimiraglia}, \citenamefont {Evangelisti}, \citenamefont {Leininger},\ and\
  \citenamefont {Malrieu}}]{angeli2001introduction}%
  \BibitemOpen
  \bibfield  {author} {\bibinfo {author} {\bibfnamefont {C.}~\bibnamefont
  {Angeli}}, \bibinfo {author} {\bibfnamefont {R.}~\bibnamefont {Cimiraglia}},
  \bibinfo {author} {\bibfnamefont {S.}~\bibnamefont {Evangelisti}}, \bibinfo
  {author} {\bibfnamefont {T.}~\bibnamefont {Leininger}},\ and\ \bibinfo
  {author} {\bibfnamefont {J.~P.}\ \bibnamefont {Malrieu}},\ }\bibfield
  {title} {\bibinfo {title} {Introduction of n-electron valence states for
  multireference perturbation theory},\ }\href
  {https://doi.org/10.1063/1.1361246} {\bibfield  {journal} {\bibinfo
  {journal} {The Journal of Chemical Physics}\ }\textbf {\bibinfo {volume}
  {114}},\ \bibinfo {pages} {10252} (\bibinfo {year}
  {2001}{\natexlab{a}})}\BibitemShut {NoStop}%
\bibitem [{\citenamefont {Angeli}\ \emph
  {et~al.}(2001{\natexlab{b}})\citenamefont {Angeli}, \citenamefont
  {Cimiraglia},\ and\ \citenamefont {Malrieu}}]{angeli2001nelectron}%
  \BibitemOpen
  \bibfield  {author} {\bibinfo {author} {\bibfnamefont {C.}~\bibnamefont
  {Angeli}}, \bibinfo {author} {\bibfnamefont {R.}~\bibnamefont {Cimiraglia}},\
  and\ \bibinfo {author} {\bibfnamefont {J.~P.}\ \bibnamefont {Malrieu}},\
  }\bibfield  {title} {\bibinfo {title} {{N-electron} valence state
  perturbation theory: a fast implementation of the strongly contracted
  variant},\ }\href {https://doi.org/10.1016/S0009-2614(01)01303-3} {\bibfield
  {journal} {\bibinfo  {journal} {Chemical Physics Letters}\ }\textbf {\bibinfo
  {volume} {350}},\ \bibinfo {pages} {297} (\bibinfo {year}
  {2001}{\natexlab{b}})}\BibitemShut {NoStop}%
\bibitem [{\citenamefont {Angeli}\ \emph {et~al.}(2002)\citenamefont {Angeli},
  \citenamefont {Cimiraglia},\ and\ \citenamefont
  {Malrieu}}]{angeli2002nelectron}%
  \BibitemOpen
  \bibfield  {author} {\bibinfo {author} {\bibfnamefont {C.}~\bibnamefont
  {Angeli}}, \bibinfo {author} {\bibfnamefont {R.}~\bibnamefont {Cimiraglia}},\
  and\ \bibinfo {author} {\bibfnamefont {J.~P.}\ \bibnamefont {Malrieu}},\
  }\bibfield  {title} {\bibinfo {title} {{N-electron} valence state
  perturbation theory: A spinless formulation and an efficient implementation
  of the strongly contracted and of the partially contracted variants},\ }\href
  {https://doi.org/10.1063/1.1515317} {\bibfield  {journal} {\bibinfo
  {journal} {The Journal of Chemical Physics}\ }\textbf {\bibinfo {volume}
  {117}},\ \bibinfo {pages} {9138} (\bibinfo {year} {2002})}\BibitemShut
  {NoStop}%
\bibitem [{\citenamefont {Pulay}(2011)}]{pulay2011perspective}%
  \BibitemOpen
  \bibfield  {author} {\bibinfo {author} {\bibfnamefont {P.}~\bibnamefont
  {Pulay}},\ }\bibfield  {title} {\bibinfo {title} {A perspective on the
  {CASPT2} method},\ }\href {https://doi.org/https://doi.org/10.1002/qua.23052}
  {\bibfield  {journal} {\bibinfo  {journal} {International Journal of Quantum
  Chemistry}\ }\textbf {\bibinfo {volume} {111}},\ \bibinfo {pages} {3273}
  (\bibinfo {year} {2011})}\BibitemShut {NoStop}%
\bibitem [{\citenamefont {Battaglia}\ \emph {et~al.}(2023)\citenamefont
  {Battaglia}, \citenamefont {Galv{\'{a} }n},\ and\ \citenamefont
  {Lindh}}]{battaglia2023multiconfigurational}%
  \BibitemOpen
  \bibfield  {author} {\bibinfo {author} {\bibfnamefont {S.}~\bibnamefont
  {Battaglia}}, \bibinfo {author} {\bibfnamefont {I.~F.}\ \bibnamefont
  {Galv{\'{a} }n}},\ and\ \bibinfo {author} {\bibfnamefont {R.}~\bibnamefont
  {Lindh}},\ }\bibfield  {title} {\bibinfo {title} {Multiconfigurational
  quantum chemistry: The {CASPT}2 method},\ }in\ \href
  {https://doi.org/10.1016/b978-0-323-91738-4.00016-6} {\emph {\bibinfo
  {booktitle} {Theoretical and Computational Photochemistry}}}\ (\bibinfo
  {publisher} {Elsevier},\ \bibinfo {year} {2023})\ pp.\ \bibinfo {pages}
  {135--162}\BibitemShut {NoStop}%
\bibitem [{\citenamefont {Paldus}\ \emph {et~al.}(1984)\citenamefont {Paldus},
  \citenamefont {{\v{C}}{\'\i}{\v{z}}ek},\ and\ \citenamefont
  {Takahashi}}]{paldus1984approximate}%
  \BibitemOpen
  \bibfield  {author} {\bibinfo {author} {\bibfnamefont {J.}~\bibnamefont
  {Paldus}}, \bibinfo {author} {\bibfnamefont {J.}~\bibnamefont
  {{\v{C}}{\'\i}{\v{z}}ek}},\ and\ \bibinfo {author} {\bibfnamefont
  {M.}~\bibnamefont {Takahashi}},\ }\bibfield  {title} {\bibinfo {title}
  {Approximate account of the connected quadruply excited clusters in the
  coupled-pair many-electron theory},\ }\href
  {https://doi.org/10.1103/PhysRevA.30.2193} {\bibfield  {journal} {\bibinfo
  {journal} {Physical Review A}\ }\textbf {\bibinfo {volume} {30}},\ \bibinfo
  {pages} {2193} (\bibinfo {year} {1984})}\BibitemShut {NoStop}%
\bibitem [{\citenamefont {Paldus}\ and\ \citenamefont
  {Planelles}(1994)}]{paldus1994valence}%
  \BibitemOpen
  \bibfield  {author} {\bibinfo {author} {\bibfnamefont {J.}~\bibnamefont
  {Paldus}}\ and\ \bibinfo {author} {\bibfnamefont {J.}~\bibnamefont
  {Planelles}},\ }\bibfield  {title} {\bibinfo {title} {Valence bond corrected
  single reference coupled cluster approach: {I.} general formalism},\ }\href
  {https://doi.org/10.1007/BF01123868} {\bibfield  {journal} {\bibinfo
  {journal} {Theoretica chimica acta}\ }\textbf {\bibinfo {volume} {89}},\
  \bibinfo {pages} {13} (\bibinfo {year} {1994})}\BibitemShut {NoStop}%
\bibitem [{\citenamefont {Planelles}\ \emph
  {et~al.}(1994{\natexlab{a}})\citenamefont {Planelles}, \citenamefont
  {Paldus},\ and\ \citenamefont {Li}}]{planelles1994valence}%
  \BibitemOpen
  \bibfield  {author} {\bibinfo {author} {\bibfnamefont {J.}~\bibnamefont
  {Planelles}}, \bibinfo {author} {\bibfnamefont {J.}~\bibnamefont {Paldus}},\
  and\ \bibinfo {author} {\bibfnamefont {X.}~\bibnamefont {Li}},\ }\bibfield
  {title} {\bibinfo {title} {Valence bond corrected single reference coupled
  cluster approach: {II.} application to ppp model systems},\ }\href
  {https://doi.org/10.1007/BF01167280} {\bibfield  {journal} {\bibinfo
  {journal} {Theoretica chimica acta}\ }\textbf {\bibinfo {volume} {89}},\
  \bibinfo {pages} {33} (\bibinfo {year} {1994}{\natexlab{a}})}\BibitemShut
  {NoStop}%
\bibitem [{\citenamefont {Planelles}\ \emph
  {et~al.}(1994{\natexlab{b}})\citenamefont {Planelles}, \citenamefont
  {Paldus},\ and\ \citenamefont {Li}}]{planelles1994valence3}%
  \BibitemOpen
  \bibfield  {author} {\bibinfo {author} {\bibfnamefont {J.}~\bibnamefont
  {Planelles}}, \bibinfo {author} {\bibfnamefont {J.}~\bibnamefont {Paldus}},\
  and\ \bibinfo {author} {\bibfnamefont {X.}~\bibnamefont {Li}},\ }\bibfield
  {title} {\bibinfo {title} {Valence bond corrected single reference coupled
  cluster approach: {III.} simple model of bond breaking or formation},\ }\href
  {https://doi.org/10.1007/BF01123870} {\bibfield  {journal} {\bibinfo
  {journal} {Theoretica chimica acta}\ }\textbf {\bibinfo {volume} {89}},\
  \bibinfo {pages} {59} (\bibinfo {year} {1994}{\natexlab{b}})}\BibitemShut
  {NoStop}%
\bibitem [{\citenamefont {Li}\ and\ \citenamefont
  {Paldus}(1997)}]{li1997reduced}%
  \BibitemOpen
  \bibfield  {author} {\bibinfo {author} {\bibfnamefont {X.}~\bibnamefont
  {Li}}\ and\ \bibinfo {author} {\bibfnamefont {J.}~\bibnamefont {Paldus}},\
  }\bibfield  {title} {\bibinfo {title} {Reduced multireference {CCSD} method:
  An effective approach to quasidegenerate states},\ }\href
  {https://doi.org/10.1063/1.474289} {\bibfield  {journal} {\bibinfo  {journal}
  {The Journal of chemical physics}\ }\textbf {\bibinfo {volume} {107}},\
  \bibinfo {pages} {6257} (\bibinfo {year} {1997})}\BibitemShut {NoStop}%
\bibitem [{\citenamefont {Monkhorst}(1977)}]{monkhorst1977calculation}%
  \BibitemOpen
  \bibfield  {author} {\bibinfo {author} {\bibfnamefont {H.~J.}\ \bibnamefont
  {Monkhorst}},\ }\bibfield  {title} {\bibinfo {title} {Calculation of
  properties with the coupled-cluster method},\ }\href
  {https://doi.org/10.1002/QUA.560120850} {\bibfield  {journal} {\bibinfo
  {journal} {International Journal of Quantum Chemistry}\ }\textbf {\bibinfo
  {volume} {12}},\ \bibinfo {pages} {421} (\bibinfo {year} {1977})}\BibitemShut
  {NoStop}%
\bibitem [{\citenamefont {Lehtola}\ \emph {et~al.}(2017)\citenamefont
  {Lehtola}, \citenamefont {Tubman}, \citenamefont {Whaley},\ and\
  \citenamefont {Head-Gordon}}]{lehtola2017cluster}%
  \BibitemOpen
  \bibfield  {author} {\bibinfo {author} {\bibfnamefont {S.}~\bibnamefont
  {Lehtola}}, \bibinfo {author} {\bibfnamefont {N.~M.}\ \bibnamefont {Tubman}},
  \bibinfo {author} {\bibfnamefont {K.~B.}\ \bibnamefont {Whaley}},\ and\
  \bibinfo {author} {\bibfnamefont {M.}~\bibnamefont {Head-Gordon}},\
  }\bibfield  {title} {\bibinfo {title} {Cluster decomposition of full
  configuration interaction wave functions: A tool for chemical interpretation
  of systems with strong correlation},\ }\bibfield  {journal} {\bibinfo
  {journal} {The Journal of chemical physics}\ }\textbf {\bibinfo {volume}
  {147}},\ \href {https://doi.org/10.1063/1.4996044} {10.1063/1.4996044}
  (\bibinfo {year} {2017})\BibitemShut {NoStop}%
\bibitem [{\citenamefont {Hino}\ \emph {et~al.}(2006)\citenamefont {Hino},
  \citenamefont {Kinoshita}, \citenamefont {Chan},\ and\ \citenamefont
  {Bartlett}}]{hino2006tailored}%
  \BibitemOpen
  \bibfield  {author} {\bibinfo {author} {\bibfnamefont {O.}~\bibnamefont
  {Hino}}, \bibinfo {author} {\bibfnamefont {T.}~\bibnamefont {Kinoshita}},
  \bibinfo {author} {\bibfnamefont {G.~K.~L.}\ \bibnamefont {Chan}},\ and\
  \bibinfo {author} {\bibfnamefont {R.~J.}\ \bibnamefont {Bartlett}},\
  }\bibfield  {title} {\bibinfo {title} {Tailored coupled cluster singles and
  doubles method applied to calculations on molecular structure and harmonic
  vibrational frequencies of ozone},\ }\bibfield  {journal} {\bibinfo
  {journal} {Journal of Chemical Physics}\ }\textbf {\bibinfo {volume} {124}},\
  \href {https://doi.org/10.1063/1.2180775/186884} {10.1063/1.2180775/186884}
  (\bibinfo {year} {2006})\BibitemShut {NoStop}%
\bibitem [{\citenamefont {Veis}\ \emph {et~al.}(2016)\citenamefont {Veis},
  \citenamefont {Antalík}, \citenamefont {Brabec}, \citenamefont {Neese},
  \citenamefont {Örs Legeza},\ and\ \citenamefont
  {Pittner}}]{veis2016coupled}%
  \BibitemOpen
  \bibfield  {author} {\bibinfo {author} {\bibfnamefont {L.}~\bibnamefont
  {Veis}}, \bibinfo {author} {\bibfnamefont {A.}~\bibnamefont {Antalík}},
  \bibinfo {author} {\bibfnamefont {J.}~\bibnamefont {Brabec}}, \bibinfo
  {author} {\bibfnamefont {F.}~\bibnamefont {Neese}}, \bibinfo {author}
  {\bibnamefont {Örs Legeza}},\ and\ \bibinfo {author} {\bibfnamefont
  {J.}~\bibnamefont {Pittner}},\ }\bibfield  {title} {\bibinfo {title} {Coupled
  cluster method with single and double excitations tailored by matrix product
  state wave functions},\ }\href
  {https://doi.org/10.1021/ACS.JPCLETT.6B01908/ASSET/IMAGES/LARGE/JZ-2016-01908H_0002.JPEG}
  {\bibfield  {journal} {\bibinfo  {journal} {Journal of Physical Chemistry
  Letters}\ }\textbf {\bibinfo {volume} {7}},\ \bibinfo {pages} {4072}
  (\bibinfo {year} {2016})}\BibitemShut {NoStop}%
\bibitem [{\citenamefont {Leszczyk}\ \emph {et~al.}(2022)\citenamefont
  {Leszczyk}, \citenamefont {Máté}, \citenamefont {Örs Legeza},\ and\
  \citenamefont {Boguslawski}}]{leszczyk2022assessing}%
  \BibitemOpen
  \bibfield  {author} {\bibinfo {author} {\bibfnamefont {A.}~\bibnamefont
  {Leszczyk}}, \bibinfo {author} {\bibfnamefont {M.}~\bibnamefont {Máté}},
  \bibinfo {author} {\bibnamefont {Örs Legeza}},\ and\ \bibinfo {author}
  {\bibfnamefont {K.}~\bibnamefont {Boguslawski}},\ }\bibfield  {title}
  {\bibinfo {title} {Assessing the accuracy of tailored coupled cluster methods
  corrected by electronic wave functions of polynomial cost},\ }\href
  {https://doi.org/10.1021/ACS.JCTC.1C00284/ASSET/IMAGES/LARGE/CT1C00284_0010.JPEG}
  {\bibfield  {journal} {\bibinfo  {journal} {Journal of Chemical Theory and
  Computation}\ }\textbf {\bibinfo {volume} {18}},\ \bibinfo {pages} {96}
  (\bibinfo {year} {2022})}\BibitemShut {NoStop}%
\bibitem [{\citenamefont {Ravi}\ \emph {et~al.}(2023)\citenamefont {Ravi},
  \citenamefont {Perera}, \citenamefont {Park},\ and\ \citenamefont
  {Bartlett}}]{ravi2023excited}%
  \BibitemOpen
  \bibfield  {author} {\bibinfo {author} {\bibfnamefont {M.}~\bibnamefont
  {Ravi}}, \bibinfo {author} {\bibfnamefont {A.}~\bibnamefont {Perera}},
  \bibinfo {author} {\bibfnamefont {Y.~C.}\ \bibnamefont {Park}},\ and\
  \bibinfo {author} {\bibfnamefont {R.~J.}\ \bibnamefont {Bartlett}},\
  }\bibfield  {title} {\bibinfo {title} {Excited states with pair coupled
  cluster doubles tailored coupled cluster theory},\ }\bibfield  {journal}
  {\bibinfo  {journal} {The Journal of Chemical Physics}\ }\textbf {\bibinfo
  {volume} {159}},\ \href {https://doi.org/10.1063/5.0161368}
  {10.1063/5.0161368} (\bibinfo {year} {2023})\BibitemShut {NoStop}%
\bibitem [{\citenamefont {Lyakh}\ \emph {et~al.}(2011)\citenamefont {Lyakh},
  \citenamefont {Lotrich},\ and\ \citenamefont {Bartlett}}]{lyakh2011tailored}%
  \BibitemOpen
  \bibfield  {author} {\bibinfo {author} {\bibfnamefont {D.~I.}\ \bibnamefont
  {Lyakh}}, \bibinfo {author} {\bibfnamefont {V.~F.}\ \bibnamefont {Lotrich}},\
  and\ \bibinfo {author} {\bibfnamefont {R.~J.}\ \bibnamefont {Bartlett}},\
  }\bibfield  {title} {\bibinfo {title} {The ‘tailored’ {CCSD(T)}
  description of the automerization of cyclobutadiene},\ }\href
  {https://doi.org/10.1016/J.CPLETT.2010.11.058} {\bibfield  {journal}
  {\bibinfo  {journal} {Chemical Physics Letters}\ }\textbf {\bibinfo {volume}
  {501}},\ \bibinfo {pages} {166} (\bibinfo {year} {2011})}\BibitemShut
  {NoStop}%
\bibitem [{\citenamefont {Antal{\'\i}k}\ \emph {et~al.}(2019)\citenamefont
  {Antal{\'\i}k}, \citenamefont {Veis}, \citenamefont {Brabec}, \citenamefont
  {Demel}, \citenamefont {Legeza},\ and\ \citenamefont
  {Pittner}}]{antalik2019toward}%
  \BibitemOpen
  \bibfield  {author} {\bibinfo {author} {\bibfnamefont {A.}~\bibnamefont
  {Antal{\'\i}k}}, \bibinfo {author} {\bibfnamefont {L.}~\bibnamefont {Veis}},
  \bibinfo {author} {\bibfnamefont {J.}~\bibnamefont {Brabec}}, \bibinfo
  {author} {\bibfnamefont {O.}~\bibnamefont {Demel}}, \bibinfo {author}
  {\bibfnamefont {{\"O}.}~\bibnamefont {Legeza}},\ and\ \bibinfo {author}
  {\bibfnamefont {J.}~\bibnamefont {Pittner}},\ }\bibfield  {title} {\bibinfo
  {title} {Toward the efficient local tailored coupled cluster approximation
  and the peculiar case of {oxo-Mn (Salen)}},\ }\bibfield  {journal} {\bibinfo
  {journal} {The Journal of Chemical Physics}\ }\textbf {\bibinfo {volume}
  {151}},\ \href {https://doi.org/10.1063/1.5110477} {10.1063/1.5110477}
  (\bibinfo {year} {2019})\BibitemShut {NoStop}%
\bibitem [{\citenamefont {Antalík}\ \emph {et~al.}(2020)\citenamefont
  {Antalík}, \citenamefont {Nachtigallová}, \citenamefont {Lo}, \citenamefont
  {Matoušek}, \citenamefont {Lang}, \citenamefont {Örs Legeza}, \citenamefont
  {Pittner}, \citenamefont {Hobza},\ and\ \citenamefont
  {Veis}}]{antalik2020ground}%
  \BibitemOpen
  \bibfield  {author} {\bibinfo {author} {\bibfnamefont {A.}~\bibnamefont
  {Antalík}}, \bibinfo {author} {\bibfnamefont {D.}~\bibnamefont
  {Nachtigallová}}, \bibinfo {author} {\bibfnamefont {R.}~\bibnamefont {Lo}},
  \bibinfo {author} {\bibfnamefont {M.}~\bibnamefont {Matoušek}}, \bibinfo
  {author} {\bibfnamefont {J.}~\bibnamefont {Lang}}, \bibinfo {author}
  {\bibnamefont {Örs Legeza}}, \bibinfo {author} {\bibfnamefont
  {J.}~\bibnamefont {Pittner}}, \bibinfo {author} {\bibfnamefont
  {P.}~\bibnamefont {Hobza}},\ and\ \bibinfo {author} {\bibfnamefont
  {L.}~\bibnamefont {Veis}},\ }\bibfield  {title} {\bibinfo {title} {Ground
  state of the fe(ii)-porphyrin model system corresponds to quintet: a dft and
  dmrg-based tailored cc study},\ }\href {https://doi.org/10.1039/D0CP03086D}
  {\bibfield  {journal} {\bibinfo  {journal} {Physical Chemistry Chemical
  Physics}\ }\textbf {\bibinfo {volume} {22}},\ \bibinfo {pages} {17033}
  (\bibinfo {year} {2020})}\BibitemShut {NoStop}%
\bibitem [{\citenamefont {Lang}\ \emph {et~al.}(2020)\citenamefont {Lang},
  \citenamefont {Antal{\'\i}k}, \citenamefont {Veis}, \citenamefont {Brandejs},
  \citenamefont {Brabec}, \citenamefont {Legeza},\ and\ \citenamefont
  {Pittner}}]{lang2020near}%
  \BibitemOpen
  \bibfield  {author} {\bibinfo {author} {\bibfnamefont {J.}~\bibnamefont
  {Lang}}, \bibinfo {author} {\bibfnamefont {A.}~\bibnamefont {Antal{\'\i}k}},
  \bibinfo {author} {\bibfnamefont {L.}~\bibnamefont {Veis}}, \bibinfo {author}
  {\bibfnamefont {J.}~\bibnamefont {Brandejs}}, \bibinfo {author}
  {\bibfnamefont {J.}~\bibnamefont {Brabec}}, \bibinfo {author} {\bibfnamefont
  {O.}~\bibnamefont {Legeza}},\ and\ \bibinfo {author} {\bibfnamefont
  {J.}~\bibnamefont {Pittner}},\ }\bibfield  {title} {\bibinfo {title}
  {Near-linear scaling in dmrg-based tailored coupled clusters: an
  implementation of {DLPNO-TCCSD} and {DLPNO-TCCSD(T)}},\ }\href
  {https://doi.org/10.1021/acs.jctc.0c00065} {\bibfield  {journal} {\bibinfo
  {journal} {Journal of Chemical Theory and Computation}\ }\textbf {\bibinfo
  {volume} {16}},\ \bibinfo {pages} {3028} (\bibinfo {year}
  {2020})}\BibitemShut {NoStop}%
\bibitem [{\citenamefont {Faulstich}\ \emph
  {et~al.}(2019{\natexlab{a}})\citenamefont {Faulstich}, \citenamefont
  {Laestadius}, \citenamefont {Örs Legeza}, \citenamefont {Schneider},\ and\
  \citenamefont {Kvaal}}]{faulstich2019analysis}%
  \BibitemOpen
  \bibfield  {author} {\bibinfo {author} {\bibfnamefont {F.~M.}\ \bibnamefont
  {Faulstich}}, \bibinfo {author} {\bibfnamefont {A.}~\bibnamefont
  {Laestadius}}, \bibinfo {author} {\bibnamefont {Örs Legeza}}, \bibinfo
  {author} {\bibfnamefont {R.}~\bibnamefont {Schneider}},\ and\ \bibinfo
  {author} {\bibfnamefont {S.}~\bibnamefont {Kvaal}},\ }\bibfield  {title}
  {\bibinfo {title} {Analysis of the tailored coupled-cluster method in quantum
  chemistry},\ }\href {https://doi.org/10.1137/18M1171436} {\bibfield
  {journal} {\bibinfo  {journal} {SIAM Journal on Numerical Analysis}\ }\textbf
  {\bibinfo {volume} {57}},\ \bibinfo {pages} {2579} (\bibinfo {year}
  {2019}{\natexlab{a}})}\BibitemShut {NoStop}%
\bibitem [{\citenamefont {Faulstich}\ \emph
  {et~al.}(2019{\natexlab{b}})\citenamefont {Faulstich}, \citenamefont
  {Máté}, \citenamefont {Laestadius}, \citenamefont {Csirik}, \citenamefont
  {Veis}, \citenamefont {Antalik}, \citenamefont {Brabec}, \citenamefont
  {Schneider}, \citenamefont {Pittner}, \citenamefont {Kvaal},\ and\
  \citenamefont {Örs Legeza}}]{faulstich2019numerical}%
  \BibitemOpen
  \bibfield  {author} {\bibinfo {author} {\bibfnamefont {F.~M.}\ \bibnamefont
  {Faulstich}}, \bibinfo {author} {\bibfnamefont {M.}~\bibnamefont {Máté}},
  \bibinfo {author} {\bibfnamefont {A.}~\bibnamefont {Laestadius}}, \bibinfo
  {author} {\bibfnamefont {M.~A.}\ \bibnamefont {Csirik}}, \bibinfo {author}
  {\bibfnamefont {L.}~\bibnamefont {Veis}}, \bibinfo {author} {\bibfnamefont
  {A.}~\bibnamefont {Antalik}}, \bibinfo {author} {\bibfnamefont
  {J.}~\bibnamefont {Brabec}}, \bibinfo {author} {\bibfnamefont
  {R.}~\bibnamefont {Schneider}}, \bibinfo {author} {\bibfnamefont
  {J.}~\bibnamefont {Pittner}}, \bibinfo {author} {\bibfnamefont
  {S.}~\bibnamefont {Kvaal}},\ and\ \bibinfo {author} {\bibnamefont {Örs
  Legeza}},\ }\bibfield  {title} {\bibinfo {title} {Numerical and theoretical
  aspects of the dmrg-tcc method exemplified by the nitrogen dimer},\ }\href
  {https://doi.org/10.1021/ACS.JCTC.8B00960} {\bibfield  {journal} {\bibinfo
  {journal} {Journal of Chemical Theory and Computation}\ }\textbf {\bibinfo
  {volume} {15}},\ \bibinfo {pages} {2206} (\bibinfo {year}
  {2019}{\natexlab{b}})}\BibitemShut {NoStop}%
\bibitem [{\citenamefont {Melnichuk}\ and\ \citenamefont
  {Bartlett}(2012)}]{melnichuk2012relaxed}%
  \BibitemOpen
  \bibfield  {author} {\bibinfo {author} {\bibfnamefont {A.}~\bibnamefont
  {Melnichuk}}\ and\ \bibinfo {author} {\bibfnamefont {R.~J.}\ \bibnamefont
  {Bartlett}},\ }\bibfield  {title} {\bibinfo {title} {Relaxed active space:
  Fixing {tailored-CC} with high order coupled cluster. i},\ }\bibfield
  {journal} {\bibinfo  {journal} {The Journal of chemical physics}\ }\textbf
  {\bibinfo {volume} {137}},\ \href {https://doi.org/10.1063/1.4767900}
  {10.1063/1.4767900} (\bibinfo {year} {2012})\BibitemShut {NoStop}%
\bibitem [{\citenamefont {Demel}\ \emph {et~al.}(2023)\citenamefont {Demel},
  \citenamefont {Brandejs}, \citenamefont {Lang}, \citenamefont {Brabec},
  \citenamefont {Veis}, \citenamefont {Legeza},\ and\ \citenamefont
  {Pittner}}]{demel2023hilbert}%
  \BibitemOpen
  \bibfield  {author} {\bibinfo {author} {\bibfnamefont {O.}~\bibnamefont
  {Demel}}, \bibinfo {author} {\bibfnamefont {J.}~\bibnamefont {Brandejs}},
  \bibinfo {author} {\bibfnamefont {J.}~\bibnamefont {Lang}}, \bibinfo {author}
  {\bibfnamefont {J.}~\bibnamefont {Brabec}}, \bibinfo {author} {\bibfnamefont
  {L.}~\bibnamefont {Veis}}, \bibinfo {author} {\bibfnamefont {O.}~\bibnamefont
  {Legeza}},\ and\ \bibinfo {author} {\bibfnamefont {J.}~\bibnamefont
  {Pittner}},\ }\bibfield  {title} {\bibinfo {title} {Hilbert space
  multireference coupled clusters tailored by matrix product states},\
  }\href@noop {} {\bibfield  {journal} {\bibinfo  {journal} {arXiv preprint
  arXiv:2304.01625}\ } (\bibinfo {year} {2023})}\BibitemShut {NoStop}%
\bibitem [{\citenamefont {Deustua}\ \emph {et~al.}(2018)\citenamefont
  {Deustua}, \citenamefont {Magoulas}, \citenamefont {Shen},\ and\
  \citenamefont {Piecuch}}]{deustua2018communication}%
  \BibitemOpen
  \bibfield  {author} {\bibinfo {author} {\bibfnamefont {J.~E.}\ \bibnamefont
  {Deustua}}, \bibinfo {author} {\bibfnamefont {I.}~\bibnamefont {Magoulas}},
  \bibinfo {author} {\bibfnamefont {J.}~\bibnamefont {Shen}},\ and\ \bibinfo
  {author} {\bibfnamefont {P.}~\bibnamefont {Piecuch}},\ }\bibfield  {title}
  {\bibinfo {title} {Communication: Approaching exact quantum chemistry by
  cluster analysis of full configuration interaction quantum monte carlo wave
  functions},\ }\href {https://doi.org/10.1063/1.5055769} {\bibfield  {journal}
  {\bibinfo  {journal} {The Journal of Chemical Physics}\ }\textbf {\bibinfo
  {volume} {149}},\ \bibinfo {pages} {151101} (\bibinfo {year}
  {2018})}\BibitemShut {NoStop}%
\bibitem [{\citenamefont {Deustua}\ \emph {et~al.}(2017)\citenamefont
  {Deustua}, \citenamefont {Shen},\ and\ \citenamefont
  {Piecuch}}]{PhysRevLett.119.223003}%
  \BibitemOpen
  \bibfield  {author} {\bibinfo {author} {\bibfnamefont {J.~E.}\ \bibnamefont
  {Deustua}}, \bibinfo {author} {\bibfnamefont {J.}~\bibnamefont {Shen}},\ and\
  \bibinfo {author} {\bibfnamefont {P.}~\bibnamefont {Piecuch}},\ }\bibfield
  {title} {\bibinfo {title} {Converging high-level coupled-cluster energetics
  by monte carlo sampling and moment expansions},\ }\href
  {https://doi.org/10.1103/PhysRevLett.119.223003} {\bibfield  {journal}
  {\bibinfo  {journal} {Phys. Rev. Lett.}\ }\textbf {\bibinfo {volume} {119}},\
  \bibinfo {pages} {223003} (\bibinfo {year} {2017})}\BibitemShut {NoStop}%
\bibitem [{\citenamefont {Magoulas}\ \emph {et~al.}(2021)\citenamefont
  {Magoulas}, \citenamefont {Gururangan}, \citenamefont {Piecuch},
  \citenamefont {Deustua},\ and\ \citenamefont {Shen}}]{magoulas2021is}%
  \BibitemOpen
  \bibfield  {author} {\bibinfo {author} {\bibfnamefont {I.}~\bibnamefont
  {Magoulas}}, \bibinfo {author} {\bibfnamefont {K.}~\bibnamefont
  {Gururangan}}, \bibinfo {author} {\bibfnamefont {P.}~\bibnamefont {Piecuch}},
  \bibinfo {author} {\bibfnamefont {J.~E.}\ \bibnamefont {Deustua}},\ and\
  \bibinfo {author} {\bibfnamefont {J.}~\bibnamefont {Shen}},\ }\bibfield
  {title} {\bibinfo {title} {Is externally corrected coupled cluster always
  better than the underlying truncated configuration interaction?},\ }\href
  {https://doi.org/https://doi.org/10.1021/acs.jctc.1c00181} {\bibfield
  {journal} {\bibinfo  {journal} {Journal of Chemical Theory and Computation}\
  }\textbf {\bibinfo {volume} {17}},\ \bibinfo {pages} {4006} (\bibinfo {year}
  {2021})}\BibitemShut {NoStop}%
\bibitem [{\citenamefont {Lee}\ \emph {et~al.}(2021)\citenamefont {Lee},
  \citenamefont {Zhai}, \citenamefont {Sharma}, \citenamefont {Umrigar},\ and\
  \citenamefont {Chan}}]{lee2021externally}%
  \BibitemOpen
  \bibfield  {author} {\bibinfo {author} {\bibfnamefont {S.}~\bibnamefont
  {Lee}}, \bibinfo {author} {\bibfnamefont {H.}~\bibnamefont {Zhai}}, \bibinfo
  {author} {\bibfnamefont {S.}~\bibnamefont {Sharma}}, \bibinfo {author}
  {\bibfnamefont {C.~J.}\ \bibnamefont {Umrigar}},\ and\ \bibinfo {author}
  {\bibfnamefont {G.~K.-L.}\ \bibnamefont {Chan}},\ }\bibfield  {title}
  {\bibinfo {title} {Externally corrected ccsd with renormalized perturbative
  triples {(R-ecCCSD(T))} and the density matrix renormalization group and
  selected configuration interaction external sources},\ }\href
  {https://doi.org/10.1021/acs.jctc.1c00205} {\bibfield  {journal} {\bibinfo
  {journal} {Journal of Chemical Theory and Computation}\ }\textbf {\bibinfo
  {volume} {17}},\ \bibinfo {pages} {3414} (\bibinfo {year}
  {2021})}\BibitemShut {NoStop}%
\bibitem [{\citenamefont {Aroeira}\ \emph {et~al.}(2020)\citenamefont
  {Aroeira}, \citenamefont {Davis}, \citenamefont {Turney},\ and\ \citenamefont
  {Schaefer~III}}]{aroeira2020coupled}%
  \BibitemOpen
  \bibfield  {author} {\bibinfo {author} {\bibfnamefont {G.~J.}\ \bibnamefont
  {Aroeira}}, \bibinfo {author} {\bibfnamefont {M.~M.}\ \bibnamefont {Davis}},
  \bibinfo {author} {\bibfnamefont {J.~M.}\ \bibnamefont {Turney}},\ and\
  \bibinfo {author} {\bibfnamefont {H.~F.}\ \bibnamefont {Schaefer~III}},\
  }\bibfield  {title} {\bibinfo {title} {Coupled cluster externally corrected
  by adaptive configuration interaction},\ }\href
  {https://pubs.acs.org/doi/abs/10.1021/acs.jctc.0c00888} {\bibfield  {journal}
  {\bibinfo  {journal} {Journal of chemical theory and computation}\ }\textbf
  {\bibinfo {volume} {17}},\ \bibinfo {pages} {182} (\bibinfo {year}
  {2020})}\BibitemShut {NoStop}%
\bibitem [{\citenamefont {Anselmetti}\ \emph {et~al.}(2021)\citenamefont
  {Anselmetti}, \citenamefont {Wierichs}, \citenamefont {Gogolin},\ and\
  \citenamefont {Parrish}}]{anselmetti2021local}%
  \BibitemOpen
  \bibfield  {author} {\bibinfo {author} {\bibfnamefont {G.-L.~R.}\
  \bibnamefont {Anselmetti}}, \bibinfo {author} {\bibfnamefont
  {D.}~\bibnamefont {Wierichs}}, \bibinfo {author} {\bibfnamefont
  {C.}~\bibnamefont {Gogolin}},\ and\ \bibinfo {author} {\bibfnamefont {R.~M.}\
  \bibnamefont {Parrish}},\ }\bibfield  {title} {\bibinfo {title} {Local,
  expressive, quantum-number-preserving vqe ansätze for fermionic systems},\
  }\href {https://doi.org/10.1088/1367-2630/ac2cb3} {\bibfield  {journal}
  {\bibinfo  {journal} {New Journal of Physics}\ }\textbf {\bibinfo {volume}
  {23}},\ \bibinfo {pages} {113010} (\bibinfo {year} {2021})}\BibitemShut
  {NoStop}%
\bibitem [{\citenamefont {Dunning}(1989)}]{dunning1989a}%
  \BibitemOpen
  \bibfield  {author} {\bibinfo {author} {\bibfnamefont {T.~H.}\ \bibnamefont
  {Dunning}},\ }\bibfield  {title} {\bibinfo {title} {Gaussian basis sets for
  use in correlated molecular calculations. i. the atoms boron through neon and
  hydrogen},\ }\href {https://doi.org/10.1063/1.456153} {\bibfield  {journal}
  {\bibinfo  {journal} {J. Chem. Phys.}\ }\textbf {\bibinfo {volume} {90}},\
  \bibinfo {pages} {1007} (\bibinfo {year} {1989})}\BibitemShut {NoStop}%
\bibitem [{\citenamefont {Gard}\ \emph {et~al.}(2020)\citenamefont {Gard},
  \citenamefont {Zhu}, \citenamefont {Barron}, \citenamefont {Mayhall},
  \citenamefont {Economou},\ and\ \citenamefont {Barnes}}]{gard2020efficient}%
  \BibitemOpen
  \bibfield  {author} {\bibinfo {author} {\bibfnamefont {B.~T.}\ \bibnamefont
  {Gard}}, \bibinfo {author} {\bibfnamefont {L.}~\bibnamefont {Zhu}}, \bibinfo
  {author} {\bibfnamefont {G.~S.}\ \bibnamefont {Barron}}, \bibinfo {author}
  {\bibfnamefont {N.~J.}\ \bibnamefont {Mayhall}}, \bibinfo {author}
  {\bibfnamefont {S.~E.}\ \bibnamefont {Economou}},\ and\ \bibinfo {author}
  {\bibfnamefont {E.}~\bibnamefont {Barnes}},\ }\bibfield  {title} {\bibinfo
  {title} {Efficient symmetry-preserving state preparation circuits for the
  variational quantum eigensolver algorithm},\ }\bibfield  {journal} {\bibinfo
  {journal} {npj Quantum Information}\ }\textbf {\bibinfo {volume} {6}},\ \href
  {https://doi.org/10.1038/s41534-019-0240-1} {10.1038/s41534-019-0240-1}
  (\bibinfo {year} {2020})\BibitemShut {NoStop}%
\bibitem [{\citenamefont {Grimsley}\ \emph {et~al.}(2019)\citenamefont
  {Grimsley}, \citenamefont {Economou}, \citenamefont {Barnes},\ and\
  \citenamefont {Mayhall}}]{grimsley2019adaptive}%
  \BibitemOpen
  \bibfield  {author} {\bibinfo {author} {\bibfnamefont {H.~R.}\ \bibnamefont
  {Grimsley}}, \bibinfo {author} {\bibfnamefont {S.~E.}\ \bibnamefont
  {Economou}}, \bibinfo {author} {\bibfnamefont {E.}~\bibnamefont {Barnes}},\
  and\ \bibinfo {author} {\bibfnamefont {N.~J.}\ \bibnamefont {Mayhall}},\
  }\bibfield  {title} {\bibinfo {title} {An adaptive variational algorithm for
  exact molecular simulations on a quantum computer},\ }\bibfield  {journal}
  {\bibinfo  {journal} {Nature Communications}\ }\textbf {\bibinfo {volume}
  {10}},\ \href {https://doi.org/10.1038/s41467-019-10988-2}
  {10.1038/s41467-019-10988-2} (\bibinfo {year} {2019})\BibitemShut {NoStop}%
\bibitem [{\citenamefont {Tang}\ \emph {et~al.}(2021)\citenamefont {Tang},
  \citenamefont {Shkolnikov}, \citenamefont {Barron}, \citenamefont {Grimsley},
  \citenamefont {Mayhall}, \citenamefont {Barnes},\ and\ \citenamefont
  {Economou}}]{PRXQuantum.2.020310}%
  \BibitemOpen
  \bibfield  {author} {\bibinfo {author} {\bibfnamefont {H.~L.}\ \bibnamefont
  {Tang}}, \bibinfo {author} {\bibfnamefont {V.}~\bibnamefont {Shkolnikov}},
  \bibinfo {author} {\bibfnamefont {G.~S.}\ \bibnamefont {Barron}}, \bibinfo
  {author} {\bibfnamefont {H.~R.}\ \bibnamefont {Grimsley}}, \bibinfo {author}
  {\bibfnamefont {N.~J.}\ \bibnamefont {Mayhall}}, \bibinfo {author}
  {\bibfnamefont {E.}~\bibnamefont {Barnes}},\ and\ \bibinfo {author}
  {\bibfnamefont {S.~E.}\ \bibnamefont {Economou}},\ }\bibfield  {title}
  {\bibinfo {title} {Qubit-adapt-vqe: An adaptive algorithm for constructing
  hardware-efficient ans\"atze on a quantum processor},\ }\href
  {https://doi.org/10.1103/PRXQuantum.2.020310} {\bibfield  {journal} {\bibinfo
   {journal} {PRX Quantum}\ }\textbf {\bibinfo {volume} {2}},\ \bibinfo {pages}
  {020310} (\bibinfo {year} {2021})}\BibitemShut {NoStop}%
\bibitem [{\citenamefont {Stair}\ and\ \citenamefont
  {Evangelista}(2021)}]{PRXQuantum.2.030301}%
  \BibitemOpen
  \bibfield  {author} {\bibinfo {author} {\bibfnamefont {N.~H.}\ \bibnamefont
  {Stair}}\ and\ \bibinfo {author} {\bibfnamefont {F.~A.}\ \bibnamefont
  {Evangelista}},\ }\bibfield  {title} {\bibinfo {title} {Simulating many-body
  systems with a projective quantum eigensolver},\ }\href
  {https://doi.org/10.1103/PRXQuantum.2.030301} {\bibfield  {journal} {\bibinfo
   {journal} {PRX Quantum}\ }\textbf {\bibinfo {volume} {2}},\ \bibinfo {pages}
  {030301} (\bibinfo {year} {2021})}\BibitemShut {NoStop}%
\bibitem [{\citenamefont {Lee}\ \emph {et~al.}(2018)\citenamefont {Lee},
  \citenamefont {Huggins}, \citenamefont {Head-Gordon},\ and\ \citenamefont
  {Whaley}}]{lee2018generalized}%
  \BibitemOpen
  \bibfield  {author} {\bibinfo {author} {\bibfnamefont {J.}~\bibnamefont
  {Lee}}, \bibinfo {author} {\bibfnamefont {W.~J.}\ \bibnamefont {Huggins}},
  \bibinfo {author} {\bibfnamefont {M.}~\bibnamefont {Head-Gordon}},\ and\
  \bibinfo {author} {\bibfnamefont {K.~B.}\ \bibnamefont {Whaley}},\ }\bibfield
   {title} {\bibinfo {title} {Generalized unitary coupled cluster wave
  functions for quantum computation},\ }\href
  {https://doi.org/10.1021/acs.jctc.8b01004} {\bibfield  {journal} {\bibinfo
  {journal} {Journal of Chemical Theory and Computation}\ }\textbf {\bibinfo
  {volume} {15}},\ \bibinfo {pages} {311–324} (\bibinfo {year}
  {2018})}\BibitemShut {NoStop}%
\bibitem [{\citenamefont {O'Gorman}\ \emph {et~al.}(2019)\citenamefont
  {O'Gorman}, \citenamefont {Huggins}, \citenamefont {Rieffel},\ and\
  \citenamefont {Whaley}}]{o2019generalized}%
  \BibitemOpen
  \bibfield  {author} {\bibinfo {author} {\bibfnamefont {B.}~\bibnamefont
  {O'Gorman}}, \bibinfo {author} {\bibfnamefont {W.~J.}\ \bibnamefont
  {Huggins}}, \bibinfo {author} {\bibfnamefont {E.~G.}\ \bibnamefont
  {Rieffel}},\ and\ \bibinfo {author} {\bibfnamefont {K.~B.}\ \bibnamefont
  {Whaley}},\ }\bibfield  {title} {\bibinfo {title} {Generalized swap networks
  for near-term quantum computing},\ }\href {https://arxiv.org/abs/1905.05118}
  {\bibfield  {journal} {\bibinfo  {journal} {arXiv preprint arXiv:1905.05118}\
  } (\bibinfo {year} {2019})}\BibitemShut {NoStop}%
\bibitem [{\citenamefont {Matsuzawa}\ and\ \citenamefont
  {Kurashige}(2020)}]{matsuzawa2020jastrow}%
  \BibitemOpen
  \bibfield  {author} {\bibinfo {author} {\bibfnamefont {Y.}~\bibnamefont
  {Matsuzawa}}\ and\ \bibinfo {author} {\bibfnamefont {Y.}~\bibnamefont
  {Kurashige}},\ }\bibfield  {title} {\bibinfo {title} {Jastrow-type
  decomposition in quantum chemistry for low-depth quantum circuits},\ }\href
  {https://doi.org/10.1021/acs.jctc.9b00963} {\bibfield  {journal} {\bibinfo
  {journal} {Journal of Chemical Theory and Computation}\ }\textbf {\bibinfo
  {volume} {16}},\ \bibinfo {pages} {944–952} (\bibinfo {year}
  {2020})}\BibitemShut {NoStop}%
\bibitem [{\citenamefont {Motta}\ \emph {et~al.}(2019)\citenamefont {Motta},
  \citenamefont {Sun}, \citenamefont {Tan}, \citenamefont {O’Rourke},
  \citenamefont {Ye}, \citenamefont {Minnich}, \citenamefont {Brandão},\ and\
  \citenamefont {Chan}}]{motta2020determining}%
  \BibitemOpen
  \bibfield  {author} {\bibinfo {author} {\bibfnamefont {M.}~\bibnamefont
  {Motta}}, \bibinfo {author} {\bibfnamefont {C.}~\bibnamefont {Sun}}, \bibinfo
  {author} {\bibfnamefont {A.~T.~K.}\ \bibnamefont {Tan}}, \bibinfo {author}
  {\bibfnamefont {M.~J.}\ \bibnamefont {O’Rourke}}, \bibinfo {author}
  {\bibfnamefont {E.}~\bibnamefont {Ye}}, \bibinfo {author} {\bibfnamefont
  {A.~J.}\ \bibnamefont {Minnich}}, \bibinfo {author} {\bibfnamefont {F.~G.
  S.~L.}\ \bibnamefont {Brandão}},\ and\ \bibinfo {author} {\bibfnamefont
  {G.~K.-L.}\ \bibnamefont {Chan}},\ }\bibfield  {title} {\bibinfo {title}
  {Determining eigenstates and thermal states on a quantum computer using
  quantum imaginary time evolution},\ }\href
  {https://doi.org/10.1038/s41567-019-0704-4} {\bibfield  {journal} {\bibinfo
  {journal} {Nature Physics}\ }\textbf {\bibinfo {volume} {16}},\ \bibinfo
  {pages} {205–210} (\bibinfo {year} {2019})}\BibitemShut {NoStop}%
\bibitem [{\citenamefont {Kim}\ and\ \citenamefont
  {Swingle}(2017)}]{kim2017robust}%
  \BibitemOpen
  \bibfield  {author} {\bibinfo {author} {\bibfnamefont {I.~H.}\ \bibnamefont
  {Kim}}\ and\ \bibinfo {author} {\bibfnamefont {B.}~\bibnamefont {Swingle}},\
  }\bibfield  {title} {\bibinfo {title} {Robust entanglement renormalization on
  a noisy quantum computer},\ }\href {https://arxiv.org/abs/1711.07500}
  {\bibfield  {journal} {\bibinfo  {journal} {arXiv preprint arXiv:1711.07500}\
  } (\bibinfo {year} {2017})}\BibitemShut {NoStop}%
\bibitem [{\citenamefont {Sewell}\ and\ \citenamefont
  {Jordan}(2021)}]{sewell2021preparing}%
  \BibitemOpen
  \bibfield  {author} {\bibinfo {author} {\bibfnamefont {T.~J.}\ \bibnamefont
  {Sewell}}\ and\ \bibinfo {author} {\bibfnamefont {S.~P.}\ \bibnamefont
  {Jordan}},\ }\bibfield  {title} {\bibinfo {title} {Preparing renormalization
  group fixed points on nisq hardware},\ }\href
  {https://arxiv.org/abs/2109.09787} {\bibfield  {journal} {\bibinfo  {journal}
  {arXiv preprint arXiv:2109.09787}\ } (\bibinfo {year} {2021})}\BibitemShut
  {NoStop}%
\bibitem [{\citenamefont {Pineda}\ \emph {et~al.}(2010)\citenamefont {Pineda},
  \citenamefont {Barthel},\ and\ \citenamefont {Eisert}}]{PhysRevA.81.050303}%
  \BibitemOpen
  \bibfield  {author} {\bibinfo {author} {\bibfnamefont {C.}~\bibnamefont
  {Pineda}}, \bibinfo {author} {\bibfnamefont {T.}~\bibnamefont {Barthel}},\
  and\ \bibinfo {author} {\bibfnamefont {J.}~\bibnamefont {Eisert}},\
  }\bibfield  {title} {\bibinfo {title} {Unitary circuits for strongly
  correlated fermions},\ }\href {https://doi.org/10.1103/PhysRevA.81.050303}
  {\bibfield  {journal} {\bibinfo  {journal} {Phys. Rev. A}\ }\textbf {\bibinfo
  {volume} {81}},\ \bibinfo {pages} {050303} (\bibinfo {year}
  {2010})}\BibitemShut {NoStop}%
\bibitem [{\citenamefont {Miao}\ and\ \citenamefont
  {Barthel}(2023)}]{miao2021quantum}%
  \BibitemOpen
  \bibfield  {author} {\bibinfo {author} {\bibfnamefont {Q.}~\bibnamefont
  {Miao}}\ and\ \bibinfo {author} {\bibfnamefont {T.}~\bibnamefont {Barthel}},\
  }\bibfield  {title} {\bibinfo {title} {Quantum-classical eigensolver using
  multiscale entanglement renormalization},\ }\bibfield  {journal} {\bibinfo
  {journal} {Physical Review Research}\ }\textbf {\bibinfo {volume} {5}},\
  \href {https://doi.org/10.1103/physrevresearch.5.033141}
  {10.1103/physrevresearch.5.033141} (\bibinfo {year} {2023})\BibitemShut
  {NoStop}%
\bibitem [{\citenamefont {Lin}\ and\ \citenamefont {Tong}(2020)}]{lin2020near}%
  \BibitemOpen
  \bibfield  {author} {\bibinfo {author} {\bibfnamefont {L.}~\bibnamefont
  {Lin}}\ and\ \bibinfo {author} {\bibfnamefont {Y.}~\bibnamefont {Tong}},\
  }\bibfield  {title} {\bibinfo {title} {Near-optimal ground state
  preparation},\ }\href {https://doi.org/10.22331/q-2020-12-14-372} {\bibfield
  {journal} {\bibinfo  {journal} {Quantum}\ }\textbf {\bibinfo {volume} {4}},\
  \bibinfo {pages} {372} (\bibinfo {year} {2020})}\BibitemShut {NoStop}%
\bibitem [{\citenamefont {Malz}\ \emph {et~al.}(2023)\citenamefont {Malz},
  \citenamefont {Styliaris}, \citenamefont {Wei},\ and\ \citenamefont
  {Cirac}}]{malz2023preparation}%
  \BibitemOpen
  \bibfield  {author} {\bibinfo {author} {\bibfnamefont {D.}~\bibnamefont
  {Malz}}, \bibinfo {author} {\bibfnamefont {G.}~\bibnamefont {Styliaris}},
  \bibinfo {author} {\bibfnamefont {Z.-Y.}\ \bibnamefont {Wei}},\ and\ \bibinfo
  {author} {\bibfnamefont {J.~I.}\ \bibnamefont {Cirac}},\ }\bibfield  {title}
  {\bibinfo {title} {Preparation of matrix product states with log-depth
  quantum circuits},\ }\href@noop {} {\bibfield  {journal} {\bibinfo  {journal}
  {arXiv preprint arXiv:2307.01696}\ } (\bibinfo {year} {2023})}\BibitemShut
  {NoStop}%
\bibitem [{\citenamefont {Ge}\ \emph {et~al.}(2019)\citenamefont {Ge},
  \citenamefont {Tura},\ and\ \citenamefont {Cirac}}]{ge2019faster}%
  \BibitemOpen
  \bibfield  {author} {\bibinfo {author} {\bibfnamefont {Y.}~\bibnamefont
  {Ge}}, \bibinfo {author} {\bibfnamefont {J.}~\bibnamefont {Tura}},\ and\
  \bibinfo {author} {\bibfnamefont {J.~I.}\ \bibnamefont {Cirac}},\ }\bibfield
  {title} {\bibinfo {title} {Faster ground state preparation and high-precision
  ground energy estimation with fewer qubits},\ }\href
  {https://pubs.aip.org/aip/jmp/article-abstract/60/2/022202/918222/Faster-ground-state-preparation-and-high-precision?redirectedFrom=fulltext}
  {\bibfield  {journal} {\bibinfo  {journal} {Journal of Mathematical Physics}\
  }\textbf {\bibinfo {volume} {60}} (\bibinfo {year} {2019})}\BibitemShut
  {NoStop}%
\bibitem [{\citenamefont {Lu}\ \emph {et~al.}(2021)\citenamefont {Lu},
  \citenamefont {Ba\~nuls},\ and\ \citenamefont {Cirac}}]{PRXQuantum.2.020321}%
  \BibitemOpen
  \bibfield  {author} {\bibinfo {author} {\bibfnamefont {S.}~\bibnamefont
  {Lu}}, \bibinfo {author} {\bibfnamefont {M.~C.}\ \bibnamefont {Ba\~nuls}},\
  and\ \bibinfo {author} {\bibfnamefont {J.~I.}\ \bibnamefont {Cirac}},\
  }\bibfield  {title} {\bibinfo {title} {Algorithms for quantum simulation at
  finite energies},\ }\href {https://doi.org/10.1103/PRXQuantum.2.020321}
  {\bibfield  {journal} {\bibinfo  {journal} {PRX Quantum}\ }\textbf {\bibinfo
  {volume} {2}},\ \bibinfo {pages} {020321} (\bibinfo {year}
  {2021})}\BibitemShut {NoStop}%
\bibitem [{\citenamefont {He}\ \emph {et~al.}(2022{\natexlab{b}})\citenamefont
  {He}, \citenamefont {Zhang},\ and\ \citenamefont {Wang}}]{he2022quantum}%
  \BibitemOpen
  \bibfield  {author} {\bibinfo {author} {\bibfnamefont {M.-Q.}\ \bibnamefont
  {He}}, \bibinfo {author} {\bibfnamefont {D.-B.}\ \bibnamefont {Zhang}},\ and\
  \bibinfo {author} {\bibfnamefont {Z.~D.}\ \bibnamefont {Wang}},\ }\bibfield
  {title} {\bibinfo {title} {Quantum gaussian filter for exploring ground-state
  properties},\ }\bibfield  {journal} {\bibinfo  {journal} {Physical Review A}\
  }\textbf {\bibinfo {volume} {106}},\ \href
  {https://doi.org/10.1103/physreva.106.032420} {10.1103/physreva.106.032420}
  (\bibinfo {year} {2022}{\natexlab{b}})\BibitemShut {NoStop}%
\bibitem [{\citenamefont {Kyriienko}(2020)}]{kyriienko2020quantum}%
  \BibitemOpen
  \bibfield  {author} {\bibinfo {author} {\bibfnamefont {O.}~\bibnamefont
  {Kyriienko}},\ }\bibfield  {title} {\bibinfo {title} {Quantum inverse
  iteration algorithm for programmable quantum simulators},\ }\bibfield
  {journal} {\bibinfo  {journal} {npj Quantum Information}\ }\textbf {\bibinfo
  {volume} {6}},\ \href {https://doi.org/10.1038/s41534-019-0239-7}
  {10.1038/s41534-019-0239-7} (\bibinfo {year} {2020})\BibitemShut {NoStop}%
\bibitem [{\citenamefont {Fomichev}\ \emph {et~al.}(2023)\citenamefont
  {Fomichev}, \citenamefont {Hejazi}, \citenamefont {Zini}, \citenamefont
  {Kiser}, \citenamefont {Morales}, \citenamefont {Casares}, \citenamefont
  {Delgado}, \citenamefont {Huh}, \citenamefont {Voigt}, \citenamefont
  {Mueller} \emph {et~al.}}]{fomichev2023initial}%
  \BibitemOpen
  \bibfield  {author} {\bibinfo {author} {\bibfnamefont {S.}~\bibnamefont
  {Fomichev}}, \bibinfo {author} {\bibfnamefont {K.}~\bibnamefont {Hejazi}},
  \bibinfo {author} {\bibfnamefont {M.~S.}\ \bibnamefont {Zini}}, \bibinfo
  {author} {\bibfnamefont {M.}~\bibnamefont {Kiser}}, \bibinfo {author}
  {\bibfnamefont {J.~F.}\ \bibnamefont {Morales}}, \bibinfo {author}
  {\bibfnamefont {P.~A.~M.}\ \bibnamefont {Casares}}, \bibinfo {author}
  {\bibfnamefont {A.}~\bibnamefont {Delgado}}, \bibinfo {author} {\bibfnamefont
  {J.}~\bibnamefont {Huh}}, \bibinfo {author} {\bibfnamefont {A.-C.}\
  \bibnamefont {Voigt}}, \bibinfo {author} {\bibfnamefont {J.~E.}\ \bibnamefont
  {Mueller}}, \emph {et~al.},\ }\bibfield  {title} {\bibinfo {title} {Initial
  state preparation for quantum chemistry on quantum computers},\ }\href
  {https://arxiv.org/abs/2310.18410} {\bibfield  {journal} {\bibinfo  {journal}
  {arXiv preprint arXiv:2310.18410}\ } (\bibinfo {year} {2023})}\BibitemShut
  {NoStop}%
\bibitem [{\citenamefont {Choi}\ \emph
  {et~al.}(2023{\natexlab{b}})\citenamefont {Choi}, \citenamefont {Loaiza},\
  and\ \citenamefont {Izmaylov}}]{choi2023}%
  \BibitemOpen
  \bibfield  {author} {\bibinfo {author} {\bibfnamefont {S.}~\bibnamefont
  {Choi}}, \bibinfo {author} {\bibfnamefont {I.}~\bibnamefont {Loaiza}},\ and\
  \bibinfo {author} {\bibfnamefont {A.~F.}\ \bibnamefont {Izmaylov}},\
  }\bibfield  {title} {\bibinfo {title} {Fluid fermionic fragments for
  optimizing quantum measurements of electronic hamiltonians in the variational
  quantum eigensolver},\ }\href {https://doi.org/10.22331/q-2023-01-03-889}
  {\bibfield  {journal} {\bibinfo  {journal} {Quantum}\ }\textbf {\bibinfo
  {volume} {7}},\ \bibinfo {pages} {889} (\bibinfo {year}
  {2023}{\natexlab{b}})}\BibitemShut {NoStop}%
\bibitem [{\citenamefont {O’Brien}\ \emph {et~al.}(2023)\citenamefont
  {O’Brien}, \citenamefont {Anselmetti}, \citenamefont {Gkritsis},
  \citenamefont {Elfving}, \citenamefont {Polla}, \citenamefont {Huggins},
  \citenamefont {Oumarou}, \citenamefont {Kechedzhi}, \citenamefont {Abanin},
  \citenamefont {Acharya}, \citenamefont {Aleiner}, \citenamefont {Allen},
  \citenamefont {Andersen}, \citenamefont {Anderson}, \citenamefont {Ansmann},
  \citenamefont {Arute}, \citenamefont {Arya}, \citenamefont {Asfaw},
  \citenamefont {Atalaya}, \citenamefont {Bardin}, \citenamefont {Bengtsson},
  \citenamefont {Bortoli}, \citenamefont {Bourassa}, \citenamefont {Bovaird},
  \citenamefont {Brill}, \citenamefont {Broughton}, \citenamefont {Buckley},
  \citenamefont {Buell}, \citenamefont {Burger}, \citenamefont {Burkett},
  \citenamefont {Bushnell}, \citenamefont {Campero}, \citenamefont {Chen},
  \citenamefont {Chiaro}, \citenamefont {Chik}, \citenamefont {Cogan},
  \citenamefont {Collins}, \citenamefont {Conner}, \citenamefont {Courtney},
  \citenamefont {Crook}, \citenamefont {Curtin}, \citenamefont {Debroy},
  \citenamefont {Demura}, \citenamefont {Drozdov}, \citenamefont {Dunsworth},
  \citenamefont {Erickson}, \citenamefont {Faoro}, \citenamefont {Farhi},
  \citenamefont {Fatemi}, \citenamefont {Ferreira}, \citenamefont
  {Flores~Burgos}, \citenamefont {Forati}, \citenamefont {Fowler},
  \citenamefont {Foxen}, \citenamefont {Giang}, \citenamefont {Gidney},
  \citenamefont {Gilboa}, \citenamefont {Giustina}, \citenamefont {Gosula},
  \citenamefont {Grajales~Dau}, \citenamefont {Gross}, \citenamefont
  {Habegger}, \citenamefont {Hamilton}, \citenamefont {Hansen}, \citenamefont
  {Harrigan}, \citenamefont {Harrington}, \citenamefont {Heu}, \citenamefont
  {Hoffmann}, \citenamefont {Hong}, \citenamefont {Huang}, \citenamefont
  {Huff}, \citenamefont {Ioffe}, \citenamefont {Isakov}, \citenamefont
  {Iveland}, \citenamefont {Jeffrey}, \citenamefont {Jiang}, \citenamefont
  {Jones}, \citenamefont {Juhas}, \citenamefont {Kafri}, \citenamefont
  {Khattar}, \citenamefont {Khezri}, \citenamefont {Kieferová}, \citenamefont
  {Kim}, \citenamefont {Klimov}, \citenamefont {Klots}, \citenamefont
  {Korotkov}, \citenamefont {Kostritsa}, \citenamefont {Kreikebaum},
  \citenamefont {Landhuis}, \citenamefont {Laptev}, \citenamefont {Lau},
  \citenamefont {Laws}, \citenamefont {Lee}, \citenamefont {Lee}, \citenamefont
  {Lester}, \citenamefont {Lill}, \citenamefont {Liu}, \citenamefont
  {Livingston}, \citenamefont {Locharla}, \citenamefont {Malone}, \citenamefont
  {Mandrà}, \citenamefont {Martin}, \citenamefont {Martin}, \citenamefont
  {McClean}, \citenamefont {McCourt}, \citenamefont {McEwen}, \citenamefont
  {Mi}, \citenamefont {Mieszala}, \citenamefont {Miao}, \citenamefont
  {Mohseni}, \citenamefont {Montazeri}, \citenamefont {Morvan}, \citenamefont
  {Movassagh}, \citenamefont {Mruczkiewicz}, \citenamefont {Naaman},
  \citenamefont {Neeley}, \citenamefont {Neill}, \citenamefont {Nersisyan},
  \citenamefont {Newman}, \citenamefont {Ng}, \citenamefont {Nguyen},
  \citenamefont {Nguyen}, \citenamefont {Niu}, \citenamefont {Omonije},
  \citenamefont {Opremcak}, \citenamefont {Petukhov}, \citenamefont {Potter},
  \citenamefont {Pryadko}, \citenamefont {Quintana}, \citenamefont {Rocque},
  \citenamefont {Roushan}, \citenamefont {Saei}, \citenamefont {Sank},
  \citenamefont {Sankaragomathi}, \citenamefont {Satzinger}, \citenamefont
  {Schurkus}, \citenamefont {Schuster}, \citenamefont {Shearn}, \citenamefont
  {Shorter}, \citenamefont {Shutty}, \citenamefont {Shvarts}, \citenamefont
  {Skruzny}, \citenamefont {Smith}, \citenamefont {Somma}, \citenamefont
  {Sterling}, \citenamefont {Strain}, \citenamefont {Szalay}, \citenamefont
  {Thor}, \citenamefont {Torres}, \citenamefont {Vidal}, \citenamefont
  {Villalonga}, \citenamefont {Vollgraff~Heidweiller}, \citenamefont {White},
  \citenamefont {Woo}, \citenamefont {Xing}, \citenamefont {Yao}, \citenamefont
  {Yeh}, \citenamefont {Yoo}, \citenamefont {Young}, \citenamefont {Zalcman},
  \citenamefont {Zhang}, \citenamefont {Zhu}, \citenamefont {Zobrist},
  \citenamefont {Bacon}, \citenamefont {Boixo}, \citenamefont {Chen},
  \citenamefont {Hilton}, \citenamefont {Kelly}, \citenamefont {Lucero},
  \citenamefont {Megrant}, \citenamefont {Neven}, \citenamefont {Smelyanskiy},
  \citenamefont {Gogolin}, \citenamefont {Babbush},\ and\ \citenamefont
  {Rubin}}]{obrien2023}%
  \BibitemOpen
  \bibfield  {author} {\bibinfo {author} {\bibfnamefont {T.~E.}\ \bibnamefont
  {O’Brien}}, \bibinfo {author} {\bibfnamefont {G.}~\bibnamefont
  {Anselmetti}}, \bibinfo {author} {\bibfnamefont {F.}~\bibnamefont
  {Gkritsis}}, \bibinfo {author} {\bibfnamefont {V.~E.}\ \bibnamefont
  {Elfving}}, \bibinfo {author} {\bibfnamefont {S.}~\bibnamefont {Polla}},
  \bibinfo {author} {\bibfnamefont {W.~J.}\ \bibnamefont {Huggins}}, \bibinfo
  {author} {\bibfnamefont {O.}~\bibnamefont {Oumarou}}, \bibinfo {author}
  {\bibfnamefont {K.}~\bibnamefont {Kechedzhi}}, \bibinfo {author}
  {\bibfnamefont {D.}~\bibnamefont {Abanin}}, \bibinfo {author} {\bibfnamefont
  {R.}~\bibnamefont {Acharya}}, \bibinfo {author} {\bibfnamefont
  {I.}~\bibnamefont {Aleiner}}, \bibinfo {author} {\bibfnamefont
  {R.}~\bibnamefont {Allen}}, \bibinfo {author} {\bibfnamefont {T.~I.}\
  \bibnamefont {Andersen}}, \bibinfo {author} {\bibfnamefont {K.}~\bibnamefont
  {Anderson}}, \bibinfo {author} {\bibfnamefont {M.}~\bibnamefont {Ansmann}},
  \bibinfo {author} {\bibfnamefont {F.}~\bibnamefont {Arute}}, \bibinfo
  {author} {\bibfnamefont {K.}~\bibnamefont {Arya}}, \bibinfo {author}
  {\bibfnamefont {A.}~\bibnamefont {Asfaw}}, \bibinfo {author} {\bibfnamefont
  {J.}~\bibnamefont {Atalaya}}, \bibinfo {author} {\bibfnamefont {J.~C.}\
  \bibnamefont {Bardin}}, \bibinfo {author} {\bibfnamefont {A.}~\bibnamefont
  {Bengtsson}}, \bibinfo {author} {\bibfnamefont {G.}~\bibnamefont {Bortoli}},
  \bibinfo {author} {\bibfnamefont {A.}~\bibnamefont {Bourassa}}, \bibinfo
  {author} {\bibfnamefont {J.}~\bibnamefont {Bovaird}}, \bibinfo {author}
  {\bibfnamefont {L.}~\bibnamefont {Brill}}, \bibinfo {author} {\bibfnamefont
  {M.}~\bibnamefont {Broughton}}, \bibinfo {author} {\bibfnamefont
  {B.}~\bibnamefont {Buckley}}, \bibinfo {author} {\bibfnamefont {D.~A.}\
  \bibnamefont {Buell}}, \bibinfo {author} {\bibfnamefont {T.}~\bibnamefont
  {Burger}}, \bibinfo {author} {\bibfnamefont {B.}~\bibnamefont {Burkett}},
  \bibinfo {author} {\bibfnamefont {N.}~\bibnamefont {Bushnell}}, \bibinfo
  {author} {\bibfnamefont {J.}~\bibnamefont {Campero}}, \bibinfo {author}
  {\bibfnamefont {Z.}~\bibnamefont {Chen}}, \bibinfo {author} {\bibfnamefont
  {B.}~\bibnamefont {Chiaro}}, \bibinfo {author} {\bibfnamefont
  {D.}~\bibnamefont {Chik}}, \bibinfo {author} {\bibfnamefont {J.}~\bibnamefont
  {Cogan}}, \bibinfo {author} {\bibfnamefont {R.}~\bibnamefont {Collins}},
  \bibinfo {author} {\bibfnamefont {P.}~\bibnamefont {Conner}}, \bibinfo
  {author} {\bibfnamefont {W.}~\bibnamefont {Courtney}}, \bibinfo {author}
  {\bibfnamefont {A.~L.}\ \bibnamefont {Crook}}, \bibinfo {author}
  {\bibfnamefont {B.}~\bibnamefont {Curtin}}, \bibinfo {author} {\bibfnamefont
  {D.~M.}\ \bibnamefont {Debroy}}, \bibinfo {author} {\bibfnamefont
  {S.}~\bibnamefont {Demura}}, \bibinfo {author} {\bibfnamefont
  {I.}~\bibnamefont {Drozdov}}, \bibinfo {author} {\bibfnamefont
  {A.}~\bibnamefont {Dunsworth}}, \bibinfo {author} {\bibfnamefont
  {C.}~\bibnamefont {Erickson}}, \bibinfo {author} {\bibfnamefont
  {L.}~\bibnamefont {Faoro}}, \bibinfo {author} {\bibfnamefont
  {E.}~\bibnamefont {Farhi}}, \bibinfo {author} {\bibfnamefont
  {R.}~\bibnamefont {Fatemi}}, \bibinfo {author} {\bibfnamefont {V.~S.}\
  \bibnamefont {Ferreira}}, \bibinfo {author} {\bibfnamefont {L.}~\bibnamefont
  {Flores~Burgos}}, \bibinfo {author} {\bibfnamefont {E.}~\bibnamefont
  {Forati}}, \bibinfo {author} {\bibfnamefont {A.~G.}\ \bibnamefont {Fowler}},
  \bibinfo {author} {\bibfnamefont {B.}~\bibnamefont {Foxen}}, \bibinfo
  {author} {\bibfnamefont {W.}~\bibnamefont {Giang}}, \bibinfo {author}
  {\bibfnamefont {C.}~\bibnamefont {Gidney}}, \bibinfo {author} {\bibfnamefont
  {D.}~\bibnamefont {Gilboa}}, \bibinfo {author} {\bibfnamefont
  {M.}~\bibnamefont {Giustina}}, \bibinfo {author} {\bibfnamefont
  {R.}~\bibnamefont {Gosula}}, \bibinfo {author} {\bibfnamefont
  {A.}~\bibnamefont {Grajales~Dau}}, \bibinfo {author} {\bibfnamefont {J.~A.}\
  \bibnamefont {Gross}}, \bibinfo {author} {\bibfnamefont {S.}~\bibnamefont
  {Habegger}}, \bibinfo {author} {\bibfnamefont {M.~C.}\ \bibnamefont
  {Hamilton}}, \bibinfo {author} {\bibfnamefont {M.}~\bibnamefont {Hansen}},
  \bibinfo {author} {\bibfnamefont {M.~P.}\ \bibnamefont {Harrigan}}, \bibinfo
  {author} {\bibfnamefont {S.~D.}\ \bibnamefont {Harrington}}, \bibinfo
  {author} {\bibfnamefont {P.}~\bibnamefont {Heu}}, \bibinfo {author}
  {\bibfnamefont {M.~R.}\ \bibnamefont {Hoffmann}}, \bibinfo {author}
  {\bibfnamefont {S.}~\bibnamefont {Hong}}, \bibinfo {author} {\bibfnamefont
  {T.}~\bibnamefont {Huang}}, \bibinfo {author} {\bibfnamefont
  {A.}~\bibnamefont {Huff}}, \bibinfo {author} {\bibfnamefont {L.~B.}\
  \bibnamefont {Ioffe}}, \bibinfo {author} {\bibfnamefont {S.~V.}\ \bibnamefont
  {Isakov}}, \bibinfo {author} {\bibfnamefont {J.}~\bibnamefont {Iveland}},
  \bibinfo {author} {\bibfnamefont {E.}~\bibnamefont {Jeffrey}}, \bibinfo
  {author} {\bibfnamefont {Z.}~\bibnamefont {Jiang}}, \bibinfo {author}
  {\bibfnamefont {C.}~\bibnamefont {Jones}}, \bibinfo {author} {\bibfnamefont
  {P.}~\bibnamefont {Juhas}}, \bibinfo {author} {\bibfnamefont
  {D.}~\bibnamefont {Kafri}}, \bibinfo {author} {\bibfnamefont
  {T.}~\bibnamefont {Khattar}}, \bibinfo {author} {\bibfnamefont
  {M.}~\bibnamefont {Khezri}}, \bibinfo {author} {\bibfnamefont
  {M.}~\bibnamefont {Kieferová}}, \bibinfo {author} {\bibfnamefont
  {S.}~\bibnamefont {Kim}}, \bibinfo {author} {\bibfnamefont {P.~V.}\
  \bibnamefont {Klimov}}, \bibinfo {author} {\bibfnamefont {A.~R.}\
  \bibnamefont {Klots}}, \bibinfo {author} {\bibfnamefont {A.~N.}\ \bibnamefont
  {Korotkov}}, \bibinfo {author} {\bibfnamefont {F.}~\bibnamefont {Kostritsa}},
  \bibinfo {author} {\bibfnamefont {J.~M.}\ \bibnamefont {Kreikebaum}},
  \bibinfo {author} {\bibfnamefont {D.}~\bibnamefont {Landhuis}}, \bibinfo
  {author} {\bibfnamefont {P.}~\bibnamefont {Laptev}}, \bibinfo {author}
  {\bibfnamefont {K.-M.}\ \bibnamefont {Lau}}, \bibinfo {author} {\bibfnamefont
  {L.}~\bibnamefont {Laws}}, \bibinfo {author} {\bibfnamefont {J.}~\bibnamefont
  {Lee}}, \bibinfo {author} {\bibfnamefont {K.}~\bibnamefont {Lee}}, \bibinfo
  {author} {\bibfnamefont {B.~J.}\ \bibnamefont {Lester}}, \bibinfo {author}
  {\bibfnamefont {A.~T.}\ \bibnamefont {Lill}}, \bibinfo {author}
  {\bibfnamefont {W.}~\bibnamefont {Liu}}, \bibinfo {author} {\bibfnamefont
  {W.~P.}\ \bibnamefont {Livingston}}, \bibinfo {author} {\bibfnamefont
  {A.}~\bibnamefont {Locharla}}, \bibinfo {author} {\bibfnamefont {F.~D.}\
  \bibnamefont {Malone}}, \bibinfo {author} {\bibfnamefont {S.}~\bibnamefont
  {Mandrà}}, \bibinfo {author} {\bibfnamefont {O.}~\bibnamefont {Martin}},
  \bibinfo {author} {\bibfnamefont {S.}~\bibnamefont {Martin}}, \bibinfo
  {author} {\bibfnamefont {J.~R.}\ \bibnamefont {McClean}}, \bibinfo {author}
  {\bibfnamefont {T.}~\bibnamefont {McCourt}}, \bibinfo {author} {\bibfnamefont
  {M.}~\bibnamefont {McEwen}}, \bibinfo {author} {\bibfnamefont
  {X.}~\bibnamefont {Mi}}, \bibinfo {author} {\bibfnamefont {A.}~\bibnamefont
  {Mieszala}}, \bibinfo {author} {\bibfnamefont {K.~C.}\ \bibnamefont {Miao}},
  \bibinfo {author} {\bibfnamefont {M.}~\bibnamefont {Mohseni}}, \bibinfo
  {author} {\bibfnamefont {S.}~\bibnamefont {Montazeri}}, \bibinfo {author}
  {\bibfnamefont {A.}~\bibnamefont {Morvan}}, \bibinfo {author} {\bibfnamefont
  {R.}~\bibnamefont {Movassagh}}, \bibinfo {author} {\bibfnamefont
  {W.}~\bibnamefont {Mruczkiewicz}}, \bibinfo {author} {\bibfnamefont
  {O.}~\bibnamefont {Naaman}}, \bibinfo {author} {\bibfnamefont
  {M.}~\bibnamefont {Neeley}}, \bibinfo {author} {\bibfnamefont
  {C.}~\bibnamefont {Neill}}, \bibinfo {author} {\bibfnamefont
  {A.}~\bibnamefont {Nersisyan}}, \bibinfo {author} {\bibfnamefont
  {M.}~\bibnamefont {Newman}}, \bibinfo {author} {\bibfnamefont {J.~H.}\
  \bibnamefont {Ng}}, \bibinfo {author} {\bibfnamefont {A.}~\bibnamefont
  {Nguyen}}, \bibinfo {author} {\bibfnamefont {M.}~\bibnamefont {Nguyen}},
  \bibinfo {author} {\bibfnamefont {M.~Y.}\ \bibnamefont {Niu}}, \bibinfo
  {author} {\bibfnamefont {S.}~\bibnamefont {Omonije}}, \bibinfo {author}
  {\bibfnamefont {A.}~\bibnamefont {Opremcak}}, \bibinfo {author}
  {\bibfnamefont {A.}~\bibnamefont {Petukhov}}, \bibinfo {author}
  {\bibfnamefont {R.}~\bibnamefont {Potter}}, \bibinfo {author} {\bibfnamefont
  {L.~P.}\ \bibnamefont {Pryadko}}, \bibinfo {author} {\bibfnamefont
  {C.}~\bibnamefont {Quintana}}, \bibinfo {author} {\bibfnamefont
  {C.}~\bibnamefont {Rocque}}, \bibinfo {author} {\bibfnamefont
  {P.}~\bibnamefont {Roushan}}, \bibinfo {author} {\bibfnamefont
  {N.}~\bibnamefont {Saei}}, \bibinfo {author} {\bibfnamefont {D.}~\bibnamefont
  {Sank}}, \bibinfo {author} {\bibfnamefont {K.}~\bibnamefont
  {Sankaragomathi}}, \bibinfo {author} {\bibfnamefont {K.~J.}\ \bibnamefont
  {Satzinger}}, \bibinfo {author} {\bibfnamefont {H.~F.}\ \bibnamefont
  {Schurkus}}, \bibinfo {author} {\bibfnamefont {C.}~\bibnamefont {Schuster}},
  \bibinfo {author} {\bibfnamefont {M.~J.}\ \bibnamefont {Shearn}}, \bibinfo
  {author} {\bibfnamefont {A.}~\bibnamefont {Shorter}}, \bibinfo {author}
  {\bibfnamefont {N.}~\bibnamefont {Shutty}}, \bibinfo {author} {\bibfnamefont
  {V.}~\bibnamefont {Shvarts}}, \bibinfo {author} {\bibfnamefont
  {J.}~\bibnamefont {Skruzny}}, \bibinfo {author} {\bibfnamefont {W.~C.}\
  \bibnamefont {Smith}}, \bibinfo {author} {\bibfnamefont {R.~D.}\ \bibnamefont
  {Somma}}, \bibinfo {author} {\bibfnamefont {G.}~\bibnamefont {Sterling}},
  \bibinfo {author} {\bibfnamefont {D.}~\bibnamefont {Strain}}, \bibinfo
  {author} {\bibfnamefont {M.}~\bibnamefont {Szalay}}, \bibinfo {author}
  {\bibfnamefont {D.}~\bibnamefont {Thor}}, \bibinfo {author} {\bibfnamefont
  {A.}~\bibnamefont {Torres}}, \bibinfo {author} {\bibfnamefont
  {G.}~\bibnamefont {Vidal}}, \bibinfo {author} {\bibfnamefont
  {B.}~\bibnamefont {Villalonga}}, \bibinfo {author} {\bibfnamefont
  {C.}~\bibnamefont {Vollgraff~Heidweiller}}, \bibinfo {author} {\bibfnamefont
  {T.}~\bibnamefont {White}}, \bibinfo {author} {\bibfnamefont {B.~W.~K.}\
  \bibnamefont {Woo}}, \bibinfo {author} {\bibfnamefont {C.}~\bibnamefont
  {Xing}}, \bibinfo {author} {\bibfnamefont {Z.~J.}\ \bibnamefont {Yao}},
  \bibinfo {author} {\bibfnamefont {P.}~\bibnamefont {Yeh}}, \bibinfo {author}
  {\bibfnamefont {J.}~\bibnamefont {Yoo}}, \bibinfo {author} {\bibfnamefont
  {G.}~\bibnamefont {Young}}, \bibinfo {author} {\bibfnamefont
  {A.}~\bibnamefont {Zalcman}}, \bibinfo {author} {\bibfnamefont
  {Y.}~\bibnamefont {Zhang}}, \bibinfo {author} {\bibfnamefont
  {N.}~\bibnamefont {Zhu}}, \bibinfo {author} {\bibfnamefont {N.}~\bibnamefont
  {Zobrist}}, \bibinfo {author} {\bibfnamefont {D.}~\bibnamefont {Bacon}},
  \bibinfo {author} {\bibfnamefont {S.}~\bibnamefont {Boixo}}, \bibinfo
  {author} {\bibfnamefont {Y.}~\bibnamefont {Chen}}, \bibinfo {author}
  {\bibfnamefont {J.}~\bibnamefont {Hilton}}, \bibinfo {author} {\bibfnamefont
  {J.}~\bibnamefont {Kelly}}, \bibinfo {author} {\bibfnamefont
  {E.}~\bibnamefont {Lucero}}, \bibinfo {author} {\bibfnamefont
  {A.}~\bibnamefont {Megrant}}, \bibinfo {author} {\bibfnamefont
  {H.}~\bibnamefont {Neven}}, \bibinfo {author} {\bibfnamefont
  {V.}~\bibnamefont {Smelyanskiy}}, \bibinfo {author} {\bibfnamefont
  {C.}~\bibnamefont {Gogolin}}, \bibinfo {author} {\bibfnamefont
  {R.}~\bibnamefont {Babbush}},\ and\ \bibinfo {author} {\bibfnamefont {N.~C.}\
  \bibnamefont {Rubin}},\ }\bibfield  {title} {\bibinfo {title}
  {Purification-based quantum error mitigation of pair-correlated electron
  simulations},\ }\bibfield  {journal} {\bibinfo  {journal} {Nature Physics}\
  }\href {https://doi.org/10.1038/s41567-023-02240-y}
  {10.1038/s41567-023-02240-y} (\bibinfo {year} {2023})\BibitemShut {NoStop}%
\bibitem [{\citenamefont {Arute}\ \emph {et~al.}(2019)\citenamefont {Arute},
  \citenamefont {Arya}, \citenamefont {Babbush}, \citenamefont {Bacon},
  \citenamefont {Bardin}, \citenamefont {Barends}, \citenamefont {Biswas},
  \citenamefont {Boixo}, \citenamefont {Brandao}, \citenamefont {Buell} \emph
  {et~al.}}]{arute2019quantum}%
  \BibitemOpen
  \bibfield  {author} {\bibinfo {author} {\bibfnamefont {F.}~\bibnamefont
  {Arute}}, \bibinfo {author} {\bibfnamefont {K.}~\bibnamefont {Arya}},
  \bibinfo {author} {\bibfnamefont {R.}~\bibnamefont {Babbush}}, \bibinfo
  {author} {\bibfnamefont {D.}~\bibnamefont {Bacon}}, \bibinfo {author}
  {\bibfnamefont {J.~C.}\ \bibnamefont {Bardin}}, \bibinfo {author}
  {\bibfnamefont {R.}~\bibnamefont {Barends}}, \bibinfo {author} {\bibfnamefont
  {R.}~\bibnamefont {Biswas}}, \bibinfo {author} {\bibfnamefont
  {S.}~\bibnamefont {Boixo}}, \bibinfo {author} {\bibfnamefont {F.~G.}\
  \bibnamefont {Brandao}}, \bibinfo {author} {\bibfnamefont {D.~A.}\
  \bibnamefont {Buell}}, \emph {et~al.},\ }\bibfield  {title} {\bibinfo {title}
  {Quantum supremacy using a programmable superconducting processor},\ }\href
  {https://doi.org/10.1038/s41586-019-1666-5} {\bibfield  {journal} {\bibinfo
  {journal} {Nature}\ }\textbf {\bibinfo {volume} {574}},\ \bibinfo {pages}
  {505} (\bibinfo {year} {2019})}\BibitemShut {NoStop}%
\bibitem [{\citenamefont {Helsen}\ and\ \citenamefont
  {Walter}(2022)}]{helsen2022thrifty}%
  \BibitemOpen
  \bibfield  {author} {\bibinfo {author} {\bibfnamefont {J.}~\bibnamefont
  {Helsen}}\ and\ \bibinfo {author} {\bibfnamefont {M.}~\bibnamefont
  {Walter}},\ }\href@noop {} {\bibinfo {title} {Thrifty shadow estimation:
  re-using quantum circuits and bounding tails}} (\bibinfo {year} {2022}),\
  \Eprint {https://arxiv.org/abs/2212.06240} {arXiv:2212.06240 [quant-ph]}
  \BibitemShut {NoStop}%
\bibitem [{\citenamefont {Zhou}\ and\ \citenamefont
  {Liu}(2023)}]{zhou2023performanceanalysis}%
  \BibitemOpen
  \bibfield  {author} {\bibinfo {author} {\bibfnamefont {Y.}~\bibnamefont
  {Zhou}}\ and\ \bibinfo {author} {\bibfnamefont {Q.}~\bibnamefont {Liu}},\
  }\bibfield  {title} {\bibinfo {title} {Performance analysis of multi-shot
  shadow estimation},\ }\href {https://doi.org/10.22331/q-2023-06-29-1044}
  {\bibfield  {journal} {\bibinfo  {journal} {{Quantum}}\ }\textbf {\bibinfo
  {volume} {7}},\ \bibinfo {pages} {1044} (\bibinfo {year} {2023})}\BibitemShut
  {NoStop}%
\bibitem [{\citenamefont {VandeVondele}\ and\ \citenamefont
  {Hutter}(2007)}]{vandevondele2007gaussian}%
  \BibitemOpen
  \bibfield  {author} {\bibinfo {author} {\bibfnamefont {J.}~\bibnamefont
  {VandeVondele}}\ and\ \bibinfo {author} {\bibfnamefont {J.}~\bibnamefont
  {Hutter}},\ }\bibfield  {title} {\bibinfo {title} {Gaussian basis sets for
  accurate calculations on molecular systems in gas and condensed phases},\
  }\bibfield  {journal} {\bibinfo  {journal} {The Journal of chemical physics}\
  }\textbf {\bibinfo {volume} {127}},\ \href
  {https://doi.org/10.1063/1.2770708} {10.1063/1.2770708} (\bibinfo {year}
  {2007})\BibitemShut {NoStop}%
\bibitem [{\citenamefont {Goedecker}\ \emph {et~al.}(1996)\citenamefont
  {Goedecker}, \citenamefont {Teter},\ and\ \citenamefont
  {Hutter}}]{goedecker1996separable}%
  \BibitemOpen
  \bibfield  {author} {\bibinfo {author} {\bibfnamefont {S.}~\bibnamefont
  {Goedecker}}, \bibinfo {author} {\bibfnamefont {M.}~\bibnamefont {Teter}},\
  and\ \bibinfo {author} {\bibfnamefont {J.}~\bibnamefont {Hutter}},\
  }\bibfield  {title} {\bibinfo {title} {Separable dual-space gaussian
  pseudopotentials},\ }\href {https://doi.org/10.1103/PhysRevB.54.1703}
  {\bibfield  {journal} {\bibinfo  {journal} {Physical Review B}\ }\textbf
  {\bibinfo {volume} {54}},\ \bibinfo {pages} {1703} (\bibinfo {year}
  {1996})}\BibitemShut {NoStop}%
\bibitem [{\citenamefont {Jnane}\ \emph {et~al.}(2023)\citenamefont {Jnane},
  \citenamefont {Steinberg}, \citenamefont {Cai}, \citenamefont {Nguyen},\ and\
  \citenamefont {Koczor}}]{jnane2023quantum}%
  \BibitemOpen
  \bibfield  {author} {\bibinfo {author} {\bibfnamefont {H.}~\bibnamefont
  {Jnane}}, \bibinfo {author} {\bibfnamefont {J.}~\bibnamefont {Steinberg}},
  \bibinfo {author} {\bibfnamefont {Z.}~\bibnamefont {Cai}}, \bibinfo {author}
  {\bibfnamefont {H.~C.}\ \bibnamefont {Nguyen}},\ and\ \bibinfo {author}
  {\bibfnamefont {B.}~\bibnamefont {Koczor}},\ }\href@noop {} {\bibinfo {title}
  {Quantum error mitigated classical shadows}} (\bibinfo {year} {2023}),\
  \Eprint {https://arxiv.org/abs/2305.04956} {arXiv:2305.04956 [quant-ph]}
  \BibitemShut {NoStop}%
\bibitem [{\citenamefont {Chan}\ \emph {et~al.}(2023)\citenamefont {Chan},
  \citenamefont {Meister}, \citenamefont {Goh},\ and\ \citenamefont
  {Koczor}}]{chan2023algorithmic}%
  \BibitemOpen
  \bibfield  {author} {\bibinfo {author} {\bibfnamefont {H.~H.~S.}\
  \bibnamefont {Chan}}, \bibinfo {author} {\bibfnamefont {R.}~\bibnamefont
  {Meister}}, \bibinfo {author} {\bibfnamefont {M.~L.}\ \bibnamefont {Goh}},\
  and\ \bibinfo {author} {\bibfnamefont {B.}~\bibnamefont {Koczor}},\
  }\href@noop {} {\bibinfo {title} {Algorithmic shadow spectroscopy}} (\bibinfo
  {year} {2023}),\ \Eprint {https://arxiv.org/abs/2212.11036} {arXiv:2212.11036
  [quant-ph]} \BibitemShut {NoStop}%
\bibitem [{\citenamefont {Brieger}\ \emph {et~al.}(2023)\citenamefont
  {Brieger}, \citenamefont {Heinrich}, \citenamefont {Roth},\ and\
  \citenamefont {Kliesch}}]{brieger2023stability}%
  \BibitemOpen
  \bibfield  {author} {\bibinfo {author} {\bibfnamefont {R.}~\bibnamefont
  {Brieger}}, \bibinfo {author} {\bibfnamefont {M.}~\bibnamefont {Heinrich}},
  \bibinfo {author} {\bibfnamefont {I.}~\bibnamefont {Roth}},\ and\ \bibinfo
  {author} {\bibfnamefont {M.}~\bibnamefont {Kliesch}},\ }\href@noop {}
  {\bibinfo {title} {Stability of classical shadows under gate-dependent
  noise}} (\bibinfo {year} {2023}),\ \Eprint {https://arxiv.org/abs/2310.19947}
  {arXiv:2310.19947 [quant-ph]} \BibitemShut {NoStop}%
\bibitem [{\citenamefont {Scheurer}\ \emph {et~al.}(2024)\citenamefont
  {Scheurer}, \citenamefont {Anselmetti}, \citenamefont {Oumarou},
  \citenamefont {Gogolin},\ and\ \citenamefont {Rubin}}]{zenododata}%
  \BibitemOpen
  \bibfield  {author} {\bibinfo {author} {\bibfnamefont {M.}~\bibnamefont
  {Scheurer}}, \bibinfo {author} {\bibfnamefont {G.-L.}\ \bibnamefont
  {Anselmetti}}, \bibinfo {author} {\bibfnamefont {O.}~\bibnamefont {Oumarou}},
  \bibinfo {author} {\bibfnamefont {C.}~\bibnamefont {Gogolin}},\ and\ \bibinfo
  {author} {\bibfnamefont {N.~C.}\ \bibnamefont {Rubin}},\ }\bibfield  {title}
  {\bibinfo {title} {{Data for "Tailored and Externally Corrected Coupled
  Cluster with Quantum Inputs"}},\ }\href
  {https://doi.org/10.5281/zenodo.10470740} {10.5281/zenodo.10470740} (\bibinfo
  {year} {2024})\BibitemShut {NoStop}%
\bibitem [{\citenamefont {Rubin}\ and\ \citenamefont
  {DePrince}(2021)}]{rubin2021pdaggerq}%
  \BibitemOpen
  \bibfield  {author} {\bibinfo {author} {\bibfnamefont {N.~C.}\ \bibnamefont
  {Rubin}}\ and\ \bibinfo {author} {\bibfnamefont {A.~E.}\ \bibnamefont
  {DePrince}},\ }\bibfield  {title} {\bibinfo {title} {p†q: a tool for
  prototyping many-body methods for quantum chemistry},\ }\bibfield  {journal}
  {\bibinfo  {journal} {Molecular Physics}\ }\textbf {\bibinfo {volume}
  {119}},\ \href {https://doi.org/10.1080/00268976.2021.1954709}
  {10.1080/00268976.2021.1954709} (\bibinfo {year} {2021})\BibitemShut
  {NoStop}%
\bibitem [{\citenamefont {Sun}\ \emph {et~al.}(2020)\citenamefont {Sun},
  \citenamefont {Zhang}, \citenamefont {Banerjee}, \citenamefont {Bao},
  \citenamefont {Barbry}, \citenamefont {Blunt}, \citenamefont {Bogdanov},
  \citenamefont {Booth}, \citenamefont {Chen}, \citenamefont {Cui},
  \citenamefont {Eriksen}, \citenamefont {Gao}, \citenamefont {Guo},
  \citenamefont {Hermann}, \citenamefont {Hermes}, \citenamefont {Koh},
  \citenamefont {Koval}, \citenamefont {Lehtola}, \citenamefont {Li},
  \citenamefont {Liu}, \citenamefont {Mardirossian}, \citenamefont {McClain},
  \citenamefont {Motta}, \citenamefont {Mussard}, \citenamefont {Pham},
  \citenamefont {Pulkin}, \citenamefont {Purwanto}, \citenamefont {Robinson},
  \citenamefont {Ronca}, \citenamefont {Sayfutyarova}, \citenamefont
  {Scheurer}, \citenamefont {Schurkus}, \citenamefont {Smith}, \citenamefont
  {Sun}, \citenamefont {Sun}, \citenamefont {Upadhyay}, \citenamefont {Wagner},
  \citenamefont {Wang}, \citenamefont {White}, \citenamefont {Whitfield},
  \citenamefont {Williamson}, \citenamefont {Wouters}, \citenamefont {Yang},
  \citenamefont {Yu}, \citenamefont {Zhu}, \citenamefont {Berkelbach},
  \citenamefont {Sharma}, \citenamefont {Sokolov},\ and\ \citenamefont
  {Chan}}]{sun2020recent}%
  \BibitemOpen
  \bibfield  {author} {\bibinfo {author} {\bibfnamefont {Q.}~\bibnamefont
  {Sun}}, \bibinfo {author} {\bibfnamefont {X.}~\bibnamefont {Zhang}}, \bibinfo
  {author} {\bibfnamefont {S.}~\bibnamefont {Banerjee}}, \bibinfo {author}
  {\bibfnamefont {P.}~\bibnamefont {Bao}}, \bibinfo {author} {\bibfnamefont
  {M.}~\bibnamefont {Barbry}}, \bibinfo {author} {\bibfnamefont {N.~S.}\
  \bibnamefont {Blunt}}, \bibinfo {author} {\bibfnamefont {N.~A.}\ \bibnamefont
  {Bogdanov}}, \bibinfo {author} {\bibfnamefont {G.~H.}\ \bibnamefont {Booth}},
  \bibinfo {author} {\bibfnamefont {J.}~\bibnamefont {Chen}}, \bibinfo {author}
  {\bibfnamefont {Z.-H.}\ \bibnamefont {Cui}}, \bibinfo {author} {\bibfnamefont
  {J.~J.}\ \bibnamefont {Eriksen}}, \bibinfo {author} {\bibfnamefont
  {Y.}~\bibnamefont {Gao}}, \bibinfo {author} {\bibfnamefont {S.}~\bibnamefont
  {Guo}}, \bibinfo {author} {\bibfnamefont {J.}~\bibnamefont {Hermann}},
  \bibinfo {author} {\bibfnamefont {M.~R.}\ \bibnamefont {Hermes}}, \bibinfo
  {author} {\bibfnamefont {K.}~\bibnamefont {Koh}}, \bibinfo {author}
  {\bibfnamefont {P.}~\bibnamefont {Koval}}, \bibinfo {author} {\bibfnamefont
  {S.}~\bibnamefont {Lehtola}}, \bibinfo {author} {\bibfnamefont
  {Z.}~\bibnamefont {Li}}, \bibinfo {author} {\bibfnamefont {J.}~\bibnamefont
  {Liu}}, \bibinfo {author} {\bibfnamefont {N.}~\bibnamefont {Mardirossian}},
  \bibinfo {author} {\bibfnamefont {J.~D.}\ \bibnamefont {McClain}}, \bibinfo
  {author} {\bibfnamefont {M.}~\bibnamefont {Motta}}, \bibinfo {author}
  {\bibfnamefont {B.}~\bibnamefont {Mussard}}, \bibinfo {author} {\bibfnamefont
  {H.~Q.}\ \bibnamefont {Pham}}, \bibinfo {author} {\bibfnamefont
  {A.}~\bibnamefont {Pulkin}}, \bibinfo {author} {\bibfnamefont
  {W.}~\bibnamefont {Purwanto}}, \bibinfo {author} {\bibfnamefont {P.~J.}\
  \bibnamefont {Robinson}}, \bibinfo {author} {\bibfnamefont {E.}~\bibnamefont
  {Ronca}}, \bibinfo {author} {\bibfnamefont {E.~R.}\ \bibnamefont
  {Sayfutyarova}}, \bibinfo {author} {\bibfnamefont {M.}~\bibnamefont
  {Scheurer}}, \bibinfo {author} {\bibfnamefont {H.~F.}\ \bibnamefont
  {Schurkus}}, \bibinfo {author} {\bibfnamefont {J.~E.~T.}\ \bibnamefont
  {Smith}}, \bibinfo {author} {\bibfnamefont {C.}~\bibnamefont {Sun}}, \bibinfo
  {author} {\bibfnamefont {S.-N.}\ \bibnamefont {Sun}}, \bibinfo {author}
  {\bibfnamefont {S.}~\bibnamefont {Upadhyay}}, \bibinfo {author}
  {\bibfnamefont {L.~K.}\ \bibnamefont {Wagner}}, \bibinfo {author}
  {\bibfnamefont {X.}~\bibnamefont {Wang}}, \bibinfo {author} {\bibfnamefont
  {A.}~\bibnamefont {White}}, \bibinfo {author} {\bibfnamefont {J.~D.}\
  \bibnamefont {Whitfield}}, \bibinfo {author} {\bibfnamefont {M.~J.}\
  \bibnamefont {Williamson}}, \bibinfo {author} {\bibfnamefont
  {S.}~\bibnamefont {Wouters}}, \bibinfo {author} {\bibfnamefont
  {J.}~\bibnamefont {Yang}}, \bibinfo {author} {\bibfnamefont {J.~M.}\
  \bibnamefont {Yu}}, \bibinfo {author} {\bibfnamefont {T.}~\bibnamefont
  {Zhu}}, \bibinfo {author} {\bibfnamefont {T.~C.}\ \bibnamefont {Berkelbach}},
  \bibinfo {author} {\bibfnamefont {S.}~\bibnamefont {Sharma}}, \bibinfo
  {author} {\bibfnamefont {A.~Y.}\ \bibnamefont {Sokolov}},\ and\ \bibinfo
  {author} {\bibfnamefont {G.~K.-L.}\ \bibnamefont {Chan}},\ }\bibfield
  {title} {\bibinfo {title} {{Recent developments in the PySCF program
  package}},\ }\href {https://doi.org/10.1063/5.0006074} {\bibfield  {journal}
  {\bibinfo  {journal} {The Journal of Chemical Physics}\ }\textbf {\bibinfo
  {volume} {153}},\ \bibinfo {pages} {024109} (\bibinfo {year}
  {2020})}\BibitemShut {NoStop}%
\bibitem [{\citenamefont {Bradbury}\ \emph {et~al.}(2018)\citenamefont
  {Bradbury}, \citenamefont {Frostig}, \citenamefont {Hawkins}, \citenamefont
  {Johnson}, \citenamefont {Leary}, \citenamefont {Maclaurin}, \citenamefont
  {Necula}, \citenamefont {Paszke}, \citenamefont {Vander{P}las}, \citenamefont
  {Wanderman-{M}ilne},\ and\ \citenamefont {Zhang}}]{jax2018github}%
  \BibitemOpen
  \bibfield  {author} {\bibinfo {author} {\bibfnamefont {J.}~\bibnamefont
  {Bradbury}}, \bibinfo {author} {\bibfnamefont {R.}~\bibnamefont {Frostig}},
  \bibinfo {author} {\bibfnamefont {P.}~\bibnamefont {Hawkins}}, \bibinfo
  {author} {\bibfnamefont {M.~J.}\ \bibnamefont {Johnson}}, \bibinfo {author}
  {\bibfnamefont {C.}~\bibnamefont {Leary}}, \bibinfo {author} {\bibfnamefont
  {D.}~\bibnamefont {Maclaurin}}, \bibinfo {author} {\bibfnamefont
  {G.}~\bibnamefont {Necula}}, \bibinfo {author} {\bibfnamefont
  {A.}~\bibnamefont {Paszke}}, \bibinfo {author} {\bibfnamefont
  {J.}~\bibnamefont {Vander{P}las}}, \bibinfo {author} {\bibfnamefont
  {S.}~\bibnamefont {Wanderman-{M}ilne}},\ and\ \bibinfo {author}
  {\bibfnamefont {Q.}~\bibnamefont {Zhang}},\ }\href
  {http://github.com/google/jax} {\bibinfo {title} {{JAX}: composable
  transformations of {P}ython+{N}um{P}y programs}} (\bibinfo {year}
  {2018})\BibitemShut {NoStop}%
\bibitem [{\citenamefont {Rubin}\ \emph {et~al.}(2021)\citenamefont {Rubin},
  \citenamefont {Gunst}, \citenamefont {White}, \citenamefont {Freitag},
  \citenamefont {Throssell}, \citenamefont {Chan}, \citenamefont {Babbush},\
  and\ \citenamefont {Shiozaki}}]{rubin2021fermionic}%
  \BibitemOpen
  \bibfield  {author} {\bibinfo {author} {\bibfnamefont {N.~C.}\ \bibnamefont
  {Rubin}}, \bibinfo {author} {\bibfnamefont {K.}~\bibnamefont {Gunst}},
  \bibinfo {author} {\bibfnamefont {A.}~\bibnamefont {White}}, \bibinfo
  {author} {\bibfnamefont {L.}~\bibnamefont {Freitag}}, \bibinfo {author}
  {\bibfnamefont {K.}~\bibnamefont {Throssell}}, \bibinfo {author}
  {\bibfnamefont {G.~K.-L.}\ \bibnamefont {Chan}}, \bibinfo {author}
  {\bibfnamefont {R.}~\bibnamefont {Babbush}},\ and\ \bibinfo {author}
  {\bibfnamefont {T.}~\bibnamefont {Shiozaki}},\ }\bibfield  {title} {\bibinfo
  {title} {The fermionic quantum emulator},\ }\href
  {https://doi.org/10.22331/q-2021-10-27-568} {\bibfield  {journal} {\bibinfo
  {journal} {Quantum}\ }\textbf {\bibinfo {volume} {5}},\ \bibinfo {pages}
  {568} (\bibinfo {year} {2021})}\BibitemShut {NoStop}%
\bibitem [{\citenamefont {Bergholm}\ \emph {et~al.}(2018)\citenamefont
  {Bergholm}, \citenamefont {Izaac}, \citenamefont {Schuld}, \citenamefont
  {Gogolin}, \citenamefont {Ahmed}, \citenamefont {Ajith}, \citenamefont
  {Alam}, \citenamefont {Alonso-Linaje}, \citenamefont {AkashNarayanan},
  \citenamefont {Asadi}, \citenamefont {Arrazola}, \citenamefont {Azad},
  \citenamefont {Banning}, \citenamefont {Blank}, \citenamefont {Bromley},
  \citenamefont {Cordier}, \citenamefont {Ceroni}, \citenamefont {Delgado},
  \citenamefont {Di~Matteo}, \citenamefont {Dusko}, \citenamefont {Garg},
  \citenamefont {Guala}, \citenamefont {Hayes}, \citenamefont {Hill},
  \citenamefont {Ijaz}, \citenamefont {Isacsson}, \citenamefont {Ittah},
  \citenamefont {Jahangiri}, \citenamefont {Jain}, \citenamefont {Jiang},
  \citenamefont {Khandelwal}, \citenamefont {Kottmann}, \citenamefont {Lang},
  \citenamefont {Lee}, \citenamefont {Loke}, \citenamefont {Lowe},
  \citenamefont {McKiernan}, \citenamefont {Meyer}, \citenamefont
  {Montañez-Barrera}, \citenamefont {Moyard}, \citenamefont {Niu},
  \citenamefont {O'Riordan}, \citenamefont {Oud}, \citenamefont {Panigrahi},
  \citenamefont {Park}, \citenamefont {Polatajko}, \citenamefont {Quesada},
  \citenamefont {Roberts}, \citenamefont {Sá}, \citenamefont {Schoch},
  \citenamefont {Shi}, \citenamefont {Shu}, \citenamefont {Sim}, \citenamefont
  {Singh}, \citenamefont {Strandberg}, \citenamefont {Soni}, \citenamefont
  {Száva}, \citenamefont {Thabet}, \citenamefont {Vargas-Hernández},
  \citenamefont {Vincent}, \citenamefont {Vitucci}, \citenamefont {Weber},
  \citenamefont {Wierichs}, \citenamefont {Wiersema}, \citenamefont {Willmann},
  \citenamefont {Wong}, \citenamefont {Zhang},\ and\ \citenamefont
  {Killoran}}]{pennylane}%
  \BibitemOpen
  \bibfield  {author} {\bibinfo {author} {\bibfnamefont {V.}~\bibnamefont
  {Bergholm}}, \bibinfo {author} {\bibfnamefont {J.}~\bibnamefont {Izaac}},
  \bibinfo {author} {\bibfnamefont {M.}~\bibnamefont {Schuld}}, \bibinfo
  {author} {\bibfnamefont {C.}~\bibnamefont {Gogolin}}, \bibinfo {author}
  {\bibfnamefont {S.}~\bibnamefont {Ahmed}}, \bibinfo {author} {\bibfnamefont
  {V.}~\bibnamefont {Ajith}}, \bibinfo {author} {\bibfnamefont {M.~S.}\
  \bibnamefont {Alam}}, \bibinfo {author} {\bibfnamefont {G.}~\bibnamefont
  {Alonso-Linaje}}, \bibinfo {author} {\bibfnamefont {B.}~\bibnamefont
  {AkashNarayanan}}, \bibinfo {author} {\bibfnamefont {A.}~\bibnamefont
  {Asadi}}, \bibinfo {author} {\bibfnamefont {J.~M.}\ \bibnamefont {Arrazola}},
  \bibinfo {author} {\bibfnamefont {U.}~\bibnamefont {Azad}}, \bibinfo {author}
  {\bibfnamefont {S.}~\bibnamefont {Banning}}, \bibinfo {author} {\bibfnamefont
  {C.}~\bibnamefont {Blank}}, \bibinfo {author} {\bibfnamefont {T.~R.}\
  \bibnamefont {Bromley}}, \bibinfo {author} {\bibfnamefont {B.~A.}\
  \bibnamefont {Cordier}}, \bibinfo {author} {\bibfnamefont {J.}~\bibnamefont
  {Ceroni}}, \bibinfo {author} {\bibfnamefont {A.}~\bibnamefont {Delgado}},
  \bibinfo {author} {\bibfnamefont {O.}~\bibnamefont {Di~Matteo}}, \bibinfo
  {author} {\bibfnamefont {A.}~\bibnamefont {Dusko}}, \bibinfo {author}
  {\bibfnamefont {T.}~\bibnamefont {Garg}}, \bibinfo {author} {\bibfnamefont
  {D.}~\bibnamefont {Guala}}, \bibinfo {author} {\bibfnamefont
  {A.}~\bibnamefont {Hayes}}, \bibinfo {author} {\bibfnamefont
  {R.}~\bibnamefont {Hill}}, \bibinfo {author} {\bibfnamefont {A.}~\bibnamefont
  {Ijaz}}, \bibinfo {author} {\bibfnamefont {T.}~\bibnamefont {Isacsson}},
  \bibinfo {author} {\bibfnamefont {D.}~\bibnamefont {Ittah}}, \bibinfo
  {author} {\bibfnamefont {S.}~\bibnamefont {Jahangiri}}, \bibinfo {author}
  {\bibfnamefont {P.}~\bibnamefont {Jain}}, \bibinfo {author} {\bibfnamefont
  {E.}~\bibnamefont {Jiang}}, \bibinfo {author} {\bibfnamefont
  {A.}~\bibnamefont {Khandelwal}}, \bibinfo {author} {\bibfnamefont
  {K.}~\bibnamefont {Kottmann}}, \bibinfo {author} {\bibfnamefont {R.~A.}\
  \bibnamefont {Lang}}, \bibinfo {author} {\bibfnamefont {C.}~\bibnamefont
  {Lee}}, \bibinfo {author} {\bibfnamefont {T.}~\bibnamefont {Loke}}, \bibinfo
  {author} {\bibfnamefont {A.}~\bibnamefont {Lowe}}, \bibinfo {author}
  {\bibfnamefont {K.}~\bibnamefont {McKiernan}}, \bibinfo {author}
  {\bibfnamefont {J.~J.}\ \bibnamefont {Meyer}}, \bibinfo {author}
  {\bibfnamefont {J.~A.}\ \bibnamefont {Montañez-Barrera}}, \bibinfo {author}
  {\bibfnamefont {R.}~\bibnamefont {Moyard}}, \bibinfo {author} {\bibfnamefont
  {Z.}~\bibnamefont {Niu}}, \bibinfo {author} {\bibfnamefont {L.~J.}\
  \bibnamefont {O'Riordan}}, \bibinfo {author} {\bibfnamefont {S.}~\bibnamefont
  {Oud}}, \bibinfo {author} {\bibfnamefont {A.}~\bibnamefont {Panigrahi}},
  \bibinfo {author} {\bibfnamefont {C.-Y.}\ \bibnamefont {Park}}, \bibinfo
  {author} {\bibfnamefont {D.}~\bibnamefont {Polatajko}}, \bibinfo {author}
  {\bibfnamefont {N.}~\bibnamefont {Quesada}}, \bibinfo {author} {\bibfnamefont
  {C.}~\bibnamefont {Roberts}}, \bibinfo {author} {\bibfnamefont
  {N.}~\bibnamefont {Sá}}, \bibinfo {author} {\bibfnamefont {I.}~\bibnamefont
  {Schoch}}, \bibinfo {author} {\bibfnamefont {B.}~\bibnamefont {Shi}},
  \bibinfo {author} {\bibfnamefont {S.}~\bibnamefont {Shu}}, \bibinfo {author}
  {\bibfnamefont {S.}~\bibnamefont {Sim}}, \bibinfo {author} {\bibfnamefont
  {A.}~\bibnamefont {Singh}}, \bibinfo {author} {\bibfnamefont
  {I.}~\bibnamefont {Strandberg}}, \bibinfo {author} {\bibfnamefont
  {J.}~\bibnamefont {Soni}}, \bibinfo {author} {\bibfnamefont {A.}~\bibnamefont
  {Száva}}, \bibinfo {author} {\bibfnamefont {S.}~\bibnamefont {Thabet}},
  \bibinfo {author} {\bibfnamefont {R.~A.}\ \bibnamefont {Vargas-Hernández}},
  \bibinfo {author} {\bibfnamefont {T.}~\bibnamefont {Vincent}}, \bibinfo
  {author} {\bibfnamefont {N.}~\bibnamefont {Vitucci}}, \bibinfo {author}
  {\bibfnamefont {M.}~\bibnamefont {Weber}}, \bibinfo {author} {\bibfnamefont
  {D.}~\bibnamefont {Wierichs}}, \bibinfo {author} {\bibfnamefont
  {R.}~\bibnamefont {Wiersema}}, \bibinfo {author} {\bibfnamefont
  {M.}~\bibnamefont {Willmann}}, \bibinfo {author} {\bibfnamefont
  {V.}~\bibnamefont {Wong}}, \bibinfo {author} {\bibfnamefont {S.}~\bibnamefont
  {Zhang}},\ and\ \bibinfo {author} {\bibfnamefont {N.}~\bibnamefont
  {Killoran}},\ }\href {https://arxiv.org/abs/1811.04968} {\bibinfo {title}
  {Pennylane: Automatic differentiation of hybrid quantum-classical
  computations}} (\bibinfo {year} {2018})\BibitemShut {NoStop}%
\bibitem [{\citenamefont {Harris}\ \emph {et~al.}(2020)\citenamefont {Harris},
  \citenamefont {Millman}, \citenamefont {van~der Walt}, \citenamefont
  {Gommers}, \citenamefont {Virtanen}, \citenamefont {Cournapeau},
  \citenamefont {Wieser}, \citenamefont {Taylor}, \citenamefont {Berg},
  \citenamefont {Smith}, \citenamefont {Kern}, \citenamefont {Picus},
  \citenamefont {Hoyer}, \citenamefont {van Kerkwijk}, \citenamefont {Brett},
  \citenamefont {Haldane}, \citenamefont {del Río}, \citenamefont {Wiebe},
  \citenamefont {Peterson}, \citenamefont {Gérard-Marchant}, \citenamefont
  {Sheppard}, \citenamefont {Reddy}, \citenamefont {Weckesser}, \citenamefont
  {Abbasi}, \citenamefont {Gohlke},\ and\ \citenamefont
  {Oliphant}}]{harris2020numpy}%
  \BibitemOpen
  \bibfield  {author} {\bibinfo {author} {\bibfnamefont {C.~R.}\ \bibnamefont
  {Harris}}, \bibinfo {author} {\bibfnamefont {K.~J.}\ \bibnamefont {Millman}},
  \bibinfo {author} {\bibfnamefont {S.~J.}\ \bibnamefont {van~der Walt}},
  \bibinfo {author} {\bibfnamefont {R.}~\bibnamefont {Gommers}}, \bibinfo
  {author} {\bibfnamefont {P.}~\bibnamefont {Virtanen}}, \bibinfo {author}
  {\bibfnamefont {D.}~\bibnamefont {Cournapeau}}, \bibinfo {author}
  {\bibfnamefont {E.}~\bibnamefont {Wieser}}, \bibinfo {author} {\bibfnamefont
  {J.}~\bibnamefont {Taylor}}, \bibinfo {author} {\bibfnamefont
  {S.}~\bibnamefont {Berg}}, \bibinfo {author} {\bibfnamefont {N.~J.}\
  \bibnamefont {Smith}}, \bibinfo {author} {\bibfnamefont {R.}~\bibnamefont
  {Kern}}, \bibinfo {author} {\bibfnamefont {M.}~\bibnamefont {Picus}},
  \bibinfo {author} {\bibfnamefont {S.}~\bibnamefont {Hoyer}}, \bibinfo
  {author} {\bibfnamefont {M.~H.}\ \bibnamefont {van Kerkwijk}}, \bibinfo
  {author} {\bibfnamefont {M.}~\bibnamefont {Brett}}, \bibinfo {author}
  {\bibfnamefont {A.}~\bibnamefont {Haldane}}, \bibinfo {author} {\bibfnamefont
  {J.~F.}\ \bibnamefont {del Río}}, \bibinfo {author} {\bibfnamefont
  {M.}~\bibnamefont {Wiebe}}, \bibinfo {author} {\bibfnamefont
  {P.}~\bibnamefont {Peterson}}, \bibinfo {author} {\bibfnamefont
  {P.}~\bibnamefont {Gérard-Marchant}}, \bibinfo {author} {\bibfnamefont
  {K.}~\bibnamefont {Sheppard}}, \bibinfo {author} {\bibfnamefont
  {T.}~\bibnamefont {Reddy}}, \bibinfo {author} {\bibfnamefont
  {W.}~\bibnamefont {Weckesser}}, \bibinfo {author} {\bibfnamefont
  {H.}~\bibnamefont {Abbasi}}, \bibinfo {author} {\bibfnamefont
  {C.}~\bibnamefont {Gohlke}},\ and\ \bibinfo {author} {\bibfnamefont {T.~E.}\
  \bibnamefont {Oliphant}},\ }\bibfield  {title} {\bibinfo {title} {Array
  programming with numpy},\ }\href {https://doi.org/10.1038/s41586-020-2649-2}
  {\bibfield  {journal} {\bibinfo  {journal} {Nature}\ }\textbf {\bibinfo
  {volume} {585}},\ \bibinfo {pages} {357} (\bibinfo {year}
  {2020})}\BibitemShut {NoStop}%
\bibitem [{\citenamefont {Virtanen}\ \emph {et~al.}(2020)\citenamefont
  {Virtanen}, \citenamefont {Gommers}, \citenamefont {Oliphant}, \citenamefont
  {Haberland}, \citenamefont {Reddy}, \citenamefont {Cournapeau}, \citenamefont
  {Burovski}, \citenamefont {Peterson}, \citenamefont {Weckesser},
  \citenamefont {Bright}, \citenamefont {{van der Walt}}, \citenamefont
  {Brett}, \citenamefont {Wilson}, \citenamefont {Millman}, \citenamefont
  {Mayorov}, \citenamefont {Nelson}, \citenamefont {Jones}, \citenamefont
  {Kern}, \citenamefont {Larson}, \citenamefont {Carey}, \citenamefont {Polat},
  \citenamefont {Feng}, \citenamefont {Moore}, \citenamefont {{VanderPlas}},
  \citenamefont {Laxalde}, \citenamefont {Perktold}, \citenamefont {Cimrman},
  \citenamefont {Henriksen}, \citenamefont {Quintero}, \citenamefont {Harris},
  \citenamefont {Archibald}, \citenamefont {Ribeiro}, \citenamefont
  {Pedregosa}, \citenamefont {{van Mulbregt}},\ and\ \citenamefont {{SciPy 1.0
  Contributors}}}]{scipy}%
  \BibitemOpen
  \bibfield  {author} {\bibinfo {author} {\bibfnamefont {P.}~\bibnamefont
  {Virtanen}}, \bibinfo {author} {\bibfnamefont {R.}~\bibnamefont {Gommers}},
  \bibinfo {author} {\bibfnamefont {T.~E.}\ \bibnamefont {Oliphant}}, \bibinfo
  {author} {\bibfnamefont {M.}~\bibnamefont {Haberland}}, \bibinfo {author}
  {\bibfnamefont {T.}~\bibnamefont {Reddy}}, \bibinfo {author} {\bibfnamefont
  {D.}~\bibnamefont {Cournapeau}}, \bibinfo {author} {\bibfnamefont
  {E.}~\bibnamefont {Burovski}}, \bibinfo {author} {\bibfnamefont
  {P.}~\bibnamefont {Peterson}}, \bibinfo {author} {\bibfnamefont
  {W.}~\bibnamefont {Weckesser}}, \bibinfo {author} {\bibfnamefont
  {J.}~\bibnamefont {Bright}}, \bibinfo {author} {\bibfnamefont {S.~J.}\
  \bibnamefont {{van der Walt}}}, \bibinfo {author} {\bibfnamefont
  {M.}~\bibnamefont {Brett}}, \bibinfo {author} {\bibfnamefont
  {J.}~\bibnamefont {Wilson}}, \bibinfo {author} {\bibfnamefont {K.~J.}\
  \bibnamefont {Millman}}, \bibinfo {author} {\bibfnamefont {N.}~\bibnamefont
  {Mayorov}}, \bibinfo {author} {\bibfnamefont {A.~R.~J.}\ \bibnamefont
  {Nelson}}, \bibinfo {author} {\bibfnamefont {E.}~\bibnamefont {Jones}},
  \bibinfo {author} {\bibfnamefont {R.}~\bibnamefont {Kern}}, \bibinfo {author}
  {\bibfnamefont {E.}~\bibnamefont {Larson}}, \bibinfo {author} {\bibfnamefont
  {C.~J.}\ \bibnamefont {Carey}}, \bibinfo {author} {\bibfnamefont
  {{\.I}.}~\bibnamefont {Polat}}, \bibinfo {author} {\bibfnamefont
  {Y.}~\bibnamefont {Feng}}, \bibinfo {author} {\bibfnamefont {E.~W.}\
  \bibnamefont {Moore}}, \bibinfo {author} {\bibfnamefont {J.}~\bibnamefont
  {{VanderPlas}}}, \bibinfo {author} {\bibfnamefont {D.}~\bibnamefont
  {Laxalde}}, \bibinfo {author} {\bibfnamefont {J.}~\bibnamefont {Perktold}},
  \bibinfo {author} {\bibfnamefont {R.}~\bibnamefont {Cimrman}}, \bibinfo
  {author} {\bibfnamefont {I.}~\bibnamefont {Henriksen}}, \bibinfo {author}
  {\bibfnamefont {E.~A.}\ \bibnamefont {Quintero}}, \bibinfo {author}
  {\bibfnamefont {C.~R.}\ \bibnamefont {Harris}}, \bibinfo {author}
  {\bibfnamefont {A.~M.}\ \bibnamefont {Archibald}}, \bibinfo {author}
  {\bibfnamefont {A.~H.}\ \bibnamefont {Ribeiro}}, \bibinfo {author}
  {\bibfnamefont {F.}~\bibnamefont {Pedregosa}}, \bibinfo {author}
  {\bibfnamefont {P.}~\bibnamefont {{van Mulbregt}}},\ and\ \bibinfo {author}
  {\bibnamefont {{SciPy 1.0 Contributors}}},\ }\bibfield  {title} {\bibinfo
  {title} {{{SciPy} 1.0: Fundamental Algorithms for Scientific Computing in
  Python}},\ }\href {https://doi.org/10.1038/s41592-019-0686-2} {\bibfield
  {journal} {\bibinfo  {journal} {Nature Methods}\ }\textbf {\bibinfo {volume}
  {17}},\ \bibinfo {pages} {261} (\bibinfo {year} {2020})}\BibitemShut
  {NoStop}%
\bibitem [{\citenamefont {pandas~development team}(2020)}]{reback2020pandas}%
  \BibitemOpen
  \bibfield  {author} {\bibinfo {author} {\bibfnamefont {T.}~\bibnamefont
  {pandas~development team}},\ }\href {https://doi.org/10.5281/zenodo.3509134}
  {\bibinfo {title} {pandas-dev/pandas: Pandas}} (\bibinfo {year}
  {2020})\BibitemShut {NoStop}%
\bibitem [{\citenamefont {{W}es
  {M}c{K}inney}(2010)}]{mckinney-proc-scipy-2010}%
  \BibitemOpen
  \bibfield  {author} {\bibinfo {author} {\bibnamefont {{W}es {M}c{K}inney}},\
  }\bibfield  {title} {\bibinfo {title} {{D}ata {S}tructures for {S}tatistical
  {C}omputing in {P}ython},\ }in\ \href
  {https://doi.org/10.25080/Majora-92bf1922-00a} {\emph {\bibinfo {booktitle}
  {{P}roceedings of the 9th {P}ython in {S}cience {C}onference}}},\ \bibinfo
  {editor} {edited by\ \bibinfo {editor} {\bibnamefont {{S}t\'efan van~der
  {W}alt}}\ and\ \bibinfo {editor} {\bibnamefont {{J}arrod {M}illman}}}\
  (\bibinfo {year} {2010})\ pp.\ \bibinfo {pages} {56 -- 61}\BibitemShut
  {NoStop}%
\bibitem [{\citenamefont {Hunter}(2007)}]{Hunter:2007}%
  \BibitemOpen
  \bibfield  {author} {\bibinfo {author} {\bibfnamefont {J.~D.}\ \bibnamefont
  {Hunter}},\ }\bibfield  {title} {\bibinfo {title} {Matplotlib: A 2d graphics
  environment},\ }\href {https://doi.org/10.1109/MCSE.2007.55} {\bibfield
  {journal} {\bibinfo  {journal} {Computing in Science \& Engineering}\
  }\textbf {\bibinfo {volume} {9}},\ \bibinfo {pages} {90} (\bibinfo {year}
  {2007})}\BibitemShut {NoStop}%
\bibitem [{\citenamefont {Waskom}(2021)}]{Waskom2021}%
  \BibitemOpen
  \bibfield  {author} {\bibinfo {author} {\bibfnamefont {M.~L.}\ \bibnamefont
  {Waskom}},\ }\bibfield  {title} {\bibinfo {title} {seaborn: statistical data
  visualization},\ }\href {https://doi.org/10.21105/joss.03021} {\bibfield
  {journal} {\bibinfo  {journal} {Journal of Open Source Software}\ }\textbf
  {\bibinfo {volume} {6}},\ \bibinfo {pages} {3021} (\bibinfo {year}
  {2021})}\BibitemShut {NoStop}%
\bibitem [{\citenamefont {Meurer}\ \emph {et~al.}(2017)\citenamefont {Meurer},
  \citenamefont {Smith}, \citenamefont {Paprocki}, \citenamefont
  {\v{C}ert\'{i}k}, \citenamefont {Kirpichev}, \citenamefont {Rocklin},
  \citenamefont {Kumar}, \citenamefont {Ivanov}, \citenamefont {Moore},
  \citenamefont {Singh}, \citenamefont {Rathnayake}, \citenamefont {Vig},
  \citenamefont {Granger}, \citenamefont {Muller}, \citenamefont {Bonazzi},
  \citenamefont {Gupta}, \citenamefont {Vats}, \citenamefont {Johansson},
  \citenamefont {Pedregosa}, \citenamefont {Curry}, \citenamefont {Terrel},
  \citenamefont {Rou\v{c}ka}, \citenamefont {Saboo}, \citenamefont {Fernando},
  \citenamefont {Kulal}, \citenamefont {Cimrman},\ and\ \citenamefont
  {Scopatz}}]{meurer2017sympy}%
  \BibitemOpen
  \bibfield  {author} {\bibinfo {author} {\bibfnamefont {A.}~\bibnamefont
  {Meurer}}, \bibinfo {author} {\bibfnamefont {C.~P.}\ \bibnamefont {Smith}},
  \bibinfo {author} {\bibfnamefont {M.}~\bibnamefont {Paprocki}}, \bibinfo
  {author} {\bibfnamefont {O.}~\bibnamefont {\v{C}ert\'{i}k}}, \bibinfo
  {author} {\bibfnamefont {S.~B.}\ \bibnamefont {Kirpichev}}, \bibinfo {author}
  {\bibfnamefont {M.}~\bibnamefont {Rocklin}}, \bibinfo {author} {\bibfnamefont
  {A.}~\bibnamefont {Kumar}}, \bibinfo {author} {\bibfnamefont
  {S.}~\bibnamefont {Ivanov}}, \bibinfo {author} {\bibfnamefont {J.~K.}\
  \bibnamefont {Moore}}, \bibinfo {author} {\bibfnamefont {S.}~\bibnamefont
  {Singh}}, \bibinfo {author} {\bibfnamefont {T.}~\bibnamefont {Rathnayake}},
  \bibinfo {author} {\bibfnamefont {S.}~\bibnamefont {Vig}}, \bibinfo {author}
  {\bibfnamefont {B.~E.}\ \bibnamefont {Granger}}, \bibinfo {author}
  {\bibfnamefont {R.~P.}\ \bibnamefont {Muller}}, \bibinfo {author}
  {\bibfnamefont {F.}~\bibnamefont {Bonazzi}}, \bibinfo {author} {\bibfnamefont
  {H.}~\bibnamefont {Gupta}}, \bibinfo {author} {\bibfnamefont
  {S.}~\bibnamefont {Vats}}, \bibinfo {author} {\bibfnamefont {F.}~\bibnamefont
  {Johansson}}, \bibinfo {author} {\bibfnamefont {F.}~\bibnamefont
  {Pedregosa}}, \bibinfo {author} {\bibfnamefont {M.~J.}\ \bibnamefont
  {Curry}}, \bibinfo {author} {\bibfnamefont {A.~R.}\ \bibnamefont {Terrel}},
  \bibinfo {author} {\bibfnamefont {v.}~\bibnamefont {Rou\v{c}ka}}, \bibinfo
  {author} {\bibfnamefont {A.}~\bibnamefont {Saboo}}, \bibinfo {author}
  {\bibfnamefont {I.}~\bibnamefont {Fernando}}, \bibinfo {author}
  {\bibfnamefont {S.}~\bibnamefont {Kulal}}, \bibinfo {author} {\bibfnamefont
  {R.}~\bibnamefont {Cimrman}},\ and\ \bibinfo {author} {\bibfnamefont
  {A.}~\bibnamefont {Scopatz}},\ }\bibfield  {title} {\bibinfo {title} {Sympy:
  symbolic computing in python},\ }\href {https://doi.org/10.7717/peerj-cs.103}
  {\bibfield  {journal} {\bibinfo  {journal} {PeerJ Computer Science}\ }\textbf
  {\bibinfo {volume} {3}},\ \bibinfo {pages} {e103} (\bibinfo {year}
  {2017})}\BibitemShut {NoStop}%
\bibitem [{\citenamefont {Wimmer}(2011)}]{wimmer2011efficient}%
  \BibitemOpen
  \bibfield  {author} {\bibinfo {author} {\bibfnamefont {M.}~\bibnamefont
  {Wimmer}},\ }\bibfield  {title} {\bibinfo {title} {Efficient numerical
  computation of the pfaffian for dense and banded skew-symmetric matrices},\
  }\bibfield  {journal} {\bibinfo  {journal} {ACM Transactions on Mathematical
  Software}\ }\textbf {\bibinfo {volume} {38}},\ \href
  {https://doi.org/10.1145/2331130.2331138} {10.1145/2331130.2331138} (\bibinfo
  {year} {2011})\BibitemShut {NoStop}%
\bibitem [{\citenamefont {Hehre}\ \emph {et~al.}(1969)\citenamefont {Hehre},
  \citenamefont {Stewart},\ and\ \citenamefont {Pople}}]{hehre1969a}%
  \BibitemOpen
  \bibfield  {author} {\bibinfo {author} {\bibfnamefont {W.~J.}\ \bibnamefont
  {Hehre}}, \bibinfo {author} {\bibfnamefont {R.~F.}\ \bibnamefont {Stewart}},\
  and\ \bibinfo {author} {\bibfnamefont {J.~A.}\ \bibnamefont {Pople}},\
  }\bibfield  {title} {\bibinfo {title} {Self-consistent molecular-orbital
  methods. i. use of gaussian expansions of slater-type atomic orbitals},\
  }\href {https://doi.org/10.1063/1.1672392} {\bibfield  {journal} {\bibinfo
  {journal} {J. Chem. Phys.}\ }\textbf {\bibinfo {volume} {51}},\ \bibinfo
  {pages} {2657} (\bibinfo {year} {1969})}\BibitemShut {NoStop}%
\bibitem [{\citenamefont {Hehre}\ \emph {et~al.}(1970)\citenamefont {Hehre},
  \citenamefont {Ditchfield}, \citenamefont {Stewart},\ and\ \citenamefont
  {Pople}}]{hehre1970a}%
  \BibitemOpen
  \bibfield  {author} {\bibinfo {author} {\bibfnamefont {W.~J.}\ \bibnamefont
  {Hehre}}, \bibinfo {author} {\bibfnamefont {R.}~\bibnamefont {Ditchfield}},
  \bibinfo {author} {\bibfnamefont {R.~F.}\ \bibnamefont {Stewart}},\ and\
  \bibinfo {author} {\bibfnamefont {J.~A.}\ \bibnamefont {Pople}},\ }\bibfield
  {title} {\bibinfo {title} {Self-consistent molecular orbital methods. iv. use
  of gaussian expansions of slater-type orbitals. extension to second-row
  molecules},\ }\href {https://doi.org/10.1063/1.1673374} {\bibfield  {journal}
  {\bibinfo  {journal} {J. Chem. Phys.}\ }\textbf {\bibinfo {volume} {52}},\
  \bibinfo {pages} {2769} (\bibinfo {year} {1970})}\BibitemShut {NoStop}%
\bibitem [{\citenamefont {Ditchfield}\ \emph {et~al.}(1971)\citenamefont
  {Ditchfield}, \citenamefont {Hehre},\ and\ \citenamefont
  {Pople}}]{ditchfield1971a}%
  \BibitemOpen
  \bibfield  {author} {\bibinfo {author} {\bibfnamefont {R.}~\bibnamefont
  {Ditchfield}}, \bibinfo {author} {\bibfnamefont {W.~J.}\ \bibnamefont
  {Hehre}},\ and\ \bibinfo {author} {\bibfnamefont {J.~A.}\ \bibnamefont
  {Pople}},\ }\bibfield  {title} {\bibinfo {title} {Self-consistent
  molecular-orbital methods. ix. an extended gaussian-type basis for
  molecular-orbital studies of organic molecules},\ }\href
  {https://doi.org/10.1063/1.1674902} {\bibfield  {journal} {\bibinfo
  {journal} {J. Chem. Phys.}\ }\textbf {\bibinfo {volume} {54}},\ \bibinfo
  {pages} {724} (\bibinfo {year} {1971})}\BibitemShut {NoStop}%
\bibitem [{\citenamefont {Francl}\ \emph {et~al.}(1982)\citenamefont {Francl},
  \citenamefont {Pietro}, \citenamefont {Hehre}, \citenamefont {Binkley},
  \citenamefont {Gordon}, \citenamefont {DeFrees},\ and\ \citenamefont
  {Pople}}]{francl1982a}%
  \BibitemOpen
  \bibfield  {author} {\bibinfo {author} {\bibfnamefont {M.~M.}\ \bibnamefont
  {Francl}}, \bibinfo {author} {\bibfnamefont {W.~J.}\ \bibnamefont {Pietro}},
  \bibinfo {author} {\bibfnamefont {W.~J.}\ \bibnamefont {Hehre}}, \bibinfo
  {author} {\bibfnamefont {J.~S.}\ \bibnamefont {Binkley}}, \bibinfo {author}
  {\bibfnamefont {M.~S.}\ \bibnamefont {Gordon}}, \bibinfo {author}
  {\bibfnamefont {D.~J.}\ \bibnamefont {DeFrees}},\ and\ \bibinfo {author}
  {\bibfnamefont {J.~A.}\ \bibnamefont {Pople}},\ }\bibfield  {title} {\bibinfo
  {title} {Self-consistent molecular orbital methods. xxiii. a
  polarization-type basis set for second-row elements},\ }\href
  {https://doi.org/10.1063/1.444267} {\bibfield  {journal} {\bibinfo  {journal}
  {J. Chem. Phys.}\ }\textbf {\bibinfo {volume} {77}},\ \bibinfo {pages} {3654}
  (\bibinfo {year} {1982})}\BibitemShut {NoStop}%
\bibitem [{\citenamefont {Gordon}\ \emph {et~al.}(1982)\citenamefont {Gordon},
  \citenamefont {Binkley}, \citenamefont {Pople}, \citenamefont {Pietro},\ and\
  \citenamefont {Hehre}}]{gordon1982a}%
  \BibitemOpen
  \bibfield  {author} {\bibinfo {author} {\bibfnamefont {M.~S.}\ \bibnamefont
  {Gordon}}, \bibinfo {author} {\bibfnamefont {J.~S.}\ \bibnamefont {Binkley}},
  \bibinfo {author} {\bibfnamefont {J.~A.}\ \bibnamefont {Pople}}, \bibinfo
  {author} {\bibfnamefont {W.~J.}\ \bibnamefont {Pietro}},\ and\ \bibinfo
  {author} {\bibfnamefont {W.~J.}\ \bibnamefont {Hehre}},\ }\bibfield  {title}
  {\bibinfo {title} {Self-consistent molecular-orbital methods. 22. small
  split-valence basis sets for second-row elements},\ }\href
  {https://doi.org/10.1021/ja00374a017} {\bibfield  {journal} {\bibinfo
  {journal} {J. Am. Chem. Soc.}\ }\textbf {\bibinfo {volume} {104}},\ \bibinfo
  {pages} {2797} (\bibinfo {year} {1982})}\BibitemShut {NoStop}%
\bibitem [{\citenamefont {Hehre}\ \emph {et~al.}(1972)\citenamefont {Hehre},
  \citenamefont {Ditchfield},\ and\ \citenamefont {Pople}}]{hehre1972a}%
  \BibitemOpen
  \bibfield  {author} {\bibinfo {author} {\bibfnamefont {W.~J.}\ \bibnamefont
  {Hehre}}, \bibinfo {author} {\bibfnamefont {R.}~\bibnamefont {Ditchfield}},\
  and\ \bibinfo {author} {\bibfnamefont {J.~A.}\ \bibnamefont {Pople}},\
  }\bibfield  {title} {\bibinfo {title} {Self-consistent molecular orbital
  methods. xii. further extensions of gaussian-type basis sets for use in
  molecular orbital studies of organic molecules},\ }\href
  {https://doi.org/10.1063/1.1677527} {\bibfield  {journal} {\bibinfo
  {journal} {J. Chem. Phys.}\ }\textbf {\bibinfo {volume} {56}},\ \bibinfo
  {pages} {2257} (\bibinfo {year} {1972})}\BibitemShut {NoStop}%
\bibitem [{\citenamefont {Kendall}\ \emph {et~al.}(1992)\citenamefont
  {Kendall}, \citenamefont {Dunning},\ and\ \citenamefont
  {Harrison}}]{kendall1992a}%
  \BibitemOpen
  \bibfield  {author} {\bibinfo {author} {\bibfnamefont {R.~A.}\ \bibnamefont
  {Kendall}}, \bibinfo {author} {\bibfnamefont {T.~H.}\ \bibnamefont
  {Dunning}},\ and\ \bibinfo {author} {\bibfnamefont {R.~J.}\ \bibnamefont
  {Harrison}},\ }\bibfield  {title} {\bibinfo {title} {Electron affinities of
  the first-row atoms revisited. systematic basis sets and wave functions},\
  }\href {https://doi.org/10.1063/1.462569} {\bibfield  {journal} {\bibinfo
  {journal} {J. Chem. Phys.}\ }\textbf {\bibinfo {volume} {96}},\ \bibinfo
  {pages} {6796} (\bibinfo {year} {1992})}\BibitemShut {NoStop}%
\bibitem [{\citenamefont {Woon}\ and\ \citenamefont
  {Dunning}(1993)}]{woon1993a}%
  \BibitemOpen
  \bibfield  {author} {\bibinfo {author} {\bibfnamefont {D.~E.}\ \bibnamefont
  {Woon}}\ and\ \bibinfo {author} {\bibfnamefont {T.~H.}\ \bibnamefont
  {Dunning}},\ }\bibfield  {title} {\bibinfo {title} {Gaussian basis sets for
  use in correlated molecular calculations. iii. the atoms aluminum through
  argon},\ }\href {https://doi.org/10.1063/1.464303} {\bibfield  {journal}
  {\bibinfo  {journal} {J. Chem. Phys.}\ }\textbf {\bibinfo {volume} {98}},\
  \bibinfo {pages} {1358} (\bibinfo {year} {1993})}\BibitemShut {NoStop}%
\bibitem [{\citenamefont {Malone}(2023)}]{malone_qcqmc_data}%
  \BibitemOpen
  \bibfield  {author} {\bibinfo {author} {\bibfnamefont {F.}~\bibnamefont
  {Malone}},\ }\bibfield  {title} {\bibinfo {title} {{Data for Unbiasing
  fermionic quantum Monte Carlo with a quantum computer}},\ }\href
  {https://doi.org/10.5281/zenodo.10141262} {10.5281/zenodo.10141262} (\bibinfo
  {year} {2023})\BibitemShut {NoStop}%
\end{thebibliography}%

\clearpage
\appendix
\counterwithin{figure}{section}
\counterwithin{table}{section}

\section{Theoretical Background of Coupled Cluster Methods}\label{apx:tcc_eccc_theory}
The single reference coupled cluster (SRCC) wavefunction
is constructed through an exponential ansatz acting on the Hartree-Fock (or Fermi vacuum) reference determinant \hfdet,\cite{crawford2006introduction,shavitt2009many}
\begin{align}
\ccwfn = e^{\hat T} \hfdet,
\end{align}
with the cluster operator $\hat T$. The cluster operator is formed by
linear combination of individual operators
\begin{align}
\hat T = \sum_\nu \hat T_\nu \label{eq:cc_t_operator_general}
\end{align}
with excitation level $\nu$.
Truncating this expression at a certain excitation level $\nu_\mathrm{max}$ yields the well-known hierarchy of truncated CC methods:
CC singles (CCS, $\nu_\mathrm{max} = 1$), CC singles and doubles (CCSD, $\nu_\mathrm{max} = 2$), CC singles, doubles, and triples (CCSDT, $\nu_\mathrm{max} = 3$), and so on.
The excitation operators are generally defined as
\begin{align}
\hat T_\nu = \frac{1}{(\nu!)^2} \sum_{\substack{abc\ldots \\ ijk\ldots}} t_{ijk\ldots}^{abc\ldots} \cre{a} \cre{b} \cre{c} \ldots \ani{i} \ani{j} \ani{k} \ldots, \label{eq:cluster_operator_general}
\end{align}
with the T$\nu$ amplitudes $t_{ijk\ldots}^{abc\ldots}$ for a given excitation level $\nu$
and the fermionic creation and annihilation operators \cre{p} and \ani{p}, respectively.
The indices $a,b,c,\ldots$ denote virtual orbital indices, $i,j,k,\ldots$ refer to occupied orbitals in \hfdet, and $p,q,r,s,\ldots$ refer to general spin orbitals.
The T amplitudes are solved for by projecting excited determinant manifolds ${\ket{\Phi_\nu}}$ onto the similarity-transformed Hamiltonian $\simH \equiv e^{-\hat T} \hat H e^{\hat T}$,
\begin{align}
    \mathbf{r}_\nu \equiv \braket{\Phi_\nu |\simH | \Phi_0} \stackrel{!}{=} 0,
\end{align}
where $\mathbf{r}_\nu$ is referred to as residual vector.
The resulting non-linear amplitude equations can be solved iteratively. Note that, due to
the similarity transformation required for obtaining computationally tractable amplitude equations,
the resulting Hamiltonian is no longer Hermitian.
By projecting with \hfdet~from the left, one obtains the SRCC energy as
\begin{align}
E_\mathrm{CC} =& \braket{\Phi_0|\simH|\Phi_0} \nonumber \\
              =& E_\mathrm{HF} + \sum_{ia} f_{i}^{a} t_{i}^{a} + \frac12 \sum_{ijab} t_{i}^{a} t_{j}^b \eri{ij}{ab} + \frac14 \sum_{ijab} t_{ij}^{ab} \eri{ij}{ab},
\end{align}
with the reference Hartree-Fock energy $E_\mathrm{HF}$, the Fock matrix elements $f_p^q$, and the antisymmetrized two-electron repulsion integrals in Physicists' notation $\eri{pq}{rs}$. Note that only T1 and T2 amplitudes enter the SRCC energy expression directly, independent of the
truncation level, since the higher excitation cluster operators cannot produce fully contracted
terms with the Hamiltonian. The implicit energy contribution of higher-order T amplitudes originates from
the coupling of all amplitudes through the projection equations.
For CCSD, the amplitude equations require projections
of the singly and doubly excited determinants, i.e.,
\begin{align}
0 &= \braket{\Phi_{i}^{a}|\simH|\Phi_0},\\
0 &= \braket{\Phi_{ij}^{ab}|\simH|\Phi_0}.
\end{align}
In the following sections, we use the normal-ordered Hamiltonian,
\begin{align}
\hat H_N = \hat H - \braket{\Phi_0| \hat H |\Phi_0},
\end{align}
to express the Baker-Campbell-Hausdorff (BCH) expansion of $\simH$ more conveniently as
\begin{align}
    \simH \rightarrow (\hat H_N \hat T)_c,
\end{align}
where the $(\ldots)_c$ denotes that only the connected contributions from the expansion survive.\cite{crawford2006introduction}
With the programmable expressions for the projection equations at hand, which can for example be derived
through code generation (see Appendix \ref{apx:implementation}),\cite{rubin2021pdaggerq} the amplitudes are solved iteratively through standard numerical
techniques.\cite{crawford2006introduction}

\subsection{Tailored Coupled Cluster}
The tailored coupled cluster (TCC) ansatz aims to encode static correlation effects from an active space (AS) method in a SRCC wavefunction
through a split-amplitude ansatz\cite{kinoshita2005coupled}
\begin{align}
\hat T = \hat T^\mathrm{ext} + \hat T^\mathrm{AS}.
\end{align}
The amplitudes in the active space cluster operator $\hat T^\mathrm{AS}$
are extracted from an exact or approximate active space wavefunction
through the relationship of the linear configuration interaction (CI) ansatz and the exponential CC ansatz
\begin{align}
1 + \sum_\nu \hat C_\nu = e^{\hat T}.
\end{align}
The CI excitation operators $\hat C_\nu$ are defined as the cluster operator in
eq \eqref{eq:cluster_operator_general}, but with $c_{ijk\ldots}^{abc
\ldots}$ as corresponding amplitudes.
Conversion from CI to CC amplitudes is achieved by matching excitation levels and recursively determining the amplitudes,\cite{monkhorst1977calculation,lehtola2017cluster} here up to four-fold excitations, as
\begin{align}
\hat T_1 &= \hat C_1 \label{eq:t1_from_ci}\\
\hat T_2 &= \hat C_2 - \frac12 \hat{T}_1^2 \label{eq:t2_from_ci} \\
\hat T_3 &= \hat C_3 - \frac12 \left(\hat T_1 \hat T_2 + \hat T_2 \hat T_1\right) - \frac1{3!} \hat T_1^3\\
\hat T_4 &= \hat C_4 - \frac{1}{2} \left(\hat T_1 \hat T_3 + \hat T_3 \hat T_1\right) - \frac{1}{2} \hat T_2^2 - \frac{1}{4!} \hat T_1^4. \label{eq:t4_from_ci}
\end{align}
Evaluating these terms as Wick contractions yields
the following programmable expressions,\cite{lehtola2017cluster}
\begin{align}
&t_i^a = c_i^a \\
&t_{ij}^{ab} = c_{ij}^{ab} - t_i^a t_j^b + t_i^b t_j^a\\
&\begin{aligned}
t_{ijk}^{abc} =&~c_{ijk}^{abc}-t_i^at_{jk}^{bc}+t_i^bt_{jk}^{ac}-t_i^ct_{jk}^{ab}+t_j^at_{ik}^{bc}-t_j^bt_{ik}^{ac}\\ &+t_j^ct_{ik}^{ab}-t_k^at_{ij}^{bc}+t_k^bt_{ij}^{ac}-t_k^ct_{ij}^{ab}-t_i^at_j^bt_k^c\\& +t_i^at_j^ct_k^b+t_i^bt_j^at_k^c-t_i^bt_j^ct_k^a-t_i^ct_j^at_k^b+t_i^ct_j^bt_k^a
\end{aligned}\\
&t_{ijkl}^{abcd} = \ldots
\end{align}
The expressions contain single ``non-redundant'' outer products of T amplitudes, e.g., $t_{i}^{a} t_{jk}^{bc}$, and the terms with permuted indices ensure the correct anti-symmetry of the resulting amplitude, cancelling exactly the corresponding prefactor from the Taylor expansion.
If the coefficient of the reference determinant in the underlying
CI expansion is not equal to one, one has to renormalize the above
conversion equations accordingly. As all CC methods, TCC is only
applicable if the coefficient of the reference determinant
extracted from the active space method is non-zero.
With a set of CI amplitudes at hand, the corresponding spin-orbital T amplitudes can be built with the expressions above, yielding an active space cluster operator $\hat T^\mathrm{AS}$. The orbital indices of the cluster amplitudes are fully contained in the active space, i.e., $ijk\ldots abc\ldots \in \mathrm{AS}$.
Contrary to that, the \emph{external} cluster amplitudes belonging to $\hat T^\mathrm{ext}$ comprise an orbital space where at least one orbital index is not part of the active space, i.e., $ijk\ldots abc\ldots \not\subset \mathrm{AS}$.
Now, the TCC energy functional is given by
\begin{align}
    E_\mathrm{TCC} =& \Braket{\Phi_0|\left(\hat H_N e^{\hat T^\mathrm{ext} + \hat T^\mathrm{AS}}\right)_c|\Phi_0}.
\end{align}
Since the external and active space cluster operators commute by construction, one can obtain the active space energy through the CC energy functional via
\begin{align}
    E_\mathrm{TCC} &= \Braket{\Phi_0 |\left(\hat H_N e^{\hat T_\mathrm{AS}}\right)_c | \Phi_0} + \Braket{\Phi_0 | \left(\hat H_N e^{\hat T_\mathrm{ext}}\right)_c | \Phi_0} \\
    &= E_\mathrm{AS} + E_\mathrm{ext}.
\end{align}
The relationship holds due to the equivalence of the CI and CC expansions
for exact wavefunctions, i.e., the T1 and T2 amplitudes extracted from the active space wavefunction are exact and can be viewed as
optimized in presence of all higher-order T amplitudes through the active space method. For approximate active space methods, the mapping is not exact, and the resulting energy computed through the CC energy functional
is not necessarily equivalent to the (variational) energy of the active space wavefunction.\footnote{
The energies are equivalent though if singles and doubles are treated fully variationally in the approximate active space wavefunction.
}
To retain the static correlation information in the TCC wavefunction, the active space amplitudes are kept frozen during optimization of the external amplitudes. That is, the following amplitude equations are to
be solved in TCCSD, the most popular flavor of TCC:
\begin{align}
    0 =& \Braket{\Phi_{i}^{a}| \left(\hat H_N e^{(\hat T_1^\mathrm{AS} + \hat T_2^\mathrm{AS})} e^{(\hat T_1^\mathrm{ext} + \hat T_2^\mathrm{ext})}\right)_c| \Phi_0}, \quad i,a \not\subset \mathrm{AS} \\
    0 =& \Braket{\Phi_{ij}^{ab}| \left(\hat H_N e^{(\hat T_1^\mathrm{AS} + \hat T_2^\mathrm{AS})} e^{(\hat T_1^\mathrm{ext} + \hat T_2^\mathrm{ext})}\right)_c | \Phi_0}, \quad i,j,a,b \not\subset \mathrm{AS}.
\end{align}
Note that the ``output'' amplitudes are external, however, the active space amplitudes still appear in the algebraic expressions of the projection equations.
Thus, the frozen active space T amplitudes impact the TCC solution
through the active space energy, i.e., the energy evaluated directly from the frozen amplitudes, and additionally through the contractions with
the external amplitudes.
Once the active space T amplitudes are known and mapped to the corresponding orbitals in the full molecular orbital space, the standard
SRCC framework is used to solve for the external amplitudes, with the only
constraint that a certain stride of the full space amplitudes be frozen.
In practice, this can be easily achieved by setting the stride of active
space amplitudes in the CC residual vector to zero.
As for standard CCSD, a perturbative triples correction, i.e., CCSD(T)
can be employed,\cite{lyakh2011tailored} which is only evaluated from
the external amplitudes for consistency.

\subsection{Externally Corrected Coupled Cluster}
Externally corrected CC (ec-CC)\cite{paldus2017externally} builds a similar split-amplitude ansatz as TCC, but aims to encode static correlation into
the SRCC wavefunction through higher-order cluster operators.
Inspecting the un-truncated expressions for the singles and doubles residual equations, it is clear that only
$\hat T_3$ and $\hat T_4$ can contribute to achieve the correct total excitation level, i.e.,
\begin{align}
r_{i}^{a} \leftarrow \Braket{ \Phi_{i}^{a} |\left(\hat H_{N} \hat T_{3}\right)_{c}|\Phi_0}, \\
r_{ij}^{ab} \leftarrow \Braket{ \Phi_{ij}^{ab} |\left(\hat H_{N} \hat T_{3} + \hat T_{4} + \hat T_{1}\hat T_{3}\right)_{c}|\Phi_0}.
\end{align}
These expressions then, of course,
correspond to the CCSDTQ residual equations for T1 and T2, however, higher-order cluster operators cannot produce any further
contributions to these terms.
Hence, the traditional CCSD approach corresponds to the approximate case where $\hat T_3 = 0$ and $\hat T_4 = 0$.
This means that, adding the direct contributions of T3 and T4 amplitudes (e.g., extracted from a high-quality correlated multi-reference
wavefunction)
to the T1 and T2 projection equations, one can in principle obtain an improved treatment of the overall electron correlation, including
non-dynamic correlation effects, compared to CCSD. The gist here is that convergence to full CI is \emph{guaranteed} if the input T3 and T4
amplitudes become exact. Thus, the ec-CC procedure optimizes the T1 and T2 amplitudes in presence of (approximate) T3 and T4.
The resulting ec-CC cluster operator is given by
\begin{align}
    \hat T = \hat T_1 + \hat T_2 + \hat T_3^{\mathrm{input}} + \hat T_4^{\mathrm{input}},
\end{align}
where the T3 and T4 amplitudes are extracted from some input wavefunction. If the wavefunction only covers a subset
of the molecular orbitals, i.e., in case of an active space method, the operator is modified to project the
sliced T3 and T4 amplitudes onto the full orbital space, like in TCC, that is
\begin{align}
    \hat T = \hat T_1 + \hat T_2 + \hat P_3 \hat T_3^{\mathrm{AS}} + \hat P_4 \hat T_4^{\mathrm{AS}},
\end{align}
with the appropriate projection operators $\hat P_3$ and $\hat P_4$. For ec-CC, the same recursion relations to extract
T amplitudes from a CI-like wavefunction apply, see eqs \eqref{eq:t1_from_ci} through \eqref{eq:t4_from_ci}. If an active space wavefunction is used as input,
the convergence guarantee toward FCI does not hold anymore, however, improvements
can still be achieved by including the dominant T3 and T4 amplitudes from the
multi-reference treatment compared to completely neglecting the triples and quadruples.
The structure of the ec-CC ansatz directly implies that, depending on the quality of the underlying input wavefunction, the ec-CC
result might not always provide better results than the input wavefunction itself.\cite{magoulas2021is}
We reproduce the corresponding proof in the following Section~\ref{app:eccc_proof}.
\subsection{Type-I ec-CC with Configuration Interaction}\label{app:eccc_proof}
This section largely follows the arguments laid out in Ref.~\citenum{magoulas2021is} with proofs abbreviated for conciseness. The ec-CC singles and doubles equations with T3 and T4 generated from any coupled-cluster theory involving anything beyond doubles would return an identical coupled-cluster energy. Naturally, one would get a different energy if T1 and T2 from the T3 and T4 source calculation are not treated fully. It turns out that truncated configuration interaction can be shown to satisfy the ec-CC T1 and T2 equations.

To see this we will convert the set of singles and doubles truncated CI correlation energy equations into a form equivalent to the coupled-cluster singles and doubles equations. Thus satisfying the truncated CI equations for singles and doubles corresponds to solving the ec-CC equations singles and doubles equations.  This implies that the energy would be equivalent.
\begin{theorem}
The solution to a truncated configuration interaction set of equations which includes full singles, full doubles, and any set of higher excitations satisfies the coupled-cluster singles and doubles equations.
\end{theorem}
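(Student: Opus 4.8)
\emph{Proof plan.} The plan is to convert the truncated CI stationarity conditions on the singles and doubles manifolds directly into the connected CC singles and doubles residual equations, exploiting the exact equivalence of the linear CI and exponential CC parametrizations. First I would use the cluster-analysis recursion relations \eqref{eq:t1_from_ci}--\eqref{eq:t4_from_ci} to write the intermediate-normalized CI state as $\ket{\Psi_\mathrm{CI}} = e^{\hat T}\hfdet$, where $\hat T = \hat T_1 + \hat T_2 + \dots$ collects the cluster amplitudes obtained from the CI coefficients. Because the expansion contains \emph{full} singles and \emph{full} doubles, the projections of the CI eigenvalue equation onto every $\ket{\Phi_i^a}$ and $\ket{\Phi_{ij}^{ab}}$ are genuine equations of the problem, namely $\braket{\Phi_\mu|\hat H_N e^{\hat T}|\Phi_0} = E_\mathrm{corr}\,\braket{\Phi_\mu|e^{\hat T}|\Phi_0}$ with $E_\mathrm{corr} = \braket{\Phi_0|\hat H_N e^{\hat T}|\Phi_0}$, for $\mu$ ranging over all singles and doubles. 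The higher excitations enter only through the amplitudes $\hat T_3,\hat T_4,\dots$ and need not form a complete manifold.

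The central manipulation is to insert $e^{\hat T}e^{-\hat T}$ and invoke the connected-cluster identity $e^{-\hat T}\hat H_N e^{\hat T} = (\hat H_N e^{\hat T})_c$, so that $\hat H_N e^{\hat T}\hfdet = e^{\hat T}\sum_\nu \ket{\Phi_\nu}\,\Omega_\nu$, where $\Omega_\nu \equiv \braket{\Phi_\nu|(\hat H_N e^{\hat T})_c|\Phi_0}$ is precisely the connected residual appearing in the CCSD equations and $\Omega_0 = E_\mathrm{corr}$. Projecting with $\bra{\Phi_\mu}$ gives $\braket{\Phi_\mu|\hat H_N e^{\hat T}|\Phi_0} = \sum_\nu \braket{\Phi_\mu|e^{\hat T}|\Phi_\nu}\,\Omega_\nu$. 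Because $e^{\hat T}$ can only raise or preserve the excitation rank, the overlap $\braket{\Phi_\mu|e^{\hat T}|\Phi_\nu}$ vanishes unless $\mathrm{rank}(\nu)\le\mathrm{rank}(\mu)$ and equals unity when the two determinants coincide, so the sum truncates sharply. Subtracting the CI right-hand side $E_\mathrm{corr}\braket{\Phi_\mu|e^{\hat T}|\Phi_0}$ then cancels the $\nu=0$ contribution $\Omega_0\,c_\mu$ and leaves a relation among the $\Omega_\nu$ of rank at most $\mathrm{rank}(\mu)$.

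I would then close the argument by a rank-by-rank bootstrap. For $\mu$ a single, only $\nu=0$ and $\nu$ a single survive; the $\nu=0$ piece cancels against the CI right-hand side and the equal-rank overlap collapses the remainder to $\Omega_i^a = 0$, i.e.\ the CC singles equation. For $\mu$ a double, $\nu$ runs over $0$, singles, and doubles; the $\nu=0$ term again cancels, the double term gives $\Omega_{ij}^{ab}$, and the cross term $\sum_{kc}\braket{\Phi_{ij}^{ab}|\hat T_1|\Phi_k^c}\,\Omega_k^c$ — nonzero because $\hat T_1$ promotes a single into the doubles manifold — vanishes precisely because the singles residuals $\Omega_k^c$ were just shown to be zero. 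This forced ordering is what I expect to be the main obstacle: the doubles equation is not self-contained but inherits its cancellation from the singles result, so the two manifolds cannot be treated independently. Assembling the two steps yields $\Omega_i^a = \Omega_{ij}^{ab} = 0$, which are exactly the CCSD equations with the frozen $\hat T_3,\hat T_4$ supplied by the higher CI excitations, completing the proof.
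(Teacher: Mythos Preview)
Your proposal is correct and follows essentially the same route as the paper's proof: both convert the CI expansion to exponential form via cluster analysis, insert $e^{\hat T}e^{-\hat T}$ to produce the connected quantity $(\hat H_N e^{\hat T})_c$, resolve the identity over excitation manifolds, use that $e^{\hat T}$ only raises rank to kill contributions from higher sectors, and cancel the $\nu=0$ piece against the CI right-hand side. The only differences are cosmetic --- the paper packages the resolution of identity as explicit projectors $P + P_{\tilde C_1,\tilde C_2} + P_{\tilde C^B} + Q$ rather than your determinant sum, and is terser about the doubles case, whereas you spell out the bootstrap (that the $\hat T_1$-promoted singles cross term in the doubles projection vanishes because the singles residuals were already shown to be zero) more explicitly than the paper does.
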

\begin{proof}
First, consider a truncated CI wavefunction expansion represented in intermediate normalization
\begin{align}
\ket{\Psi} = \left(1 + \hat C_{1} + \hat C_{2} + \hat C^{B}\right) \hfdet
\end{align}
where $\hat C_{1}$ is the full set of one-body excitations and their coefficients, $\hat C_{2}$ is the full set of two-body excitations and their variational coefficients, $\hat C^{B}$ is any higher excitations and their variational coefficients, and $\hfdet$~ is the reference determinant. We note that the set $\hat C^{B}$ does not need to be complete. For example, only subsets of triples or quadruples could be considered in $\hat C^{B}$. The coupled equations for the CI correlation energy for singles and doubles are 
\begin{align}
\braket{\Phi_{i}^{a}| \hat H_N \left( 1 + \hat C_{1} + \hat C_{2} + \hat C^{B}\right)| \Phi_0 } &= E_{\rm corr} \braket{\Phi_{i}^{a}| \hat C_1 |\Phi_0},\\
\braket{\Phi_{ij}^{ab}| \hat H_N \left( 1 + \hat C_{1} + \hat C_{2} + \hat C^{B}\right)| \Phi_0 } &= E_{\rm corr} \braket{\Phi_{ij}^{ab}| \hat C_2 |\Phi_0}.
\end{align}
where $E_{\rm corr} = \braket{\Phi_0| \hat H_N \left(\hat C_{1} + \hat C_{2}\right)|\Phi_0}$, and $|\Phi_{i}^{a}\rangle$ and $|\Phi_{ij}^{ab}\rangle$ are the singly and doubly excited determinants, respectively. Noting that the connected components of the wavefunction are given by
\begin{align}
\hat T = \mathrm{ln}(1 + \hat C_{1} + \hat C_{2} + \hat C^{B})
\end{align}
which is evaluated using the Mercator series and is a specific case of the more general cluster analysis equation $T = \mathrm{ln}(1 + C)$ can be used to write the CI equations in exponential ansatz form
\begin{align}
\langle \Phi_{i}^{a}|e^{\hat T}e^{-\hat T}\hat H_{N}e^{\hat T} |\Phi_0\rangle = \langle \Phi_{i}^{a}|e^{\hat T}\left(\hat H_{N}e^{\hat T}\right)_{c} |\Phi_0\rangle  = E_{\rm corr} \langle \Phi_{i}^{a}|\hat C_{1}|\Phi_0\rangle, \\
\langle \Phi_{ij}^{ab}|e^{\hat T}e^{-\hat T}\hat H_{N}e^{\hat T} |\Phi_0\rangle = \langle \Phi_{ij}^{ab}|e^{\hat T}\left(\hat H_{N}e^{\hat T}\right)_{c} |\Phi_0\rangle = E_{\rm corr} \langle \Phi_{ij}^{ab}|\hat C_{2}|\Phi_0\rangle.
\end{align}
Now introducing projectors on to each relevant subspace of Hilbert space
\begin{align}
P + P_{\tilde{C}_{1},\tilde{C}_{2}} + P_{\tilde{C}^{B}} + Q = \mathbb{1} \\
P = |\Phi_0\rangle\langle\Phi_0| \;\;, \;\; P_{\tilde{C}_{1},\tilde{C}_{2}} = \tilde{C}_{1}|\Phi_0\rangle\langle\Phi_0|\tilde{C}_{1}^{\dagger} + \tilde{C}_{2}|\Phi_0\rangle\langle\Phi_0|\tilde{C}_{2}^{\dagger}  \;\; P_{\tilde{C}^{B}} = \tilde{C}^{B}|\Phi_0\rangle\langle \Phi_0|\tilde{C}_{B}^{\dagger} 
\end{align}
where $Q$ is defined as the orthogonal compliment to $P + P_{\tilde{C}_{1},\tilde{C}_{2}} + P_{\tilde{C}^{B}}$ and the tilde on the $C$ indicates the excitation operators of $\hat C$ without the variational coefficients. Inserting the resolution of identity we get
\begin{align}
\langle \Phi_{i}^{a}|e^{\hat T}\left( P + P_{\tilde{C}_{1},\tilde{C}_{2}} + P_{\tilde{C}^{B}} + Q \right)\left(\hat H_{N}e^{\hat T}\right)_{c} |\Phi_0\rangle  =&~ E_{\rm corr} \langle \Phi_{i}^{a}|\hat C_{1}|\Phi_0\rangle  \\
\langle \Phi_{i}^{a}|e^{\hat T} P\left(H_{N}e^{T}\right)_{c} |\Phi\rangle + \langle \Phi_{i}^{a}|e^{T} P_{\tilde{C}_{1},\tilde{C}_{2}}\left(\hat H_{N}e^{\hat T}\right)_{c} |\Phi_0\rangle  +& \nonumber \\
\langle \Phi_{i}^{a}|e^{\hat T} P_{\tilde{C}^{B}}\left(\hat H_{N}e^{\hat T}\right)_{c} |\Phi_0\rangle  +\langle \Phi_{i}^{a}|e^{\hat T} Q \left(\hat H_{N}e^{\hat T}\right)_{c} |\Phi_0\rangle  =&~
  E_{\rm corr} \langle \Phi_{i}^{a}|\hat C_{1}|\Phi_0\rangle  \label{eq:expanded_res_identity} \\
  \langle \Phi_{i}^{a}|e^{\hat T} P_{\tilde{C}_{1},\tilde{C}_{2}}\left(\hat H_{N}e^{\hat T}\right)_{c} |\Phi_0\rangle =&~ 0 \label{eq:singles_projector_penultimate}
\end{align}
where the first term on the left-hand-side of Eq.~\eqref{eq:expanded_res_identity} cancels the right-hand-side and the last two terms on the left-hand-side are zero because the cluster operator only excites from a given determinant. Following the same protocol we obtain a similar expression for the doubles equation which is 
\begin{align}
  \langle \Phi_{ij}^{ab}|e^{\hat T} P_{\tilde{C}_{1},\tilde{C}_{2}}\left(\hat H_{N}e^{\hat T}\right)_{c} |\Phi_0\rangle = 0.
\end{align}
Expanding the projector in Eq.~\eqref{eq:singles_projector_penultimate} and noting that the $C_{1}$ operator only excites from a determinant we obtain the singles coupled-cluster equation $\langle \Phi_{i}^{a}|\left(\hat H_{N}e^{\hat T}\right)_{c}|\Phi_0\rangle = 0$. Expanding the doubles equation in the projectors and the exponential operator one obtains the doubles coupled-cluster equation $\langle \Phi_{ij}^{ab}|\left(\hat H_{N}e^{\hat T}\right)_{c}|\Phi_0 \rangle = 0$.  
\end{proof}
It should be re-emphasized that the full singles and doubles are required for this proof. If a subset of the singles are used then the last term on the left-hand-side of Eq.~\eqref{eq:singles_projector_penultimate} cannot necessarily be concluded to be zero.  The same can be said for the analogous term in the doubles derivation.  As is concluded in Ref.~\citenum{magoulas2021is}, the way to break the truncated CI coupled-cluster energy symmetry is to modify the cluster equations to remove disconnected components of T3 and T4. 

\subsection{Computational Scaling and Simplification of ec-CC} \label{apx:ec_cc_scaling}
The ec-CC ansatz borrows only the T1 and T2 residual equations from CCSDTQ, i.e., the expensive T3 and T4 residual equations
are not required. In addition, the CI expansion of the input wavefunction needs to be converted to the corresponding
T amplitudes, requiring outer products of T amplitudes that produce up to eight-index output tensors. This scales as $\mathcal{O}(N^8)$,
but the prefactor is quite large in that case because all non-redundant anti-symmetric permutations need to be computed through
transpositions of each unique outer product. However, the conversion only occurs once per calculation and is thus not repeatedly evaluated.
The most expensive contraction for ec-CC occurs in the T2 residual equation as the product of T4 amplitudes and the electron-repulsion intergrals,
\begin{align}
    r_{ij}^{ab} \leftarrow \sum_{klcd} t_{ijlk}^{cdab} \eri{lk}{cd}.
\end{align}
Since T4 is purely dictated by the input wavefunction method, this $\mathcal{O}(N^8)$ contraction is executed only once and can be stored
for the rest of the ec-CC computation. Another contribution to the T2 amplitude equation comes from products of T1 and T3. To improve
performance, some works on ec-CC report that this term can be pre-computed using T1 from the input wavefunction, thereby neglecting
an iterative update of the T1/T3 contribution to the residual equations. This approximation seems to have only a minor influence on
the quality of the result.\cite{paldus2017externally,aroeira2020coupled} Our ec-CC implementation features ``frozen'' and iterative computation of
the T1/T3 products for better debugging in the context of quantum input.

\section{Implementation and Computational Methodology} \label{apx:implementation}
All classical electronic structure calculations (CASCI, CCSD, NEVPT2, etc.) were run using PySCF.\cite{sun2020recent}
Used basis sets and other parameters are specified in the main text.
Tailored and externally corrected coupled cluster methods were implemented in a standalone Python library,
using PySCF as input for reference states, e.g., molecular integrals, and classical CASCI calculations.
The cluster analysis working equations and CCSDTQ working equations for ec-CC were generated using $p^\dagger q$.\cite{rubin2021pdaggerq}
For improved performance, the tensor contractions are evaluated using JAX.\cite{jax2018github}
Correctness of our cluster analysis code was verified using the ClusterDec library.\cite{lehtola2017cluster}
For input from quantum hardware, our code supports FQE wavefunctions as external source.\cite{rubin2021fermionic}
The overlaps in case of PySCF's CASCI or FQE input are obtained by a simple look-up in the sparse state/CI vector,
addressing determinants with the correct excitation level and then converting them from true-vacuum sign convention
to Fermi vacuum sign convention.\cite{lehtola2017cluster}
VQE and matchgate shadow simulations were performed using Covestro's in-house quantum computing software stack based on PennyLane.\cite{pennylane}
Data analysis was performed using NumPy,\cite{harris2020numpy} SciPy,\cite{scipy} Pandas,\cite{reback2020pandas,mckinney-proc-scipy-2010}
Matplotlib,\cite{Hunter:2007} and Seaborn.\cite{Waskom2021}

\section{Split-Amplitude CC with Noisy Quantum Inputs} \label{sec:influence_of_noise}
In this section we investigate how the shot noise due to finite sampling influences the errors of energies obtained with the quantum TCCSD method.
We do this in two steps:
First, by means of explicit simulation of the matchgate shadow protocol, we investigate how the errors of estimated overlaps depend on the number of shots and number of orbitals.
By doing this we numerically obtain concrete formulas characterizing the performance of the matchgate shadow based overlap estimation method that complement the mathematically proven performance guarantees derived in Ref.~\citenum{wan2022matchgate}.
In particular we find that the estimated overlaps are essentially normal distributed. 
In a second step we use this knowledge to investigate the behavior of the TCCSD and ec-CC energy when Gaussian noise is added to exact (obtained with CASCI) or approximate (obtained from a simulated VQE) overlaps.

We find that our conclusions are largely independent of the precise molecular system and state used in the computations.
The data presented in the following was obtained using restricted Hartree Fock (RHF)/cc-pVDZ\cite{dunning1989a} orbitals of the \ce{N2} molecule at a bond distance of $1.09$~{\AA}.
In addition to the chemically reasonable \mbox{(6, 6)} active space we have performed simulations for other spin-zero active spaces characterized by $n$ and $\zeta$.
For all simulations of matchgate shadows the PennyLane library \cite{pennylane} was used to prepare approximate CASCI ground states on \num{8} qubits by means of the quantum number preserving fabrics\cite{anselmetti2021local} with \num{22} layers (optimized without noise under L-BFGS-B with $\Pi$ the identity gate starting from initialization method B with a small amount of normal distributed noise added to the initial parameters).
Overlap computation was done following the protocol from Ref.~\citenum{wan2022matchgate}, using Newton's polynomial interpolation for interpolating the Pfaffian polynomial, which was done with NumPy,\cite{harris2020numpy} Sympy,\cite{meurer2017sympy} and Pfapack.\cite{wimmer2011efficient}
\begin{figure}[ht]
    \centering
    \begin{subfigure}[b]{0.55\textwidth}
        \centering
        \includegraphics[width=\textwidth]{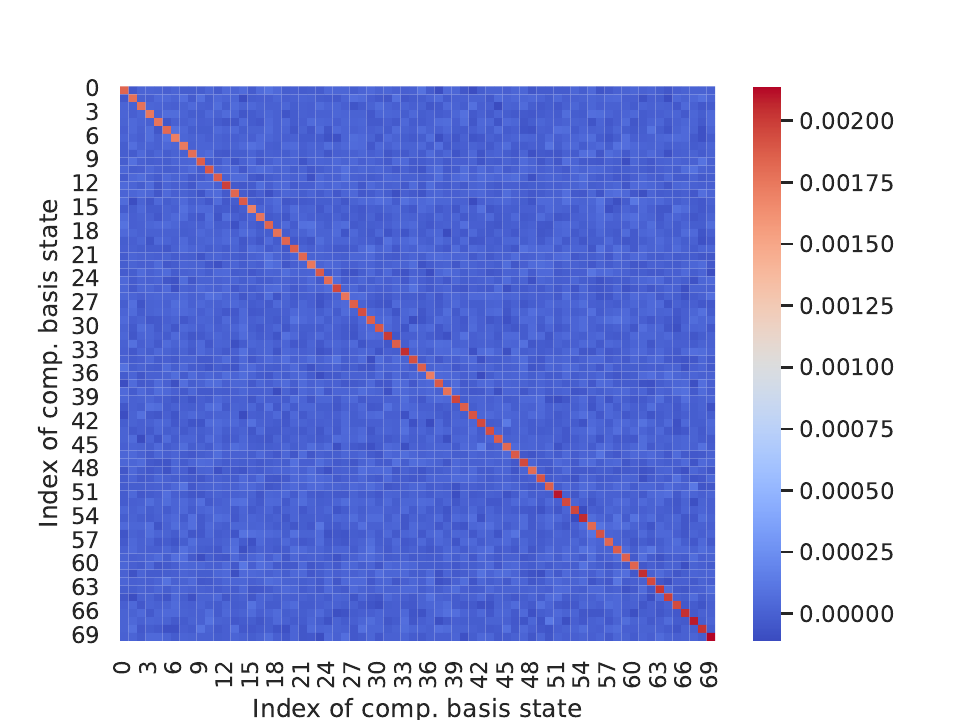}
        \caption{}
        \label{fig:overlap_convariance_matrix}
    \end{subfigure}    
    \hfill
    \begin{subfigure}[b]{0.44\textwidth}
        \centering
        \includegraphics[width=\textwidth]{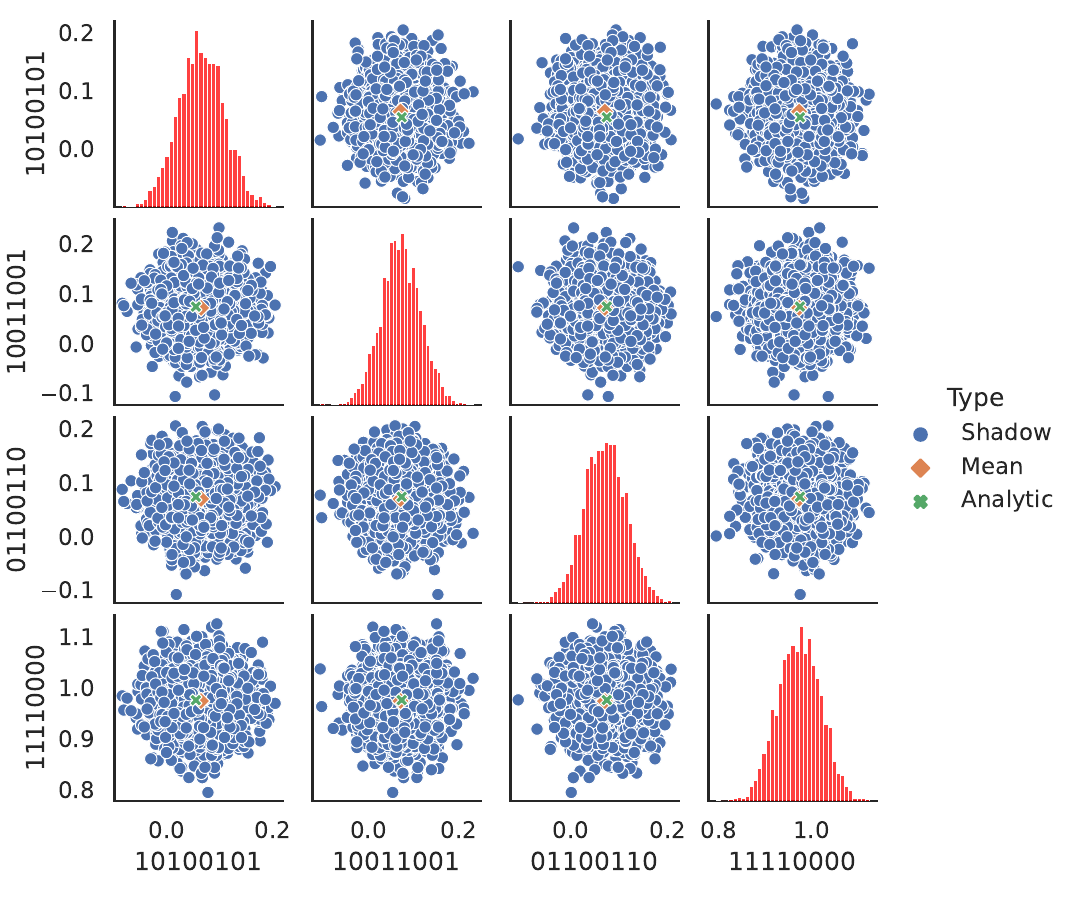}
        \caption{}
        \label{fig:overlaps_are_normal_distributed}
    \end{subfigure}
    \caption{
    Panel (\subref{fig:overlap_convariance_matrix}) shows the covariance matrix of the overlaps between all $d = 8!/(4!(8 - 4)!) = 70$ computational basis states from the half-filling subspace of $n=8$ qubits bootstrapped with $k=1000$ sub-shadows with $s = 2000$ shots each drawn randomly from a shadow of \num{200000} shots.
    Panel (\subref{fig:overlaps_are_normal_distributed}) shows on the diagonal for the \num{4} largest overlaps a histogram of the distribution of the $k$ values computed from each of the sub-shadows from which the covariance matrix was computed.
    In the scatter plots in the off diagonal positions the blue dots are the overlaps between pairs of basis states for each sub-shadow, the orange diamonds are the mean over the $k$ sub-shadows, and the green crosses are the analytically computed overlaps.
    }
    \label{fig:overlap_statistics}
\end{figure}
\subsection{Statistical Properties of Overlaps Estimated from Finite Shot Matchgate Shadows} \label{apx:stat_props}
In this section we analyze the statistical properties of overlaps estimated from finite shot matchgate shadows.
We determine variances and covariances between the estimated overlaps via bootstrapping and analyze their distributions.
We first generate a shadow of $200,000$ shots (individually evaluated circuits measured following an i.i.d.\ Gaussian unitary) and then randomly draw $k$ sub-shadows, each of size $s$, estimate all overlaps for each sub-shadow and then compute the covariance matrix of these estimates.
For simplicity we omit the median of means estimation and work with standard arithmetic means throughout.
The recorded data can be found on Zenodo.\cite{zenododata}
\begin{figure}[h]
    \centering
    \begin{subfigure}[b]{0.49\textwidth}
        \centering
        \includegraphics[width=\textwidth]{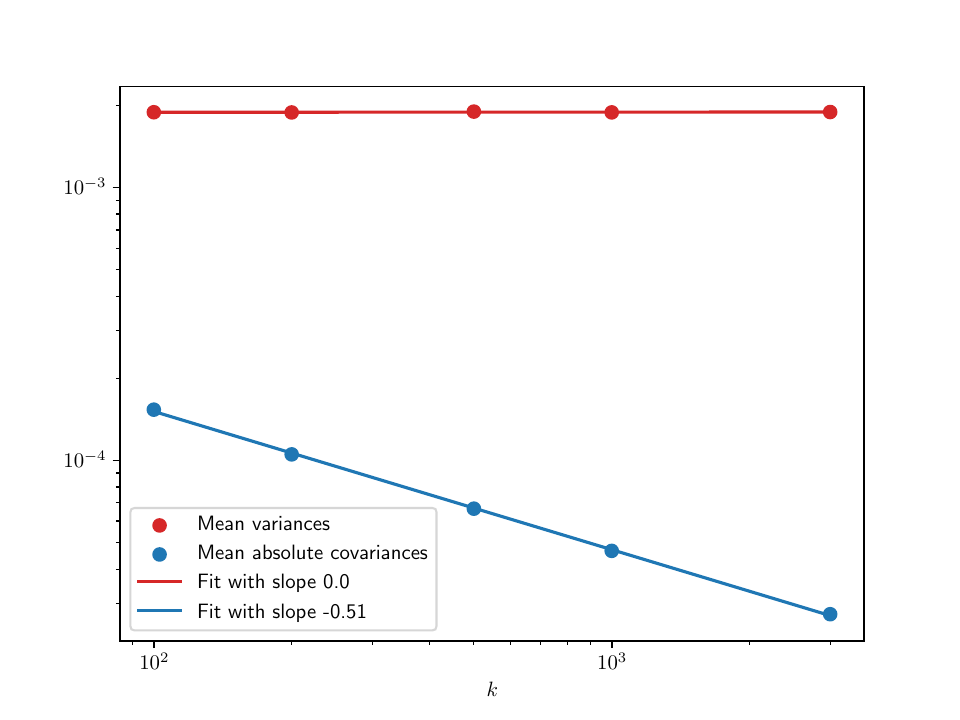}
        \caption{}
        \label{fig:k_scaling}
    \end{subfigure}
    \hfill
    \begin{subfigure}[b]{0.49\textwidth}
        \centering
        \includegraphics[width=\textwidth]{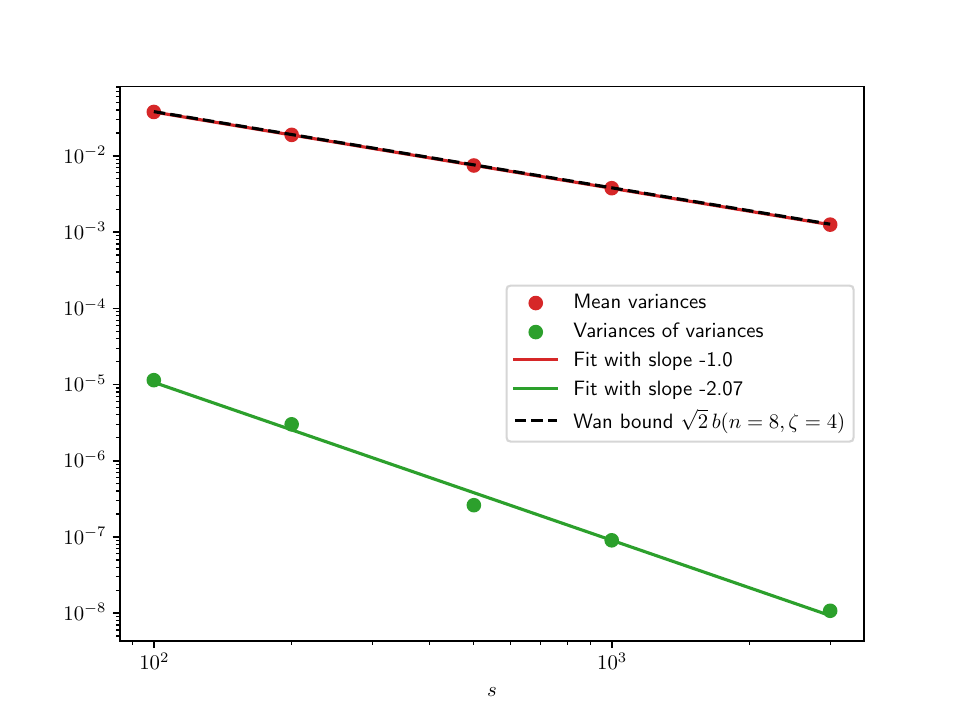}
        \caption{}
        \label{fig:s_scaling}
    \end{subfigure}
    \caption{
    Panel~(\subref{fig:k_scaling}) shows the mean of the estimated variances and the mean of the absolute values of the covariances of all basis state overlaps as a function of the number $k$ of shadows with each $s=2000$ shots that were used in the bootstrapping.
    Figure~\ref{fig:overlap_convariance_matrix} shows the estimated covariance matrix at $k=1000$ where the variances are well converged and covariances are about one order of magnitude smaller.
    Them decaying further with $k$, suggests that the actual covariances are zero and the residual values are an artifact of the finite $k$ used in their estimation.
    Panel~(\subref{fig:s_scaling}) shows the variances and the variance of their distribution (``variance of variances'') as a function of $s$ and the bound from Ref.~\citenum{wan2022matchgate}.
    }
    \label{fig:variance_of_overlaps_statistics}
\end{figure}
We find solid evidence that the overlaps with different computational basis states are approximately uncorrelated and normal distributed (Figure~\ref{fig:overlap_statistics}).
The covariance matrix is diagonally dominated with no visible structure at all in the off-diagonal entries, and the the scatter plots display no correlation.
As $s$ and $k$ grow, the bootstrapped variances increasingly concentrate around their mean and bootstrapped covariances decrease faster than the variances (Figure~\ref{fig:overlap_convariance_matrix}).

From eq (43) in Ref.~\citenum{wan2022matchgate} one obtains the bound $\sigma^2 \leq 4 \, b(n, \zeta) / s$, where the factor of $4$ is due to the additional factor of $2$ added to step $4.1.ii$ of Algorithm 1 in the \href{https://arxiv.org/pdf/2207.13723v3.pdf}{third arXiv version}, which accounts for the fact that (in the notation of Ref.~\citenum{wan2022matchgate}) $\langle \rho | (|\mathbf{0} \rangle\langle \varphi|) |\rho\rangle =  \langle \varphi | \psi \rangle/2$ and $b(\zeta, n)$ is a bound on the variance of the left hand side.
We instead find numerically that the mean variance $\bar\sigma^2$ (decaying as $1/s$, as expected) is in nearly perfect agreement with the slightly tighter bound
\begin{equation} \label{eq:shadow_overlap_error}
    \bar\sigma^2 \lessapprox \sqrt{2} \, b(n, \zeta) / s.
\end{equation}
Evaluating the bound $b(n, \zeta)$ in a numerically stable way becomes costly for large $n$ and $\zeta$.
Fortunately we find that in the half-filling case $\zeta = n/2$ we have the simple boud
\begin{equation}
    b(n, n/2) \lessapprox \sqrt{n},
\end{equation}
as can be seen from the power law fit in Figure~\ref{fig:bound_sqrt_approx}.
\begin{figure}[ht]
\centering
\includegraphics[width=0.5\textwidth]{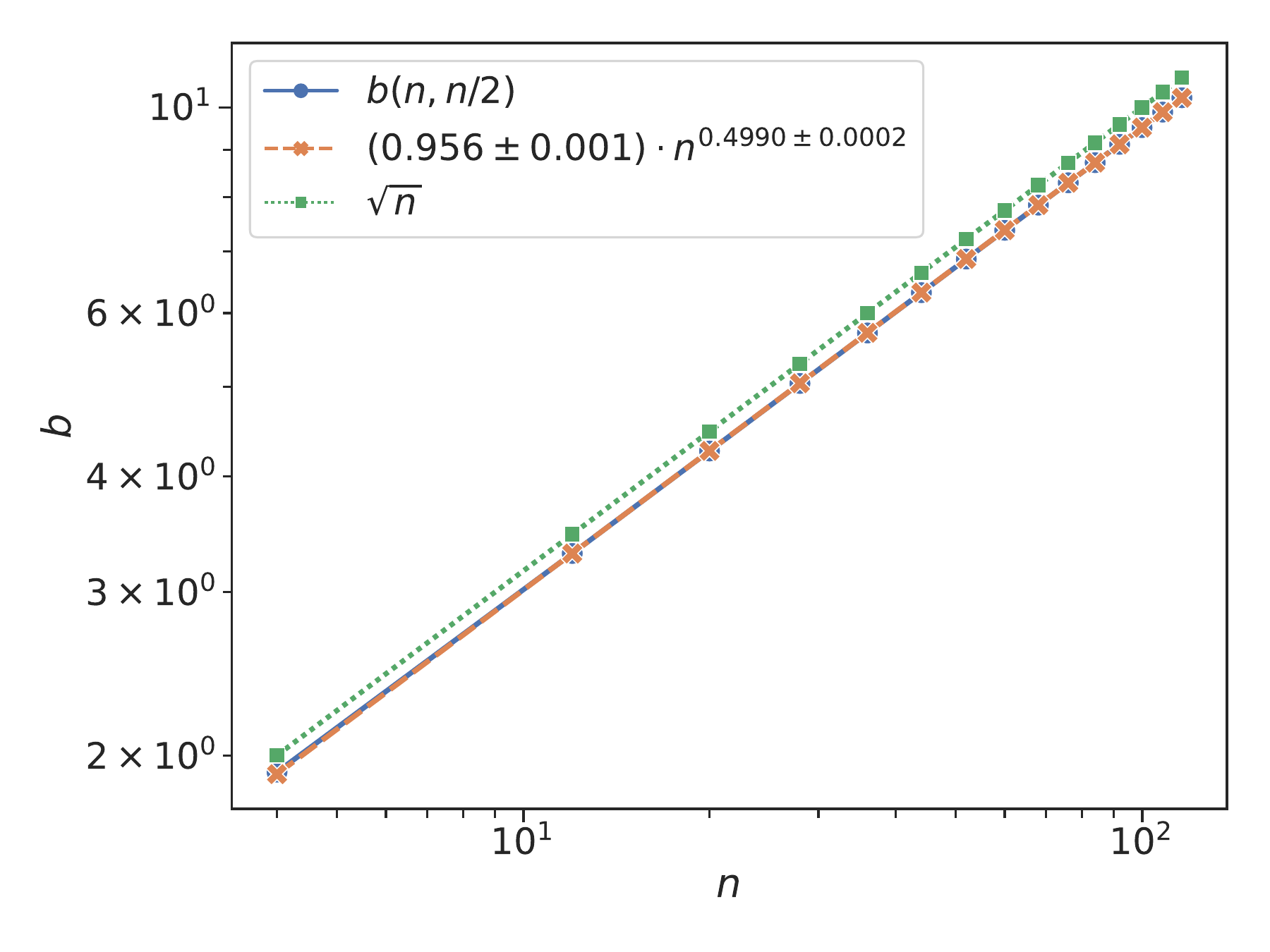}
\caption{Power law fit for $b(n, n/2)$ (eq \eqref{eq:shadow_overlap_error}), i.e., half-filling, including $\sqrt{n}$ as an upper bound for efficient numerical evaluation.
} \label{fig:bound_sqrt_approx}
\end{figure}

We have seen qualitatively identical statistical properties for other numbers of qubits and states.
This motivates the use of a Gaussian additive noise model when investigating the stability of TCCSD computations against imperfections in the input overlaps in the next section.

\subsection{TCCSD and ec-CC under Gaussian Noise} \label{sec:tcc_robustness_to_noise}
\begin{table}[ht]
\centering
\caption{Summary of active space sizes and required overlaps
$d$ used in Figure~\ref{fig:tcc_error_under_noise}.\textsuperscript{a)}} \label{tab:active_space_sigma_sampling}
\begin{tabular}{crr}
\toprule
CAS($\zeta$, $n/2$) & $d$ (TCCSD) & $d$ (ec-CC) \\
\midrule
(2, 2) & 4 & 4 \\
(4, 4) & 27 & 36 \\
(4, 6) & 93 & 225 \\
\color{red}(6, 6) & 118 & 381 \\
\color{red}(4, 8) & 199 & 784 \\
\color{red}(6, 8) & 316 & 2436 \\
\color{red}(8, 8) & 361 & 3355 \\
\color{red}(10, 10) & 876 & 21126 \\
\color{red}(12, 10) & 805 & 17255 \\
\color{red}(12, 12) & 1819 & 98694 \\
(10, 14) & 2836 & 243376 \\
(6, 16) & 2068 & 97956 \\
(8, 16) & 3193 & 285255 \\
(10, 16) & 4236 & 555336 \\
(12, 16) & 5071 & 840796 \\
\bottomrule
\end{tabular}
\begin{flushleft}
\textsuperscript{a)} Only active spaces highlighted in {\color{red}{red}} were used in the case of ec-CC.
\end{flushleft}
\end{table}
With the additive Gaussian noise model at hand, we next investigate how the energies
of the hybrid TCCSD and ec-CC method are influenced by the noise strength $\sigma$.
To this end, classical CASCI calculations with increasing active space sizes were run, the CI amplitudes were extracted and Gaussian noise (with standard deviation $\sigma$ and mean equal to zero) was added to the amplitude vectors. Subsequently, a TCCSD/ec-CC energy calculation was run for each realization of the additive noise. This procedure was conducted for $\sigma=10^{-10},\ldots,10^{-1}$ and the used active spaces are shown in Table~\ref{tab:active_space_sigma_sampling}. For each $\sigma$ and active space, the ``noisy'' TCCSD/ec-CC energy was sampled 100 times to obtain a well converged estimator for the energy error due to noise.
The data were generated using the \ce{N2} molecule
with a bond distance of $1.09$\AA~ from restricted HF/cc-pVDZ\cite{dunning1989a} reference orbitals, artificially
constructing the AS around the HOMO-LUMO gap.
Note that this procedure would be computationally prohibitive if the overlaps were actually
computed using a classical simulation of the matchgate shadow protocol.
\begin{figure}[ht]
    \centering
    \begin{subfigure}[b]{0.48\textwidth}
        \centering
        \includegraphics[width=\textwidth]{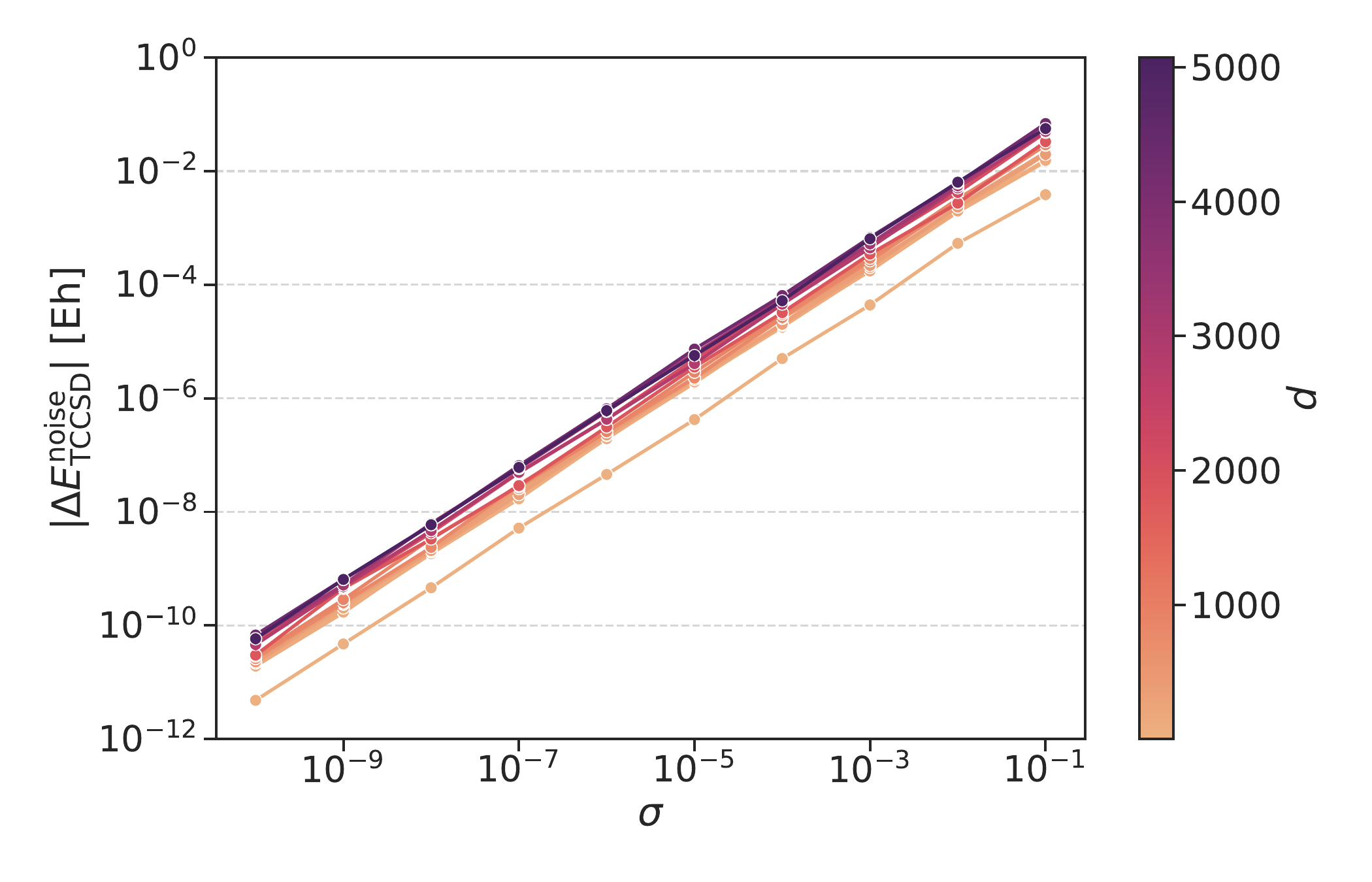}
        \caption{}
    \end{subfigure}
    \hfill
    \begin{subfigure}[b]{0.48\textwidth}
        \centering
        \includegraphics[width=\textwidth]{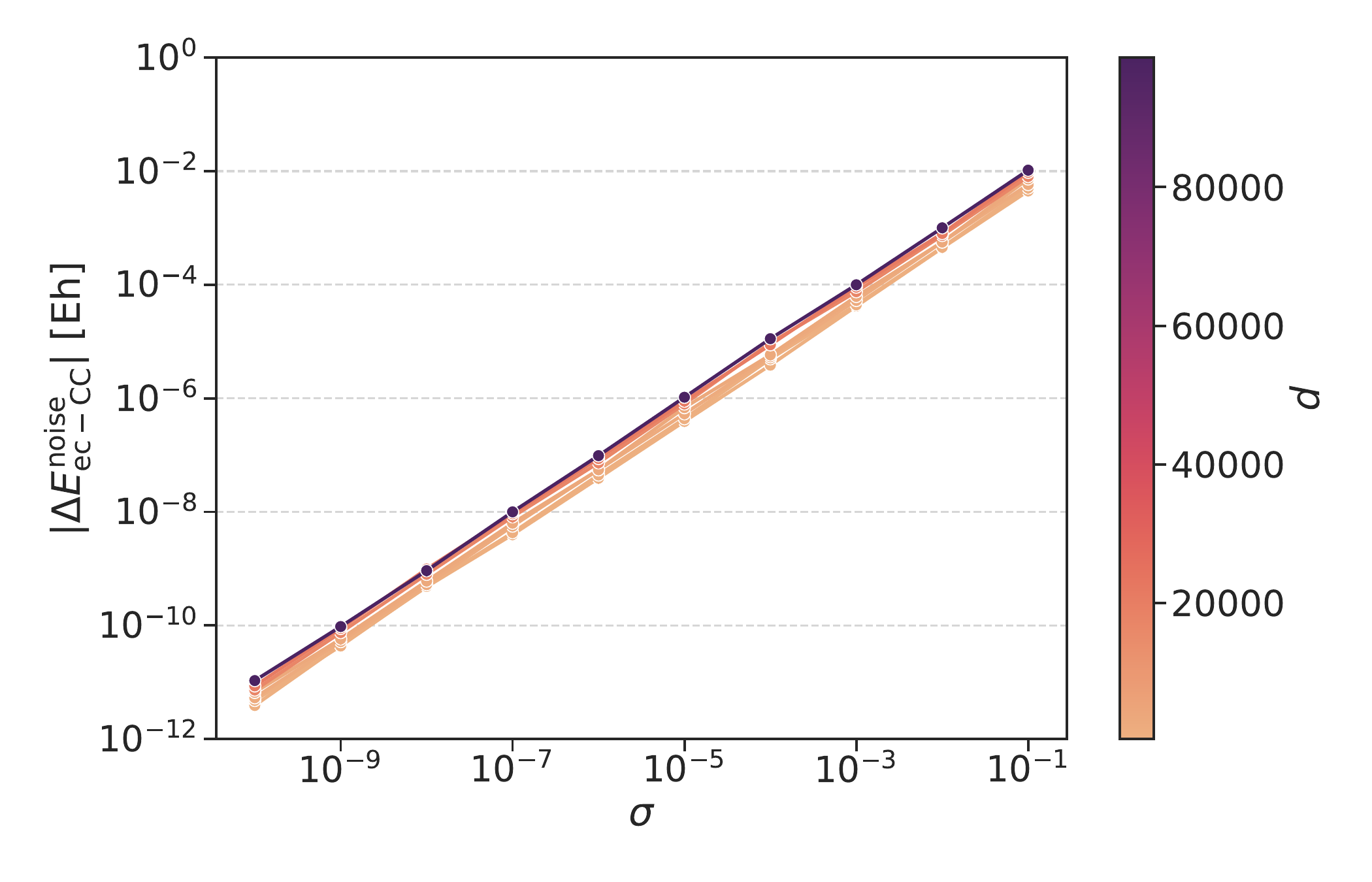}
        \caption{}
    \end{subfigure}
    \caption{Absolute TCCSD energy error (a) and absolute ec-CC energy error (b) as a function of the noise strength $\sigma$ and the number of required overlaps $d$ for each method.
    Note that the vertical axis intercept depends on the molecular system.}
    \label{fig:tcc_error_under_noise}
\end{figure}
Depending on the active space size and hybrid CC method, the required number of overlaps $d$ to run
the final energy calculation varies (see Table~\ref{tab:active_space_sigma_sampling}).
The dependence of the absolute TCCSD and ec-CC energy errors, \detcc~ and \decc, on $\sigma$ and $d$ is shown in Figure~\ref{fig:tcc_error_under_noise}.
A power law fit of the form $|\Delta E| = a \sigma^{\beta_\sigma}$ yielded an exponent
of $\beta_\sigma = 0.9998 \pm 0.0016$ and $\beta_\sigma = 1.001 \pm 0.002$ for TCCSD and ec-CC, respectively. Thus, the absolute energy error of both methods scales approximately linearly
with the noise strength $\sigma$. Furthermore, no convergence issues of the classical CC procedures were observed, even for large noise strengths. Consequently,
TCCSD and ec-CC when run with noisy input amplitudes, as is expected if they are
obtained from actual quantum measurements, are very robust to the strength of the underlying noise
due to the linear dependency on $\sigma$. This is a rather beneficial behavior of the hybrid methods,
given that a large part of the amplitudes are usually rather small and might be even sign-flipped under
strong noise. The exponent $\beta_\sigma$ remains approximately one for other molecular systems as well (data not shown). Note that, however, the prefactor $a$ heavily depends on the system at hand
and is irrelevant at this stage of the analysis.
Another interesting observation from Figure~\ref{fig:tcc_error_under_noise} is that the absolute
errors of both methods are comparable even though ec-CC requires many more overlaps to be evaluated (see Table~\ref{tab:active_space_sigma_sampling}).
This behavior might be biased by the choice of molecular test system, i.e., if the magnitude of the T3 and T4 amplitudes (extracted from CASCI in this case) is small,
the contribution to the projection equations is only mildly influenced by noise. In the TCCSD case, however,
the lower-order frozen T1 and T2 amplitudes are larger in magnitude even for smaller systems, and
these amplitudes directly enter the TCCSD energy expression, contrary to ec-CC.

\subsection{TCCSD Energy Error Extrapolation} \label{apx:tccsd_energy_error_extrapolation}
With the variance bound for overlaps measured through matchgate shadows, it becomes possible to determine the shot budget given an active space size
to obtain the overlaps to a certain accuracy. As we observerd above, the TCCSD energy error under noise depends linearly on the 
standard deviation of the underlying Gaussian distribution. 
This dependence is, however, not sufficient to estimate the required variance for other molecular systems, since
many properties of the system enter the TCCSD calculation and have an influence on the final TCCSD energy error.
Another problem is that some properties determining this error cannot be directly extracted just from the molecular Hamiltonian, i.e., without
solving the underlying electronic structure problem, which becomes computationally prohibitive for large active spaces, for example.
We thus seek a simple extrapolation model for noisy TCCSD energies, taking as input easy to obtain parameters of a molecular system.
Even though we outlined some noise analysis on ec-CC in the section above, we are going to focus exclusively on TCCSD in the following due to
its favorable scaling and small number of overlaps required.
For the error extrapolation model, which will serve as an input for shot budget estimation, we created a small data set containing
13 small molecules for which we first ran CASCI with a) different active space sizes (see Table~\ref{tab:cas_list_dataset}) and b) different basis sets (STO-3G, 6-31G, 6-31G**, cc-pVDZ, aug-cc-pVDZ).\cite{hehre1969a,hehre1970a,ditchfield1971a,francl1982a,gordon1982a,hehre1972a,dunning1989a,kendall1992a,woon1993a}
From the resulting CASCI states, we extracted the required overlaps and added Gaussian noise with $\sigma = 10^{-3}$, which were then put into a TCCSD energy calculation.
\begin{table}[h]
    \centering
    \caption{Active spaces used in the test data set for TCCSD energy error extrapolation, including the resulting number of required overlaps $d$ as input for TCCSD.}
    \begin{tabular}{lllr}
    \toprule
    $n_{\rm sp}$ & $\zeta_\alpha$ & $\zeta_\beta$ &  $d$ \\
    \midrule
    6 & 3 & 3 & 118 \\
    \cline{1-4} \cline{2-4}
    \multirow[t]{3}{*}{8} & 2 & 2 & 199 \\
    \cline{2-4}
     & 3 & 3 & 316 \\
    \cline{2-4}
     & 4 & 4 & 361 \\
    \cline{1-4} \cline{2-4}
    \multirow[t]{2}{*}{10} & 5 & 5 & 876 \\
    \cline{2-4}
     & 6 & 6 & 805 \\
    \cline{1-4} \cline{2-4}
    12 & 6 & 6 & 1819 \\
    \cline{1-4} \cline{2-4}
    14 & 5 & 5 & 2836 \\
    \cline{1-4} \cline{2-4}
    \multirow[t]{5}{*}{16} & 3 & 3 & 2068 \\
    \cline{2-4}
     & 4 & 4 & 3193 \\
    \cline{2-4}
     & 5 & 5 & 4236 \\
    \cline{2-4}
     & 6 & 6 & 5071 \\
    \cline{2-4}
     & 8 & 8 & 5793 \\
    \bottomrule
    \end{tabular}
    \label{tab:cas_list_dataset}
\end{table}
For each molecule/basis set/active space combination, the noise was sampled 30 times, such that an averaged \detcc~ could be obtained for each combination.
This amounts to approximately 20,000 single-point TCCSD calculations, therefore, we only ran the data collection with TCCSD on a small number of molecules.
The molecular geometries and the sampled noisy TCCSD energies, including
code to produce the respective plots can be found on Zenodo.\cite{zenododata}
The settings thus vary a) the required number of overlaps $d$ (eq \eqref{eq:explicit_num_overlaps}) and b) the number of total spin orbitals $N$, which are easy to obtain for any molecule of interest.
For completeness, the number of non-redundant excitation amplitudes for a spin block $\sigma \in \{\alpha, \beta\}$ in $n_\mathrm{sp}$ spatial orbitals and $\zeta_\sigma$ electrons with excitation level $\nu$ is given by
\begin{align}
    d(n_\mathrm{sp}, \zeta_\sigma, \nu) = \frac{1}{(\nu!)^2} \prod_{i = 0}^{\nu - 1} (\zeta_\sigma - i)(n_\mathrm{sp} - \zeta_\sigma - i).
\end{align}
For TCCSD, we need the overlap with $\hfdet$, all non-redundant single excitations, i.e., $\alpha$ and $\beta$ single excitations, for double excitations, we have $\alpha\alpha$, $\beta\beta$, and $\alpha\beta$ excitations, where the latter one corresponds to a single $\alpha$ excitation coupled with a single $\beta$ excitation.
The total number of overlaps for TCCSD thus amounts to
\begin{align}
d \equiv 1 +
d(n_\mathrm{sp}, \zeta_\alpha, 1) + d(n_\mathrm{sp}, \zeta_\beta, 1)
+ d(n_\mathrm{sp}, \zeta_\alpha, 2) + d(n_\mathrm{sp}, \zeta_\beta, 2)
+ d(n_\mathrm{sp}, \zeta_\alpha, 1)d(n_\mathrm{sp}, \zeta_\beta, 1). \label{eq:explicit_num_overlaps}
\end{align}

One ``variable'' contributing to \detcc, which is highly system-specific, is the ``amount'' of static or dynamic correlation in the system. For example, a weakly
correlated molecule, where the dynamic correlation energy is dominant, is only mildly affected by noisy input amplitudes for TCCSD, whereas the opposite might be the case
for strongly correlated systems. There is, however, no single quantity that could serve to model this properly.
With our choice of molecules, we try to cover several different correlation scenarios, e.g., diatomic molecules in equilibrium, stretched diatomics, and some small
organic molecules.
All things considered, after several attempts to build a good error model, we came up with the following power law,
\begin{align}
    \detcc = a d^\beta N^\gamma \sigma. \label{eq:tcc_error_powerlaw_global}
\end{align}
In the randomized noise simulations above, we fixed the noise strength, yielding the unknown parameters on the right-hand side as
\begin{align}
    \frac{\detcc}{\sigma}= a d^\beta N^\gamma.
\end{align}
We fitted the above power law to the recorded data over all molecules, basis sets, active space sizes,
and noise samples, yielding the exponent parameters shown in Table~\ref{tab:extrapolation_results}
together with standard errors determined through bootstrapping (50,000 bootstrap samples).
\begin{table}[ht]
    \centering
    \caption{Parameters and errors obtained from power law fit over the entire TCCSD noise error data set.}
    \label{tab:extrapolation_results}
    \begin{tabular}{cSS}
        \toprule
       Parameter & {Value} & {Bootstrap Std. Error} \\
       \midrule
        $\beta$ & 0.277 & 0.054\\
        $\gamma$ & -1.074 & 0.116 \\
        \bottomrule
    \end{tabular}
\end{table}
The prefactor $a$ carries the dependencies on omitted variables (e.g., the correlation strength, etc.),
thus, it is not advisable to fit this parameter globally, i.e., for all molecules at once.
Through running the power law fit on subsets of the sample (leave-one-out, etc.), we could confirm that
the exponents seem to be largely independent of the molecular system (data not shown), indicating
that the fit parameters describe the trend of the TCCSD error through noise quite well in the simple
power law. In the subset power law fits, we found that the prefactor varies largely depending on the
molecular systems. For this reason, with the globally fitted exponents $\beta$ and $\gamma$ at hand,
we re-ran the fit for $a$ for each molecule, keeping the exponents fixed at the fitted values
from the entire data set. This yielded a prefactor $a$ for each molecule, shown in Table~\ref{tab:prefactor_per_molecule}
and plotted in Figure~\ref{fig:prefactors_molecules}.
\begin{table}[ht]
    \centering
    \caption{
    Prefactor $a$ obtained in a per-molecule fit of the TCCSD noise error data with exponents fixed at values from Table~\ref{tab:extrapolation_results},
    together with the $\mathcal{T}_1$ and $\mathcal{D}_1$ values obtained from plain CCSD averaged over basis sets.
    }
    \label{tab:prefactor_per_molecule}
    \begin{tabular}{cS[round-mode=places,round-precision=2]S[round-mode=places,round-precision=3]S[round-mode=places,round-precision=3]}
    \toprule
    Molecule & {$a$} & {$\mathcal{T}_1$} & {$\mathcal{D}_1$} \\
    \midrule
    N$_2$ (stretched) & 18.669051 & 0.049073 & 0.106929 \\
    F$_2$ (stretched) & 6.476190 & 0.022695 & 0.088622 \\
    p-Benzyne & 5.650327 & 0.014870 & 0.061429 \\
    Cl$_2$ (stretched) & 4.832977 & 0.014567 & 0.081032 \\
    F$_2$ & 4.270402 & 0.009680 & 0.027547 \\
    N$_2$ & 3.203512 & 0.010011 & 0.023822 \\
    H$_2$O & 3.013966 & 0.007588 & 0.017423 \\
    Me-NCO & 3.001943 & 0.014220 & 0.052977 \\
    Benzene & 2.979277 & 0.008315 & 0.027102 \\
    Formaldehyde & 2.597973 & 0.013824 & 0.044913 \\
    Acetaldehyde & 2.218479 & 0.012532 & 0.048172 \\
    Furan & 2.557929 & 0.011529 & 0.045743 \\
    Cl$_2$ & 2.407704 & 0.004583 & 0.017974 \\
    \bottomrule
    \end{tabular}
\end{table}
The raw data for \detcc~ including the power law fits is plotted in Figure~\ref{fig:powerlaw_molecules_grid}.
One can clearly see in this plot that the globally fitted exponents represent the slope and trend of basis set dependence well,
hinting that we can indeed use the error extrapolation model for active spaces and systems beyond the data set.
As expected, the TCCSD error magnitude grows with the number of required overlaps, but decreases with the size of the total MO space (negative exponent).
The latter can be rationalized as follows: when keeping the active space size (and hence the number of overlaps $d$) fixed while increasing the overall MO space,
the contribution of the AS in relation to the external space decreases, and the error made by noisy amplitude measurements becomes more and more negligible.
The remaining hurdle is, given a molecule not present in the data set, to select the prefactor $a$ for that system.
This needs to be done by comparing the ``correlation strength'' and finding the best fit in our tabulated data. Since this
is a bit cumbersome, we sought for a more tangible metric that is easy to obtain and came up with the typical CCSD diagonstic values,
which are commonly used to assess whether the CCSD wavefunction provides a reliable, true SR result or whether MR methods are needed.
We computed the $\mathcal{T}_1$ and $\mathcal{D}_2$ diagnostics\cite{lee1989diagnostic,janssen1998new} for each molecule and basis set.
As can be seen in Figure~\ref{fig:prefactors_molecules}, the trend of the molecule-specific prefactor $a$ and their corresponding diagnostic values
are in good agreement with each other. Thus, for the determination of the prefactor of a molecule outside our data set, we suggest to first compute the $\mathcal{T}_1$ diagnostic
value, because it has the strongest correlation with $a$, and to select the prefactor accordingly (see Table~\ref{tab:prefactor_per_molecule}).
The linear model to convert $\mathcal{T}_1$ to $a$ is given by
\begin{align}
a \approx 375.9 \times \mathcal{T}_1 -0.84. \label{eq:t1diag_to_a}
\end{align}

Clearly, the stretched \ce{N2} molecule is kind of an outlier because the prefactor $a$ and its diagnostic values are far off the median/mean of the data set. This, however,
shows that the scenario for \ce{N2} can serve as a ``worst case'' scenario, which will be useful in the following to estimate upper bounds for shot budgets.
In summary, our error extrapolation protocol relies on the following steps:
\begin{enumerate}
    \item Select a molecule and basis set (determining $N$)
    \item Compute CCSD in that setting and obtain the $\mathcal{T}_1$ diagnostic value
    \item Select an active space size and compute the number of required overlaps for TCCSD $d$
    \item Map the diagnostic value to a prefactor $a$ using the conversion formula in eq \eqref{eq:t1diag_to_a} (compare with Table~\ref{tab:prefactor_per_molecule})
    \item Insert all the quantities ($d$, $N$, $a$, and the exponents in Table~\ref{tab:extrapolation_results}) into eq \eqref{eq:tcc_error_powerlaw_global}
    \item Together with a noise strength $\sigma$, one can estimate the TCCSD energy error due to noise in that precise setting
\end{enumerate}
We will use this procotol in the following to estimate shot budgets for quantum TCCSD.

\begin{figure}[ht]
\centering
\includegraphics[width=1\textwidth]{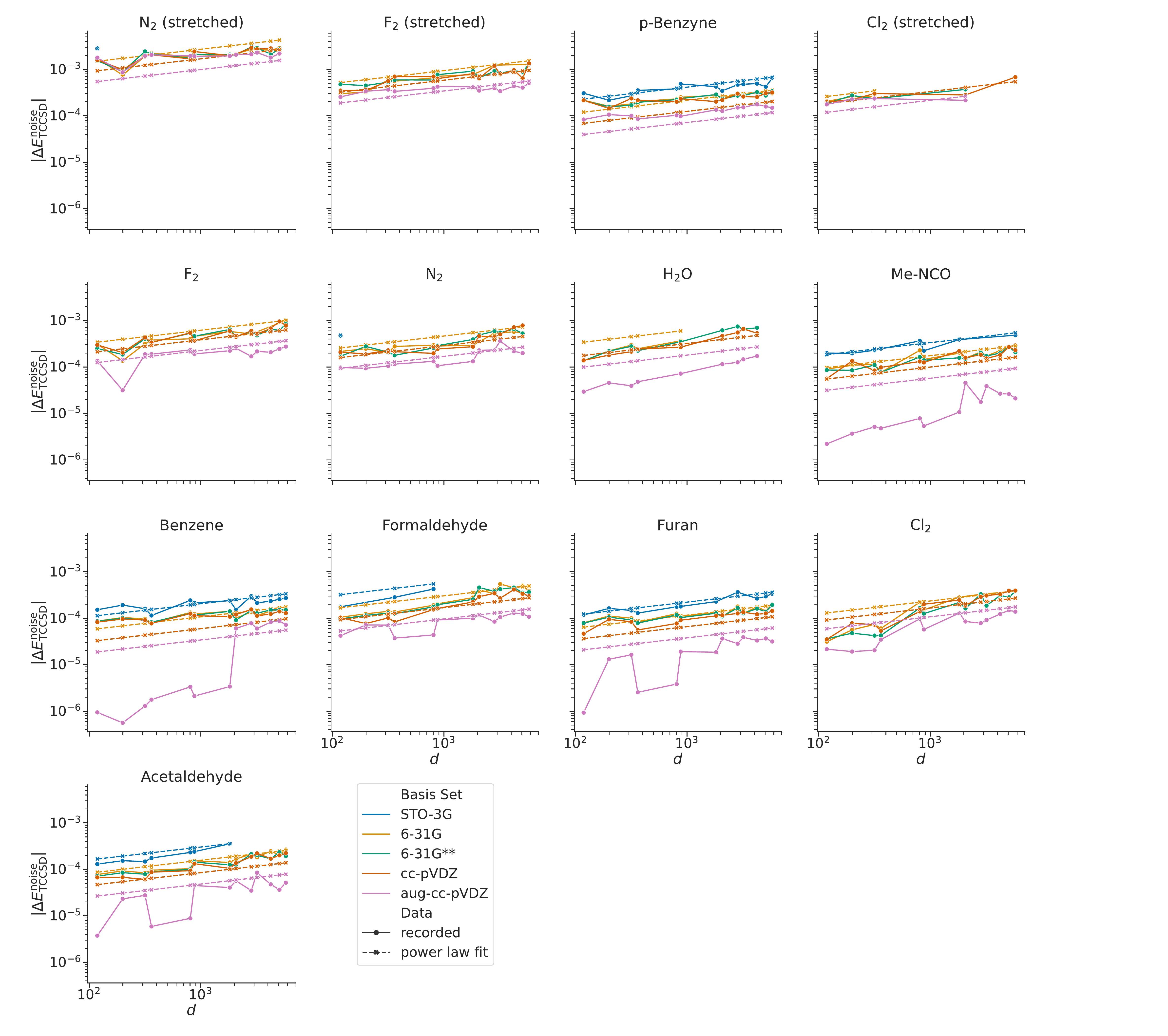}
\caption{%
Absolute TCCSD energy error \detcc~ as a function of $d$ for different molecules and basis sets with $\sigma = 10^{-3}$.
The dashed lines employ a power law fit using the globally fitted exponents of $d$ and $N$ (Table~\ref{tab:extrapolation_results}) together with the molecule-specific prefactor $a$ (Table~\ref{tab:prefactor_per_molecule}).
} \label{fig:powerlaw_molecules_grid}
\end{figure}

\subsection{Shot Budget Estimation}
Combining eq \eqref{eq:shadow_overlap_error} with eq \eqref{eq:tcc_error_powerlaw_global} we can obtain a bound on the number of shots needed for guarantee that the shot noise does not change the TCCSD energy by more than a given magnitude.
Finally, one can estimate the shot budget for the half-filling case as
\begin{align}
s \lessapprox \frac{a^2}{|\Delta E|^2} d^{2\beta} N^{2\gamma} \sqrt{2n}
\end{align}
for a given target TCCSD energy error $|\Delta E|$ and the system-dependent variables, obtained through the protocol in the previous section.
\begin{figure}[ht]
\centering
\includegraphics[width=0.6\textwidth]{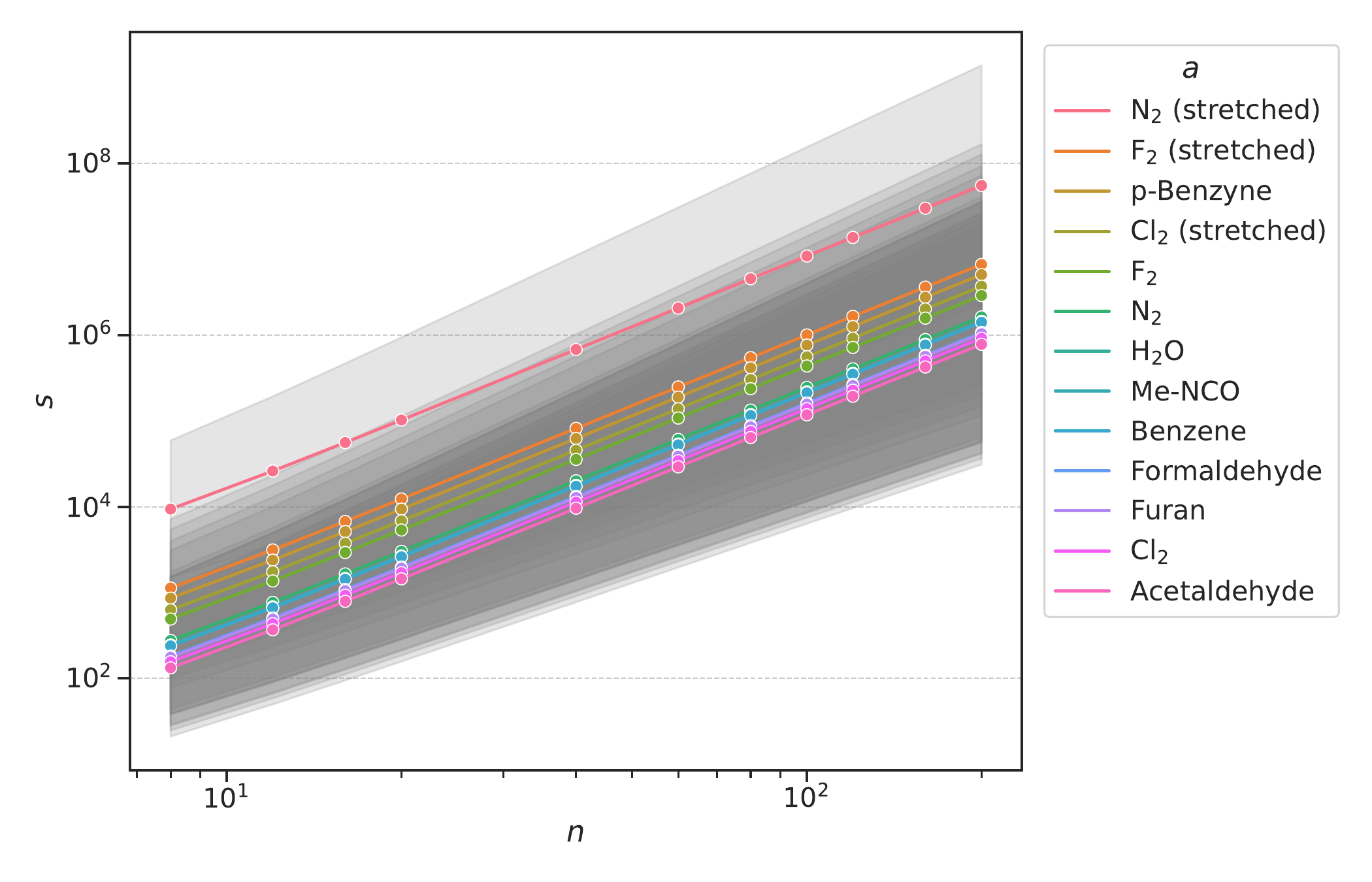}
\caption{Shot budget estimation for ``scenarios'' based on molecular systems (prefactor $a$), with globally fitted exponents ($\pm$ standard deviation
of the exponents from global fit shown as gray shaded areas), target energy error $|\Delta E| = 10^{-3}~\mathrm{Eh}$, $N = 600$, $\zeta = n/2$.
Note that the molecules here are just labels for different prefactors $a$. This is done to cover different possible scenarios of the
prefactor, i.e., figure out best/worst case scenarios for shot budgets based on TCCSD energy error extrapolation.
} \label{fig:shotbudget_all}
\end{figure}
Using the above equation, we estimated the shot budgets for the scenarios dictated by the molecule prefactors aiming for a target accuracy $|\Delta E| = 10^{-3}$ Eh with an arbitrarily chosen
MO space of $N=600$ and half-filling active spaces from $n=4$ to $n=200$ qubits. Uncertainties in the fitted exponents were included by estimating a lower bound (exponent minus bootstrap standard deviation)
and an upper bound (exponent plus bootstrap standard deviation) of the shot budget. The resulting shot budget estimates $s$ are plotted in Figure~\ref{fig:shotbudget_all}.
Surprisingly, even in the worst case scenario (strong correlation as in a stretched \ce{N2} molecule), our estimated upper bound for the shot budget to obtain the (noisy) TCCSD energy to chemical
precision barely exceeds $10^8$ shots in a 200 qubit setting.
We want to stress again that the shots budgets do not include measurements for state preparation/optimization on the quantum device, just the overlap measurements through matchgate shadows.
If there were imperfections in the trial state, i.e., the shot noise is not applied to the \emph{exact} overlaps, the error of the quantum TCCSD with respect to its (exact) classical counter part
would be much larger than what our error model predicts. Nonetheless, the broad window of different correlation scenarios covered in here should be helpful enough to get a decent understanding
on the order of magnitude of the required shot budget.
\begin{table}[h]
    \centering
    \caption{Shot budgets $s$ estimated using the error extrapolation model for the \ce{N2}/cc-pVDZ dissociation curve
    needed for obtaining a TCCSD energy in a (6, 6) active space to chemical precision, together with the $\mathcal{T}_{1}$ diagnostic
    extracted from CCSD.
    }
    \begin{tabular}{S[round-mode=places,round-precision=1]S[round-mode=places,round-precision=3]S[round-mode=places,round-precision=1]}
    \toprule
    $R$~[\AA] & $\mathcal{T}_{1}$ & $s$ \\
    \midrule
    0.8                      & 0.003971 & 5.231000000000000000e3 \\
    0.9                      & 0.005565 & 1.911300000000000000e+04 \\
    1.000000000000000000e+00 & 0.007615 & 4.974400000000000000e+04 \\
    1.100000000000000089e+00 & 0.009951 & 1.021520000000000000e+05 \\
    1.199999999999999956e+00 & 0.012471 & 1.796600000000000000e+05 \\
    1.299999999999999822e+00 & 0.015096 & 2.834590000000000000e+05 \\
    1.399999999999999911e+00 & 0.017743 & 4.119690000000000000e+05 \\
    1.500000000000000000e+00 & 0.020334 & 5.610530000000000000e+05 \\
    1.599999999999999867e+00 & 0.022822 & 7.257700000000000000e+05 \\
    1.699999999999999734e+00 & 0.025201 & 9.030200000000000000e+05 \\
    1.799999999999999822e+00 & 0.027514 & 1.093981000000000000e+06 \\
    1.899999999999999689e+00 & 0.029856 & 1.305941000000000000e+06 \\
    1.999999999999999778e+00 & 0.032356 & 1.552971000000000000e+06 \\
    2.099999999999999645e+00 & 0.035100 & 1.848706000000000000e+06 \\
    2.199999999999999734e+00 & 0.037925 & 2.179969000000000000e+06 \\
    2.299999999999999822e+00 & 0.040453 & 2.499711000000000000e+06 \\
    2.399999999999999467e+00 & 0.042473 & 2.770777000000000000e+06 \\
    2.500000000000000000e+00 & 0.043984 & 2.982769000000000000e+06 \\
    2.599999999999999645e+00 & 0.045062 & 3.138644000000000000e+06 \\
    2.699999999999999289e+00 & 0.045782 & 3.245054000000000000e+06 \\
    2.799999999999999822e+00 & 0.046208 & 3.308901000000000000e+06 \\
    \midrule
    & $\sum s$ &  2.9e7 \\
    \bottomrule
    \end{tabular}
    \label{tab:shot_budgets_n2}
\end{table}

\clearpage
\section{ec-CC on Hardware Data From Google's Sycamore Quantum Processor} \label{apx:eccc_experiment}
\begin{figure}[h]
\includegraphics[width=0.6\textwidth]{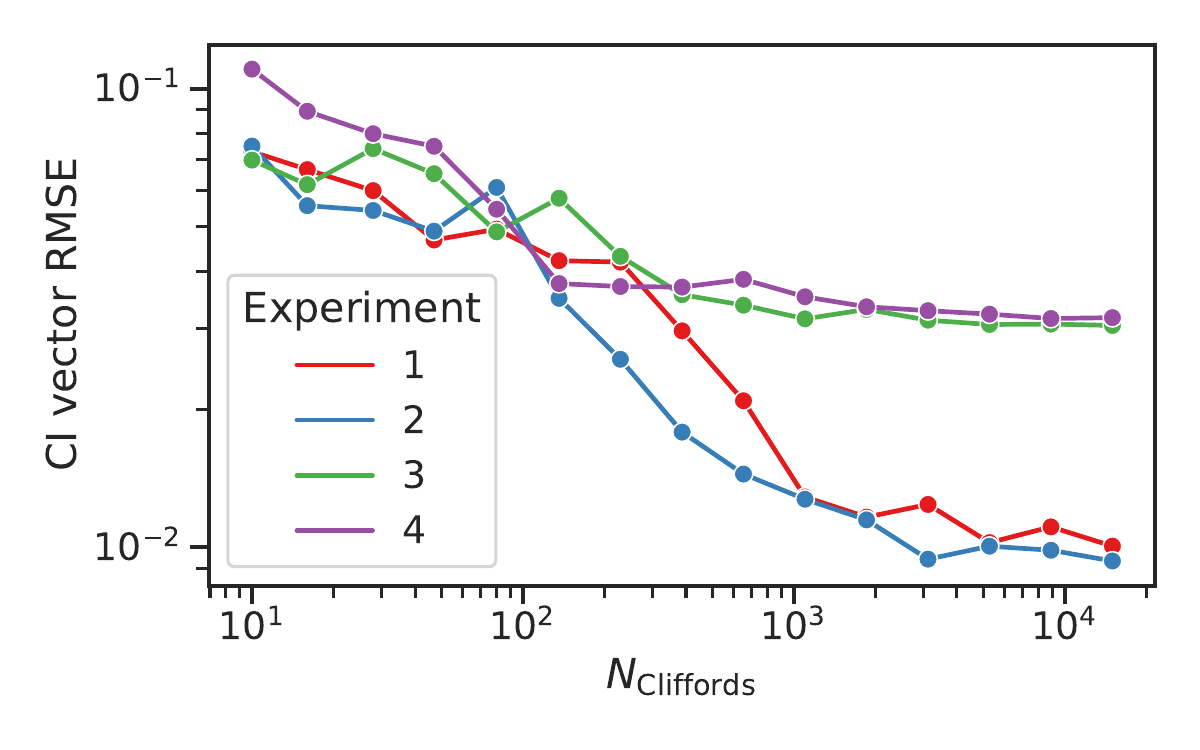}
\caption{
Root mean squared error (RMSE) of the absolute values of the wavefunction coefficients (CI vector) extracted from the trial state for four different repetitions.
The device noise in the first two repeats is far smaller than in the last two repeats.
} \label{fig:h4_sto3g_rmse}
\end{figure}

To investigate the performance of ec-CC on actual noisy data, we
used the wavefunctions extracted from Clifford shadows published in Ref.~\citenum{huggins_unbiasing_2022}.
The raw data can be obtained from Zenodo.\cite{malone_qcqmc_data}
The trial state wavefunctions (referred to as ``Q trial'' in the original paper) were first ``converted'' to real-valued wavefunctions through rotation
by the phase of the largest absolute value wavefunction coefficient (i.e., making the wavefunction as ``real'' as possible), then retaining only the real part of the wavefunction and renormalizing it. This state is referred to as ``Q trial real''.
Since the quality of the trial wavefunctions extracted from the device in the first two repetitions is
superior to the other two runs (see Figure~\ref{fig:h4_sto3g_rmse}), we only use the data from these runs.

\begin{table}[h]
\caption{H4/STO-3G energies from Q trial state converted to real-valued wavefunction.
Absolute errors in mEh with respect to FCI are shown in parentheses.} \label{tab:h4_sto3g_fqe_real}
\begin{tabular}{cll}
\toprule
$N_\mathrm{Cliffords}$ & Repeat 1 & Repeat 2\\
\midrule
10 & $-1.82121~(148.30)$ & $-1.80389~(165.63)$ \\
16 & $-1.85039~(119.13)$ & $-1.85554~(113.97)$ \\
28 & $-1.92465~(44.86)$ & $-1.88711~(82.40)$ \\
47 & $-1.94073~(28.78)$ & $-1.91560~(53.91)$ \\
80 & $-1.92072~(48.79)$ & $-1.89485~(74.66)$ \\
136 & $-1.94335~(26.16)$ & $-1.93524~(34.27)$ \\
229 & $-1.95366~(15.85)$ & $-1.95132~(18.19)$ \\
387 & $-1.95760~(11.91)$ & $-1.95954~(9.97)$ \\
652 & $-1.96250~(7.01)$ & $-1.96332~(6.20)$ \\
1100 & $-1.96616~(3.35)$ & $-1.96553~(3.98)$ \\
1856 & $-1.96634~(3.17)$ & $-1.96650~(3.02)$ \\
3129 & $-1.96549~(4.02)$ & $-1.96683~(2.69)$ \\
5276 & $-1.96604~(3.47)$ & $-1.96634~(3.18)$ \\
8896 & $-1.96540~(4.12)$ & $-1.96651~(3.00)$ \\
15000 & $-1.96562~(3.90)$ & $-1.96666~(2.85)$ \\
\bottomrule
\end{tabular}

\end{table}

\begin{table}[h]
\caption{H4/STO-3G ec-CC energies based on Q trial state converted to real-valued wavefunction.
Absolute errors in mEh with respect to FCI are shown in parentheses.
} \label{tab:h4_sto3g_ec}
\begin{tabular}{cll}
\toprule
$N_\mathrm{Cliffords}$ & Repeat 1 & Repeat 2\\
\midrule
10 & $-1.97656~(7.05)$ & $-1.97656~(7.05)$ \\
16 & $-1.97656~(7.05)$ & $-1.97656~(7.05)$ \\
28 & $-1.97637~(6.86)$ & $-1.97656~(7.05)$ \\
47 & $-1.97637~(6.86)$ & $-1.97656~(7.05)$ \\
80 & $-1.97656~(7.05)$ & $-1.97656~(7.05)$ \\
136 & $-1.97637~(6.86)$ & $-1.97656~(7.05)$ \\
229 & $-1.97637~(6.86)$ & $-1.96732~(2.19)$ \\
387 & $-1.96684~(2.67)$ & $-1.96846~(1.05)$ \\
652 & $-1.96779~(1.73)$ & $-1.96858~(0.93)$ \\
1100 & $-1.96928~(0.23)$ & $-1.96881~(0.70)$ \\
1856 & $-1.96973~(0.22)$ & $-1.96989~(0.38)$ \\
3129 & $-1.96929~(0.22)$ & $-1.97074~(1.23)$ \\
5276 & $-1.96961~(0.10)$ & $-1.97026~(0.75)$ \\
8896 & $-1.96893~(0.58)$ & $-1.97023~(0.72)$ \\
15000 & $-1.96891~(0.60)$ & $-1.97028~(0.76)$ \\
\bottomrule
\end{tabular}

\end{table}

\begin{table}[h]
\caption{H4/STO-3G ec-CC energies based on the exact FCI state with overlaps measured through
matchgate shadows. The results are averaged over 30 samples of the matchgate shadow.
Absolute errors in mEh with respect to FCI are shown in parentheses.
} \label{tab:h4_sto3g_ec_matchgate}
\begin{tabular}{cll}
\toprule
Shots & multi Matchgate & single Matchgate \\
\midrule
10000 & $-1.98029~(10.89)$ & $-1.96906~(3.32)$ \\
20000 & $-1.97662~(7.62)$ & $-1.96990~(2.33)$ \\
50000 & $-1.97160~(4.84)$ & $-1.96927~(1.35)$ \\
100000 & $-1.96922~(2.64)$ & $-1.96933~(0.88)$ \\
200000 & $-1.96965~(1.88)$ & $-1.96945~(0.55)$ \\
500000 & $-1.96937~(1.07)$ & $-1.96945~(0.39)$ \\
1000000 & $-1.96963~(0.75)$ & $-1.96949~(0.23)$ \\
\bottomrule
\end{tabular}
\end{table}

\newcolumntype{Q}{S[round-mode=places,round-precision=2]}
\begin{table}[h]
\centering
\caption{Error in the total energy (in kcal/mol) of the diamond minimal unit cell in a double-zeta basis as a function of the
lattice constant $R$.} \label{tab:diamond}
\resizebox{\textwidth}{!}{%
\begin{tabular}{lQQQQQQQQQQQ}
\toprule
{$R$}~[\AA] & {Q trial} & {Q trial real} & {ec-CC/Q trial} & {TCCSD/Q trial} & {QC-QMC} & {CAS(8,8)} & {ec-CC/CAS} & {NEVPT2} & {TCCSD/CAS} & {CCSD(T)} & {AFQMC} \\
\midrule
2.88 & 266.584750 & 174.889465 & 4.307927 & 19.339516 & 0.053966 & 51.151317 & 3.173382 & 0.679174 & 1.113588 & -0.347022 & 2.761667 \\
3.24 & 1006.438535 & 693.046334 & 6.601222 & 20.738693 & -0.854667 & 51.960262 & 4.106293 & 0.706335 & 1.040760 & -0.593070 & 3.153233 \\
3.60 & 177.295506 & 137.780469 & 6.406229 & 36.968455 & 1.166539 & 52.448060 & 5.489898 & 0.667841 & 0.716073 & -1.099351 & 4.993089 \\
3.96 & 208.043442 & 159.251081 & 10.099162 & 37.948236 & 3.458202 & 52.842058 & 7.702869 & 0.543519 & 0.173579 & -2.172873 & 8.471999 \\
4.32 & 327.486101 & 233.189706 & 15.009005 & 27.470500 & 4.436489 & 53.140043 & 11.655740 & 0.659281 & -0.180135 & -4.401393 & 16.540510 \\
\bottomrule
\end{tabular}
}
\end{table}

\clearpage

\end{document}